\def\BibTeX{{\rm B\kern-.05em{\sc i\kern-.025em b}\kern-.08em
    T\kern-.1667em\lower.7ex\hbox{E}\kern-.125emX}}
\newtheorem{question}{\textbf{Question}}
\newtheorem{theorem}{Theorem}
\newtheorem{proposition}{Proposition}
\newtheorem{lemma}{Lemma}
\newtheorem{definition}{Definition}
\begin{document}

\title{Minimizing AoI in Mobile Edge Computing: Nested Index Policy with Preemptive and Non-preemptive Structure}

        





\author{Ning~Yang,
Yibo~Liu, 
Shuo~Chen,
Meng~Zhang*,
Haijun~Zhang, ~\IEEEmembership{Fellow,~IEEE,}

\thanks{
Ning Yang is with the Institute of Automation, Chinese Academy of Sciences, Beijing, 100190, China. (e-mail: ning.yang@ia.ac.cn).

Yibo Liu is with the Institute of Automation, Chinese Academy of Sciences, Beijing, 100190, China. 

Shuo Chen is with the Institute of Automation, Chinese Academy of Sciences, Beijing, 100190, China. 

Meng Zhang is with the ZJU-UIUC Institute, Zhejiang University, Zhejiang, 314499, China. (e-mail: mengzhang@intl.zju.edu.cn).

Haijun Zhang is with the University of Science and Technology, Beijing, China.

({*}Corresponding author:  Meng Zhang)

}}

\markboth{ }%
{Shell \MakeLowercase{\textit{et al.}}: Minimizing AoI in Mobile Edge Computing: Nested Index Policy with Preemptive and Non-preemptive Structure}

\maketitle

\begin{abstract}

\textit{Mobile Edge Computing} (MEC) leverages computational heterogeneity between mobile devices and edge nodes to enable real-time applications requiring high information freshness. The \textit{Age-of-Information} (AoI) metric serves as a crucial evaluator of information timeliness in such systems. Addressing AoI minimization in multi-user MEC environments presents significant challenges due to stochastic computing times. In this paper, we consider multiple users offloading tasks to heterogeneous edge servers in an MEC system, focusing on \textit{preemptive} and \textit{non-preemptive} task scheduling mechanisms. The problem is first reformulated as a \textit{Restless Multi-Arm Bandit} (RMAB) problem, with a multi-layer \textit{Markov Decision Process} (MDP) framework established to characterize AoI dynamics in the MEC system. 
Based on the multi-layer MDP, we propose a nested index framework and design a nested index policy with provably asymptotic optimality. This establishes a theoretical framework adaptable to various scheduling mechanisms, achieving efficient optimization through state stratification and index design in both preemptive and non-preemptive modes.
Finally, the closed-form of the nested index is derived, facilitating performance trade-offs between computational complexity and accuracy while ensuring the universal applicability of the nested index policy across both scheduling modes.
The experimental results show that in non-preemptive scheduling, compared with the benchmark method, the optimality gap is reduced by 25.43$\%$, while in preemptive scheduling, the gap has reduced by 61.84$\%$. As the system scale increases, it asymptotically converges in two scheduling modes and especially provides near-optimal performance in non-preemptive structure.

\end{abstract}

\begin{IEEEkeywords}
mobile edge computing (MEC), age of information (AoI), index policy, preemptive, non-preemptive
\end{IEEEkeywords}

%

\section{Introduction}

\subsection{Motivation}
Large-scale cyber-physical applications have an urgent demand for real-time information. The Internet of Things (IoT) devices are limited by their limited computing capabilities, so cloud computing is needed to improve performance. Meanwhile, vehicles generate a large amount of sensor data that needs to be processed, which is crucial to accurately present the surrounding environment and achieve navigation.
In the context of modern technological advancements, especially in the realm of real-time applications, users demand prompt status updates. The \textit{Age-of-Information} (AoI) is a recently introduced metric designed to assess the freshness of information, quantifying the time elapsed since the most recent message update (e.g., \cite{Tripathi2024AoI,Javani2025AoI}).


In numerous real-time applications, such as autonomous driving, the updated information is computationally demanding and necessitates processing. Offloading data to the cloud for computation can lead to data staleness and is computationally expensive. The \textit{Mobile Edge Computing} (MEC) paradigm shifts servers from the cloud to the edge, bringing users closer to servers and thereby reducing transmission delay (e.g., \cite{mi2022MEC, He2024MEC, Ning2021MEC}). Consequently, MEC emerges as a promising technology capable of reducing latency and enhancing information freshness. The majority of existing studies \cite{Zhu2024Optimizing, Li2025AoI,Yang2024Index,liu_indexability_2008,Xiao2025AoI,hsu_scheduling_2020,Song2024,liu_delay-optimal_2016} primarily concentrate on optimizing AoI in MEC systems with a single user or server, or under the assumption of fixed computation time and task size. 
However, in practical scenarios, multiple heterogeneous users and servers are prevalent, prompting further exploration of heterogeneous MEC systems. Nevertheless, minimizing AoI in MEC systems with heterogeneous servers presents two challenges: determining the optimal location for task offloading and deciding the time for this offloading. 
To this end, we first answer the following question:

\begin{question}
How should one minimize the AoI in a MEC system with multiple heterogeneous users and servers?
\end{question}

The task of minimizing AoI is frequently formulated as a \textit{Restless Multi-Arm Bandit} (RMAB) problem, as it can be optimally solved by value iteration \cite{gittins2011multi}. However, such strategies are prone to the curse of dimensionality, necessitating near-optimal solutions with low complexity. A promising method for addressing the RMAB problem is the index policy approach  \cite{zou2021minimizing}, which is particularly suitable for scheduling systems with multiple nodes. This approach can yield near-optimal results with relatively low computational complexity. The effectiveness of the index policy is attributed to two primary factors: the ability to decompose the original problem into several sub-problems with practicable optimal solutions and the potential to express the index in closed form for a specific \textit{Markov Decision Process} (MDP) structure, thereby reducing computational complexity. 
Regrettably, both factors are difficult to ensure under traditional MDP or \textit{Reinforcement Learning} (RL) models: the optimal solution of the subproblem may not exist, and obtaining the exponential function is not an easy task in a MEC system with heterogeneous users and servers due to the presence of multiple state variables. Traditional reinforcement learning algorithms are typically formulated for MDP with entirely unknown system dynamics, which often results in extended convergence periods \cite{He2024}. 

Furthermore, we also need to consider the MDP structure under preemptive and non-preemptive scheduling methods. Preemptive scheduling is typically applied in applications requiring dynamic resource adjustments, such as autonomous driving, where partial computation results can be discarded for fresher updates. Conversely, non-preemptive scheduling is indispensable in integrity-critical domains like medical imaging analysis, where interrupting task execution may lead to inconsistent results or operational hazards.
From the perspective of scheduling mechanisms, non-preemptive operation fundamentally represents a constrained variant of preemptive schemes under strict task interruption prohibitions. 
However, direct application of the preemptive strategy may interrupt the tasks and fail to make them complete.  For tasks with a completeness requirement, the preemptive optimized real-time resource reallocation strategy is no longer applicable and a redesign of the non-preemptive scheduling logic is required. 
But non-preemptive also brings new operational challenges: Non-preemptive scheduling mechanism brings different state transitions and constraints, leading to the need to design a threshold decision framework and a theoretical system for non-preemptive scheduling specifically.
For a task under non-preemptive scheduling, it cannot switch servers once the computation has started, preventing possible interruption of the task due to preemptive dynamic resource reallocation, which effectively ensures the completeness of the task. This constraint reformulation eliminates the feasibility of dynamically evaluating transient states to switch servers in preemptive systems, making server selection mandatory to be finalized at task initiation through a threshold-based decision framework.

Meanwhile, with more state variables, it is difficult to obtain a closed-form solution or a valid approximate solution to the problem by traditional methods, especially when decomposing the subproblems, and it is very difficult to obtain a closed-form index function. In addition, the direct application of MDP or RL methods often requires a large amount of computational resources, and the computational complexity rises dramatically when the system size increases. These mechanisms introduce different state transitions and decision constraints, increasing the difficulty of the problem. This leads us to the following question:

\begin{question}
How should we design an index-based policy for RMAB problems with multi-dimensional state variables, accounting for preemptive and non-preemptive scheduling dynamics?
\end{question}

These problems prompt us to turn to designing a method based on a multi-layer MDP model and combined with a nested index policy under both scheduling. In order to demonstrate the specificity of the non-preemptive mode comparing with preemption, we reveal the difference in state transfer under non-preemptive mode by proving the MLTT structure and fluid limit model under non-preemption. It shows that the nested index policy under non-preemptive scheduling still possesses asymptotic optimality, balancing the decision accuracy and computational complexity. And by decomposing the original problem into multiple sub-problems and reducing the computational complexity, it can better handle the complexity of the multi-dimensional state and decision constraints in the system while ensuring approximate optimality. 

\subsection{Solution Approach}
In response to this challenge, we propose a framework where multiple heterogeneous users offload tasks to heterogeneous edge servers under both preemptive and non-preemptive task scheduling mechanisms. And a multi-layer MDP model aimed at minimizing the average AoI has been constructed in MEC. By constructing a unified framework to model preemptive and non-preemptive constraints respectively in the multi-layer MDP, our nested index strategy demonstrates adaptability in different scheduling paradigms. The primary contributions of our research are as follows:

\begin{itemize} 
   \item[$\bullet$]\textit{Problem Formulation:}
    This study minimizes the average AoI in MEC systems by optimizing task offloading strategies. We convert the original issue into an RMAB problem and then use Lagrangian relaxation to break it down into sub-problems. It has been proven that a deterministic stationary policy represents the optimal solution for each of these sub-problems.

    \item[$\bullet$]\textit{Models for Different Scheduling:}
    A multi-layer MDP model  that simultaneously accommodates both preemptive and non-preemptive scheduling paradigms is constructed to solve the RMAB problem. Whether interruption is allowed or not, efficient optimization can be achieved through state stratification and index design, which proves the universality of the nested index policy.

    \item[$\bullet$]\textit{Nested Index Approach:}
     Leveraging multi-layer MDP, we rigorously proved the indexability of the multi-layer MDP in the MEC system and designed the corresponding index function and \textit{nested index} algorithm suitable for both preemptive and non-preemptive scheduling. \textit{To our knowledge, our nested index framework is the first model that effectively addressing the challenges of state transitions and decision-making constraints imposed by diverse users and edge servers across multiple states.}

    \item[$\bullet$]\textit{Numerical Results:}
     Compared to benchmarks, the nested index algorithm reduces the optimality gap by up to 25.43$\%$ in both preemptive and 61.84$\%$ in non-preemptive scenarios. As the system scale increases, it asymptotically converges in two scheduling modes and especially provides near-optimal performance in non-preemptive structure.
    
\end{itemize}

\section{Related Work}
To situate our work within the broader research landscape, this section reviews key advancements in three interrelated areas: AoI for information freshness, RMAB scheduling policies, and MEC resource management. By analyzing these domains, we identify critical gaps that our nested index framework aims to address.

\subsection{Age-of-Information}
The AoI metric has emerged as a pivotal framework for evaluating information freshness in single-node systems since it was proposed. Kaul \textit{et al.} in \cite{kaul2011minimizing} first proposed AoI as a metric to evaluate information freshness. The optimal AoI scheduling policy was to send messages from a source to the monitor through a single channel \cite{sun2017update}. In \cite{kadota_age_2021}, multiple sources could send updates over a single-hot network to a monitor, and they derived an approximate expression for the average AoI. Xie and Wang \cite{Xie2024} extended the AoI to Age of Usage Information (AoUI), considering more the usability of data. 
Some researchers will consider the optimization problem of AoI.
In \cite{Yates2019MultiSource}, they minimized AoI by considering multiple sources for queuing systems. 
 In \cite{Kadota2018Unreliablechannels}, they proposed a scheduling policy to minimize AoI in the wireless broadcast network with unreliable channels. In \cite{tripathi_whittle_2019}, they derived the structure of optimal policies for AoI minimizing problem and proved the optimality with reliable channel and unreliable channel assumptions. In \cite{Zhu2024Optimizing}, a fixed threshold policy was proposed in a non-preemptive system when  optimizing peak AoI. In \cite{Yu2023TWC}, Yu \textit{et al.} modeled the AoI minimization problem as a constrained MDP. E. Fountoulakis \textit{et al.} \cite{Foun2023} relaxed the constrained problem into an unconstrained MDP problem, and minimized the average AoI by using a backward dynamic programming algorithm with optimality guarantees.
 However, there was a lack of research on minimizing AoI in the more general MEC systems with heterogeneous multi-sources and edge servers under preemptive and non-preemptive scheduling.

\subsection{Restless Multi-Arm Bandit}

The RMAB problem arises when the state of an arm keeps changing whenever it is pulled or not \cite{whittle1988restless}. 
In \cite{liu_indexability_2008}, they formulated the problem of minimizing AoI as an RMAB problem and demonstrated that Whittle's Index was optimal when the arms were stochastically identical in a single-hop network. 
They also mentioned that a classic MDP was always indexable and proved the indexability of certain RMAB problems. 
Farag \cite{Farag2024} improved the AoI by proposing a relay selection mechanism within the Multi-Armed Bandit (MAB) framework.
Hsu \textit{et al.} \cite{hsu_age_2018} assumed that only one user could update at each time slot and obtained Whittle's Index in closed form. 
Hsu \textit{et al.} \cite{hsu_scheduling_2020} further studied the online and offline versions of the index approach and showed that the index policy was optimal when the arrival rate was constant. 
When there are multiple sensor-predictor pairs and multiple channels, M. K. C. Shisher \textit{et al.} in \cite{Shisher2023} turned the collaborative design problem into a multi-action RMAB.
In \cite{Chen2022UoI}, Chen \textit{et al.} developed a Whittle index policy that is near-optimal for the RMAB problem.
In \cite{Zhou2024AT}, Zhou \textit{et al.} proposed a sufficient condition for Whittle indexability based on the notion of active time (AT), and apply AT condition to the stochastic-arrival unreliable-channel AoI minimization problem. 
All the above index policies can only solve RMAB problems with one-dimensional state variables and lack the discussion on different scheduling manners. However, in general, there exist more factors that affect decisions in wireless networks. Therefore, we need to consider multiple state variables for general wireless networks under preemptive and non-preemptive scheduling.

\subsection{Mobile Edge Computing}

In MEC scenarios, mobile edge servers are well equipped with sufficient computational resources and are close to users, enabling them to expedite the computation process. Yang \textit{et al.} \cite{8892492,9322150}  studied the resource management problem in MEC utilizing reinforcement learning approaches. In \cite{liu_delay-optimal_2016}, an MDP-based policy was proposed to determine whether to offload a task and when to transmit it. Zou and Ozel in \cite{zou2021optimizing} studied the transmission and computation process for MEC systems as coupled two queues in tandem. The computing time is random in the MEC system. The optimal scheduling policy contains non-preemptive \cite{sun2017update} and preemptive \cite{zou2021optimizing} structures, respectively. In \cite{zou2021optimizing}, the optimal scheduling policy under preemptive structure had a threshold property, and it was a benefit for minimizing AoI to wait before offloading. 
For minimizing AoI problems with multiple sources (or users), they established the MDP model and index-based policies \cite{zou2021minimizing,liu2010indexability}, which had less complex and relatively efficient. Such an index-based policy was proved to be asymptotic for many single-hop wireless network scheduling.
Chiariotti \cite{Chiariotti2024} studied a theoretical model of an MEC server, deriving the expected AoI as well as the Peak AoI and delay distribution under different resource allocation strategies. In  \cite{He2024}, He \textit{et al.} constructed an online AoI minimization problem for the MEC system, introduced the Post-Decision State (PDS), and combined it with Reinforcement Learning.
In \cite{Song2024}, Song \textit{et al.}  constructed a multi-objective MDP model in the dynamic MEC system and proposed a multi-objective learning algorithm to minimize the total AoI. To summarize, random offloading time and indeterminate computation durations considering preemptive techniques posed significant offloading challenges in the MEC system.

\section{System Model}
To characterize the dynamic interactions between users, edge servers, and task scheduling paradigms in MEC systems, this section presents the system model. The system model provides the foundation for reformulating the AoI minimization problem as a RMAB problem with multi-layer MDP dynamics. 

\subsection{System Overview}
Consider an MEC system where a set $\mathcal{N}=\{1, 2, ..., N\}$ of users generate computational tasks and offload them to a set $\mathcal{M}= \{1, 2, ..., M\}$ of heterogeneous edge servers, as shown in Fig. \ref{fig:sketch}.
We let $n \in \mathcal{N}$ represent the index of users and $m \in\mathcal{M},\ $ denoting the index of edge servers. 

Let $t\in\mathcal{T}$ be the index of each time slot with $\mathcal{T} = \{ 1,2,...,T\}$. 
The generate-at-will model \cite{sun2017update,hsu_age_2018} is adopted, meaning that users can determine whether to generate a new task at each time slot $t$. In this MEC system, the transmission time between the user and the edge server is extremely short. Once the computation of one task is completed, its result is immediately sent back to the user. Each user can send a proportion of its task to any server for computing at each time slot \cite{partialoffloading}.

\begin{figure}[t]
    \centering
    \includegraphics[width=0.45\textwidth]{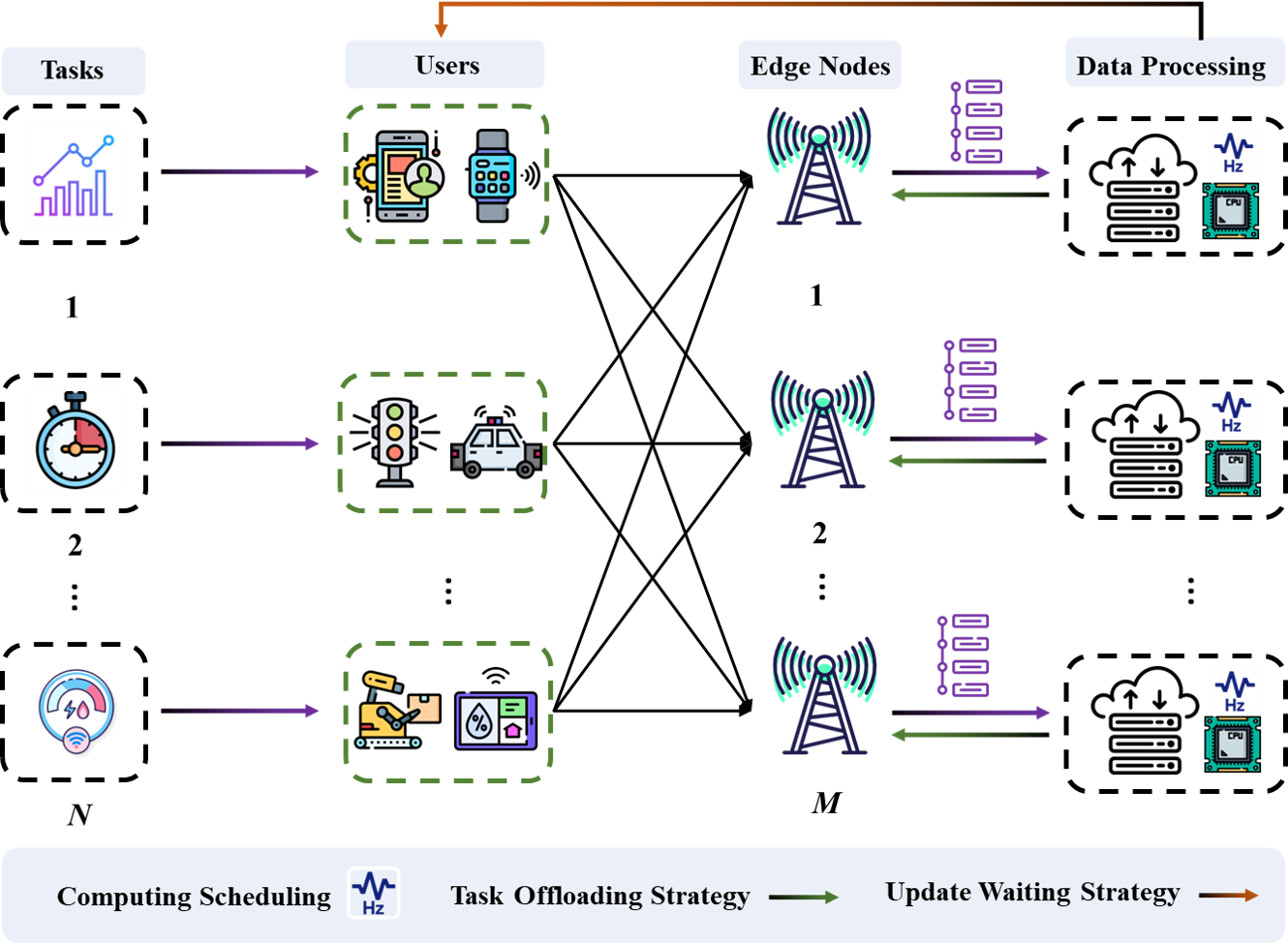}
    \caption{An MEC system in which each user offloads tasks to a certain edge server and receives data after processing.}
    \label{fig:sketch}
\end{figure}

\subsubsection{Task Scheduling Paradigms}

To characterize the dynamic interactions between task offloading decisions and server execution behavior, we explicitly model two critical scheduling paradigms:
\begin{itemize}
    \item \textbf{Preemptive scheduling}: Allows tasks to be interrupted and reassigned to other servers during computation, enabling adaptive resource allocation but introducing state transitions based on partial task completion.
    \item \textbf{Non-preemptive scheduling}: Requires tasks to complete execution on their initially selected server, ensuring computational integrity but limiting flexibility in response to changing conditions.
\end{itemize}

While preemptive scheduling supports real-time applications that require dynamic task reassignment to maintain information freshness, non-preemptive scheduling is important for ensuring predictable task processing sequences in scenarios that demand computational integrity.

These mechanisms fundamentally shape the system's dynamics and are incorporated into our model through the following assumptions: 
we assume that the edge servers are heterogeneous and the computing time of tasks is stochastic. Each task of user $n$ has a specific workload. When offloaded to a server, the task needs a specific number of CPU cycles to finish computing, and the computing time is based on both the workload and CPU frequency of the chosen server. Additionally, we posit that there is a minimum computing time $\tau_{n}^{min}$ for the task of user $n$.
We use AoI to measure the freshness of information. Specifically, we denote $\Delta_n(t)$ as the AoI of user $n$ at time $t$. 
The age of user $n$ decreases to the age of the latest off-loaded task when the computing is completed or increases by $1$ otherwise.
Let $G_n(t)$ denote the generation time of the most recent task offloaded by user $n$ at time $t$. Thus, when the computing is finished, the age of user $n$ at time $t$ can be expressed as
\begin{equation}\label{gnt}
       \Delta_n(t) = t-G_n(t), \quad \forall n \in \mathcal{N}.
\end{equation}



\subsubsection{Shifted Geometric Distribution}\label{probability}
For tasks offloaded to server $m$, the transition of AoI during computation follows a shifted geometric distribution \cite{taoexponential}. The parameter of this distribution is \(p_{m} = 1 - e^{-\lambda_{m}}\), where \(\lambda_{m}\) is the parameter of an exponential distribution.
To describe the task computing process more accurately, we set a minimum computing time \(\tau_{n}^{min }\) for each task. 
Specifically, during the initial \(\tau_n^{\text{min}}\) time slots following task initiation, edge servers are guaranteed to remain busy without task completion. Once this minimum computation duration \(\tau_n^{\text{min}}\) is exceeded, each server $m$ completes the task with probability \(p_m\) in every subsequent time slot. This modeling choice aligns with real-world computational constraints and, when combined with the shifted geometric distribution, jointly governs the AoI dynamics during task execution.

\subsection{Problem Formulation}

The freshness of information, quantified by AoI, is critical for real-time applications in MEC systems. To ensure optimal performance, under the preemptive and non-preemptive scheduling mechanisms, minimizing the overall AoI across heterogeneous users and servers becomes a central objective.
In the following, we formulate the AoI minimization problem.




\subsubsection{Constrained Long-Term Average AoI Optimization Model
 and Constraints}

Each user can choose one edge server to offload its tasks at time $t$. When a task is offloaded, the computation starts at the beginning of each time slot. 
We denote $y_{nm}(t)\in\{0,1\}$ as the offloading decision variable for user $n$ at time $t$: If user $n$ decides to offload a task to server $m$, then $y_{nm}(t)=1$. Otherwise, $y_{nm}(t)=0$, which means no task is offloaded. Under both preemptive and non-preemptive conditions, the decision conditions of $y_{nm}(t)$ are different. In preemptive scheduling, the user can interrupt the task being executed and reselect the server. Suppose the user $n$ uses the server $m$ at time $(t-1)$, i.e., $y_{nm}(t-1)=1$. When it switches to $m'\ (m'\neq m)$ at time $t$, there is  $y_{nm}(t-1)=0$ and $y_{nm'}(t)=1$. For non-preemptive scheduling, once it starts, until the task is completed, there is always $y_{nm}(t)=1$. 

Let $\pi\in\Pi_o$ denote the scheduling policy, which maps from the system state space $\mathcal{S}$ to the space of actions of all users. The set \(\Pi_o\) consists of all possible scheduling policies. The policy \(\pi\) is considered to be deterministic stationary \cite{sennott1989average}. The minimization problem of the long-term average AoI \cite{yates_age_2020, hsu_scheduling_2020,zou2021minimizing} under policy \(\pi\) is reformulated into the following form:
\begin{equation}
    \begin{aligned}
\label{reformulatedproblem}
\min_{\pi} & \quad \limsup_{T\to\infty}\frac{1}{TN}\sum_{t =1}^T\sum_{n\in\mathcal{N}}\mathbb{E}_{\boldsymbol{y}\sim\pi}[\Delta_n(t)], 
        \end{aligned} 
\end{equation}



\subsubsection{Subproblem Decomposition}

Based on the Lagrangian relaxation \cite{boyd2004convex}, we relax the instantaneous problem (\ref{reformulatedproblem}) to average constraint. 
It should be noted that the subproblem obtained by Lagrangian relaxation is not completely equivalent to the original problem \eqref{reformulatedproblem}, but rather an approximation. This is because there are differences in the feasible regions of the two objective functions. Denote  the set \(\Pi_n\) as the feasible region consisting of all possible scheduling policies for the relaxation problem. 
We relax the instantaneous constraint to an average constraint, which expands the feasible region, i.e., \(\Pi_o \subset \Pi_n\). Although the feasible regions are different, we will prove that the performance of the relaxation strategy asymptotically converges to the optimal of the original problem through the Fluid Limit Model. It allows for temporarily not satisfying the original constraint in some time slots, as long as the long-term average is satisfied.

Then we drop \eqref{reformulatedproblem} by introducing dual variables $\nu_m\ \ (\forall m\in\mathcal{M}),$ and deriving $N$ sub-problems. Specifically, this relaxation operation loosens the strict restrictions on each time slot, shifting from focusing on the specific offloading decisions for each time slot to considering the long-term average offloading situation, making the problem more solvable to a certain extent. Define $\pi_n\in\Pi_n$ as the policy that maps from the state of user $n$ to the action of user $n$. Given dual variables, each sub-problem $n$ is formulated as:

\begin{subequations}\label{originrelax}
\small\begin{align}
\min_{\pi_n} & \limsup_{T\to\infty} \frac{1}{T} \sum_{t =1}^T \mathbb{E}_{\bm{y}_n\sim\pi_n}
\Bigg[ \Delta_n(t) + \sum_{m\in\mathcal{M}} \nu_m y_{nm}(t) \Bigg] \label{eq:objective} \\
\text{s.t.} \quad & \sum_{m\in\mathcal{M}} y_{nm}(t) \leq 1, 
\    \ \forall n \in\mathcal{N}, \ t \in\mathcal{T}, \label{decoupled_problem} \\
& y_{nm}(t) \in \{0,1\}, 
\     \ \ \ \forall n \in\mathcal{N}, \ m \in\mathcal{M}, \ t \in\mathcal{T}. \label{eq:constraint2}
\end{align}
\end{subequations}
This framework applies to both preemptive and non-preemptive scheduling. In the subsequent chapters, we achieves differentiated optimization through distinct state transition models and index strategies tailored to each scheduling type.
When the dual variables converge, the sum of solutions to each sub-problem \eqref{originrelax} reaches the lower bound of the solution to problem \eqref{reformulatedproblem}.
To address the decomposed sub-problem (\ref{originrelax}) derived via Lagrangian relaxation, we leverage MDP to characterize the optimal policy structure. Specifically, in order to clarify that the deterministic stationary policy can solve the minimum long-term AoI of each sub-problem, we introduce Lemma \ref{detsta} according to the properties of MDP \cite{sennott1989average}, \cite{jiang2015approximate}:
\begin{lemma}\label{detsta}
    The optimal solution to each sub-problem \eqref{originrelax} is deterministic stationary.
\end{lemma}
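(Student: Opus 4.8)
The plan is to recognize each sub-problem \eqref{originrelax} as a long-run average-cost Markov decision process with countable state space and finite action space, and then invoke the existence theory for such processes (in the form of Sennott's conditions \cite{sennott1989average}) to conclude that an optimal deterministic stationary policy exists. First I would make the MDP explicit: the state of user $n$ collects its current age $\Delta_n(t)$ together with the bookkeeping variables of the multi-layer structure (which server, if any, is currently computing and how long it has been busy), so the state lives in a countable set since all these quantities are integer-valued. The action at each slot is the choice among the $M$ servers or the idle action, giving a finite action set of size $M+1$, and the one-stage cost is the nonnegative, additive quantity $\Delta_n(t) + \sum_{m\in\mathcal{M}} \nu_m y_{nm}(t)$ appearing in \eqref{eq:objective}. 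The transition kernel is determined by the shifted geometric completion law of Section~\ref{probability}, under which a busy server with parameter $p_m$ finishes in each slot after $\tau_n^{min}$ with probability $p_m$.

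Next I would verify the hypotheses needed for an average-cost optimal stationary policy to exist. The decisive ingredient is to exhibit at least one stationary policy under which the induced age chain is positive recurrent with finite stationary mean; the natural candidate is ``always offload to the fastest server.'' Because $p_m>0$ makes the shifted geometric completion time have finite mean, under this policy the age $\Delta_n$ renews at finite expected intervals, so its long-run average is finite and a distinguished reference state (e.g.\ the minimal-age state) is reached from every state in finite expected time and at finite expected cost. These are exactly Sennott's growth-and-reachability conditions, which in turn guarantee that the relative value function is finite and that the average-cost optimality equation admits a solution.

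With these conditions in hand, I would complete the argument by the vanishing-discount method. For each discount factor $\beta<1$ the discounted sub-problem is a countable-state, finite-action discounted MDP, and standard theory \cite{jiang2015approximate} guarantees a deterministic stationary optimal policy attaining the minimum in the discounted Bellman equation. Letting $\beta\to 1$ and using the uniform bounds from the previous step, the discounted optimality equations converge to the average-cost optimality equation $g + h(s) = \min_{a}\{c(s,a)+\sum_{s'}P(s'\mid s,a)\,h(s')\}$, whose minimizing action defines a deterministic stationary policy that is average-optimal for \eqref{originrelax}.

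I expect the main obstacle to be the verification step rather than the final application: one must rule out the degenerate behaviour in which the age escapes to infinity (for instance if the dual prices $\nu_m$ are so large that offloading is never worthwhile, or if a server stalls), and show that the finite-mean property of the shifted geometric distribution indeed yields a uniformly bounded relative value function. Handling the sign and magnitude of the Lagrange multipliers $\nu_m$ --- ensuring the augmented cost stays bounded below and that a stabilizing policy remains admissible --- is where the care is needed; once a single stabilizing policy with finite average cost is secured, the countable-state/finite-action structure makes the remainder routine.
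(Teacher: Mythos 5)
Your route is the paper's route in outline: both proofs reduce Lemma \ref{detsta} to Sennott's conditions for countable-state, finite-action, average-cost MDPs \cite{sennott1989average}, and both verify the first condition (finite average cost) identically --- under the stationary policy that always offloads to one fixed server $m$ with $p_m>0$, the completion time is shifted-geometric with finite mean, the paper computes the resulting average age in closed form as $1/p_m^2$, and the average cost $1/p_m^2+\nu_m$ is finite. (Your worry about the multipliers is moot here: the $\nu_m$ are fixed nonnegative duals in sub-problem \eqref{originrelax}, so the stage cost is nonnegative and the stabilizing policy stays admissible.)

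The gap is in the second condition. What Sennott requires, and what the paper actually proves, is a uniform \emph{lower} bound $V(s)-V(s_0)\ge -L$ on the relative cost function. Your growth-and-reachability argument --- the reference state is reached from every state in finite expected time and at finite expected cost --- only bounds $V(s)-V(s_0)$ from \emph{above} (travel to $s_0$ and pay the finite expected cost); it gives no control on how far $V(s)$ can dip below $V(s_0)$, and that is precisely the uniform bound your vanishing-discount paragraph invokes ("using the uniform bounds from the previous step") to pass $\beta\to 1$. As written, that step is assumed rather than proved. The paper closes this hole with a dedicated structural result (Lemma \ref{monotonicincrease}): by backward induction on the horizon, invoking Proposition 3.1 of \cite{jiang2015approximate}, the cost-to-go $V((A,D),\bm{\nu})$ is nondecreasing in the age $A$ (and, by a separate contradiction argument, nondecreasing in each $\nu_m$), so the minimal-age state is a global minimizer of $V$ and the lower bound holds immediately. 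This monotonicity lemma is the bulk of the paper's proof and is the ingredient missing from your proposal. A secondary wrinkle pointing at the same issue: under your stabilizing policy the age after a completion equals the random computing time, which is at least $\tau_n^{min}+1$, so the minimal-age state you nominate as reference need not even be recurrent under that policy; the paper sidesteps this by identifying the value-minimizing state through monotonicity rather than through recurrence.
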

The proof is deferred to Appendix A for clarity. This conclusion is applicable to both preemptive and non-preemptive scheduling modes. Although the state transitions and decision constraints in the two modes are different, the optimal policy structure of the subproblems has been proved to be deterministic and stationary.

\section{Analysis Framework for AoI Optimization}

Within the multi-layer MDP framework, the transition probabilities between system states under both scheduling mechanisms are critical to capturing the dynamic evolution of AoI. This section formalizes these probabilities for preemptive and non-preemptive task scheduling, which serve as the foundation for deriving the nested index policy in subsequent sections.

\subsection{Transition Probabilities}\label{defTP}

This section focuses on two offloading schemes: the preemptive manner and the non-preemptive manner. As shown in Fig. \ref{scheduling_flows}, the preemptive mechanism (Fig. \ref{preemptive_flow}) demonstrates dynamic task interruption capability when higher-priority tasks arrive, while the non-preemptive approach (Fig. \ref{nonpreemptive_flow}) maintains strict first-come-first-served execution order. Given the defined offloading decision and the shifted geometric distribution of computation time, we can write the transition probability of the AoI for the two conditions above:

\begin{figure}[!t]
    \centering
    \subfloat[Preemptive Scheduling Flow]{\includegraphics[width=1.85in]{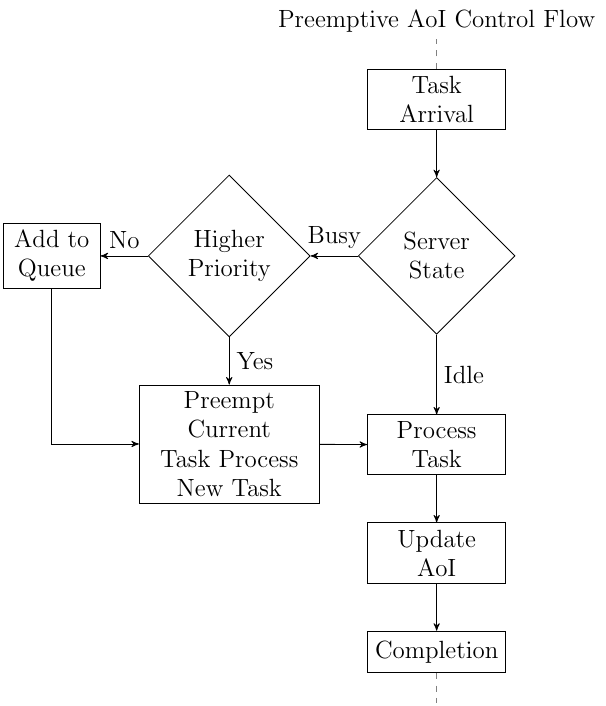}
    \label{preemptive_flow}}
    \hfil
    \subfloat[Non-preemptive Scheduling Flow]{\includegraphics[width=1.4in]{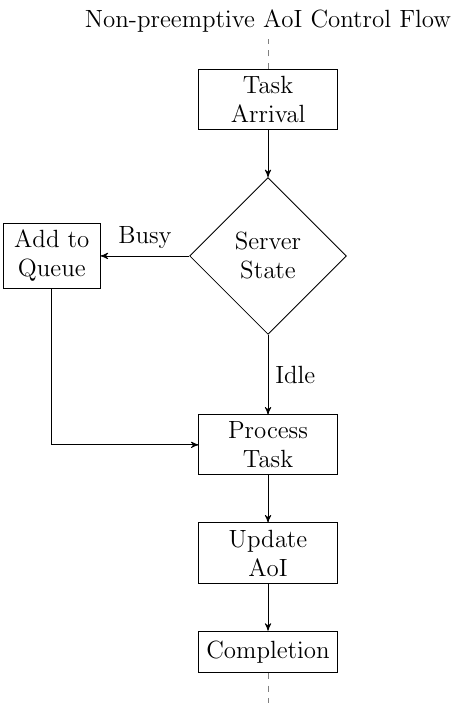}
\label{nonpreemptive_flow}}
\vspace{-1.2ex}
\caption{Task scheduling mechanisms comparison: (a) Preemptive mode allows task interruption with priority comparison, (b) Non-preemptive mode requires task completion before new assignments.}
\label{scheduling_flows}
\vspace{-1.8em}
\end{figure}

\subsubsection{Preemtive State Transition}

A preemptive manner allows a task at an edge server to either wait until it is completed or switch to a new task. In this condition, the transition probability for user $n$ during computing with a preemptive scheme can be represented as:
\begin{subequations}
\small\begin{align}
\mathbb{P}\{&\Delta_n(t + 1) = \Delta_n(t) + 1 \mid \nonumber \\& t-G_n(t) > \tau_n^{\min}, y_{nm}(t) = 1\} = 1-p_{m}, \label{eq:tran_prob_a}\\
\mathbb{P}\{&\Delta_n(t + 1) = t-G_n(t) + 1 \mid \nonumber\\&t-G_n(t) > \tau_n^{\min}, y_{nm}(t) = 1\} = p_{m}, \label{eq:tran_prob_b}\\
\mathbb{P}\{&\Delta_n(t + 1) = \Delta_n(t) + 1 \mid \nonumber \\& t-G_n(t) \leq \tau_n^{\min}, y_{nm}(t) = 1\} = 1, \label{eq:tran_prob_c}\\
\mathbb{P}\{&\Delta_n(t + 1) = \Delta_n(t) + 1 \mid y_{nm}(t) = 0\} = 1. \label{eq:tran_prob_d}
\end{align}
\end{subequations}
Eq.\eqref{eq:tran_prob_a} represents that when user $n$ chooses to offload the task to server $m$ at time $t$, i.e. $y_{nm}(t) = 1$, and the duration from the task generation time $G_{n}(t)$ to the current time $t$ is greater than the minimum computation time $\tau_{n}^{\min}$, the probability that the AoI of user $n$ increases by 1 at time $t + 1$ is $1-p_{m}$. Under this scenario, the task remains uncompleted with probability $1-p_m$, leading to an AoI increment of 1. Eq. \eqref{eq:tran_prob_b} demonstrates that when the elapsed time since task generation exceeds the minimum computation time \(\tau_n^{\text{min}}\), the task completes at time \(t+1\) with probability $p_{m}$. Eq. \eqref{eq:tran_prob_c} reveals that if this duration remains within \(\tau_n^{\text{min}}\) and the user offloads to serve $m$ at time $t$, the task necessarily remains uncompleted at \(t+1\), thus forcing the AoI of user $n$ to increase by 1. Eq. \eqref{eq:tran_prob_d} highlights that abstaining from offloading, i.e. \(y_{nm}(t)=0\), inherently leads to a unit increase in  AoI of user $n$ at \(t+1\).
Under preemptive scheduling, the priority of a task is mainly determined by whether certain conditions of the task exceed a specific threshold and the size of the task generation time, which will be explained in detail in the subsequent MLTT.

\subsubsection{Non-preemptive State Transition}
For tasks under non-preemptive, it is impossible to switch servers once the computation starts, preventing possible task interruptions due to preemptive dynamic resource reallocation. In this case, real-time resource reallocation policies optimized for preemptive scenarios become inapplicable, and thus the scheduling logic needs to be redesigned.
Therefore, when $t-G_n(t) > \tau_n^{\min}$, we can represent the transition probability as:
\begin{subequations}\label{probability_group}
\begin{align}
\mathbb{P}\{&\Delta_{n}(t+1) = \Delta_{n}(t) + 1 \mid \nonumber\\&b_m(t) = 0, y_{nm}(t) = 1 \} = 1 - p_m, \label{prob1} \\
\mathbb{P}\{&\Delta_{n}(t+1) = t - G_n(t) + 1 \mid\nonumber \\& b_m(t) = 0, y_{nm}(t) = 1 \} = p_m, \label{prob2} \\
\mathbb{P}\{&\Delta_{n}(t+1) = \Delta_{n}(t) + 1 \mid \nonumber \\& b_m(t) = 1, y_{nm}(t) = 1\} = 1. \label{prob3}
\end{align}
\end{subequations}
where $b_m(t)$ represents the idle state of the server. If $b_m(t)$ = 1, it indicates that server $m$ is busy and processing a task at time $t$, while  $b_m(t)$ = 0 indicates that server $m$ is idle. Specifically, Eq. (\ref{prob1}) and Eq. (\ref{prob2}) are similar to Eq. (\ref{eq:tran_prob_a}) and Eq. (\ref{eq:tran_prob_b}) in the preemptive mode. The difference is that there is an additional condition \(y_{nm}(t-1)=1\), which requires that user \(n\) offloads the task to server \(m\) at both time \(t\) and \(t-1\). In the non-preemptive mode, once the server starts the execution process, it cannot switch servers. Therefore, only when the task is offloaded to the same server for two consecutive time instants and the duration from the task generation time to the current time is greater than the minimum computation time, the AoI will change accordingly depending on whether the task is completed or not, corresponding to probability \(p_{m}\) and probability \(1-p_{m}\) respectively. Eq. (\ref{prob3}) indicates that in the non-preemptive mode, if user \(n\) selects server \(m\) at time \(t\) and server \(m'\) (\(m\neq m'\)) at time \(t-1\), then the AoI of user \(n\) will increase by 1 at time \(t + 1\). This is because the task cannot switch servers during the computation, and in this case, the task will be discarded. For the non-preemptive state transition, the task removes the constraint of ``urgent tasks interrupt low priority tasks” in the preemptive style, forcing the non-preemptive mode to seek a new balance between ``too early start leads to wasted server idle” and ``too late start leads to AoI accumulation''. The asymptotic optimality proofs and other theoretical frameworks applicable to preemption need to be reconstructed.

\subsection{Multi-Layer MDP}

We decomposed the original problem into sub-problem (\ref{originrelax}) using Lagrangian relaxation method. However, solving this sub-problem to derive the optimal solution under both preemptive and non-preemptive scheduling mechanisms remains challenging.
Given the deterministic stationary property established in Lemma \ref{detsta}, a multi-layer MDP framework is proposed to characterize the optimal policy \(\pi_{n}^{*}, \forall n\in\mathcal{N}\) for each decomposed sub-problem \ref{originrelax} under given dual variables \(\{\nu_{1}, \nu_{2}, \ldots, \nu_{M}\}\). This MDP model captures the dynamic evolution of system states and the impact of user offloading decisions, offering a novel approach to solving sub-problem (\ref{originrelax}). Specifically, the model integrates both preemptive and non-preemptive scheduling mechanisms by explicitly modeling their state transitions and decision constraints.

\begin{definition}[$L$-Layer MDP]\label{hmdp}
An $L$-layer MDP is a tuple $\langle\mathcal{S},\mathcal{A}, \mathcal{P}, \mathcal{C},L\rangle$, where $\mathcal{S}$ denotes the state space, $\mathcal{A}$ is the action space, $PD(\mathcal{S})$ is the probability distribution on the state space.
The transition function is $\mathcal{P}:\mathcal{S}\times \mathcal{A} \to PD(\mathcal{S})$, the cost function is $\mathcal{C}:\mathcal{S}\times \mathcal{A} \to \mathbb{R}$, and $L\in\mathbb{Z}_+$ is the number of layers. Denote $\mathcal{S}_l=\mathbb{N}^l$ as the state space at layer $l$, and $\mathcal{S}=\{\mathcal{S}_1,\mathcal{S}_2, \cdots,\mathcal{S}_L\}$. An $L$-layer MDP fulfills the following conditions: $\mathcal{S}_l\subset\mathbb{N}^L$, and $\mathcal{S}=\{\mathcal{S}_1,\mathcal{S}_2, \cdots,\mathcal{S}_L\}$
\begin{itemize}
    \item For all $0<l<L$ and all $s\in\mathcal{S}_l$, there exist an $a\in\mathcal{A}$ and $s'\in\mathcal{S}_{l+1}$ such that $\mathbb{P}\{s' \mid s, a\}>0$,

    \item For all $0<l\le L$ and all $s\in\mathcal{S}_l$, there exist an $a\in\mathcal{A}$ and $s'\in\mathcal{S}_{l}$ such that $\mathbb{P}\{s'\mid s, a\}>0$,

    \item For $1<l \le L$, there exists an $s\in\mathcal{S}_l$, $s'\in\mathcal{S}_1$, and $a\in\mathcal{A}$ such that $\mathbb{P}\{s'\mid s, a\}>0$,
\end{itemize}
and we term the sub-space $\mathcal{S}_l$ as layer $l$.
\end{definition}

The multi-layer MDP ensures state communication between different layers. The state at layer $l$ is able to transit to states at layer $l$, $l+1$, and layer $1$. 

We next specify the 2-layer MDP for the MEC system. In the single-layer MDP, the transition between states is usually determined by a single threshold. That is, when a state transitions to another state, the decision is made only relying on this threshold. 
However, the MEC system scenario in this study involves multiple different states, not a simple binary state switch. 
Specifically, there are multiple states in the system, such as tasks not being computed, tasks being computed and waiting for results, and tasks being completed. The transition and decision of these states cannot be determined by only one threshold. Therefore, we construct each sub-problem (\ref{originrelax}) as a 2-layer MDP model, which can set different levels of thresholds for the decision of different state transitions, to more accurately simulate and optimize the process of task offloading and information update in the MEC system:

\begin{itemize}
    \item \textbf{Action space:} 
    Let $\mathcal{A}$ be the action space for each user.
    For user $n$, its action is to select a server $m\in\mathcal{M}\cup\{0\}$, where $m=0$ means not to offload the task. Each action corresponds to a binary vector $\bm{y}_n(t)=\{y_{n1}(t),y_{n2}(t),\dots,y_{nM}(t)\}\in\mathcal{A}$, satisfying \(\sum_{m \in \mathcal{M}} y_{nm}(t) \leq 1\)and \( y_{nm}(t) \in \{0, 1\} \). That is, each user can choose at most one server or not choose any. 
    The set of actions of all users needs to satisfy the server capacity: \(\sum_{n \in \mathcal{N}} \sum_{m \in \mathcal{M}} y_{nm}(t) \leq M, \forall t\).
    The global action space $A$ is the combination of all possible \(\{y_{nm}(t)\}_{n,m}\), satisfying the above three constraints:

    \begin{equation}\label{action}
        \mathcal{A} = \left\{ 
\bm{y}_{n}(t),{n \in \mathcal{N}} \,\middle|\, 
\begin{aligned} 
& \sum_{n \in \mathcal{N}} \sum_{m \in \mathcal{M}} y_{nm}(t) \leq M, \forall t\\
& \sum_{m \in \mathcal{M}} y_{nm}(t) \leq 1, \forall n, t \\
& y_{nm}(t) \in \{0, 1\}, \forall n, m, t 
\end{aligned}
\right\}
    \end{equation}


    \item \textbf{State space:} 
    The state space for each user is partitioned into two layers:
    
    \textbf{1) \textit{Layer 1:}} All users are not in the computing task execution state. Characteristics are described by the current AoI $\Delta_n(t)$, forming the state space: 
    \begin{equation}
        S_1 = \{\Delta_n(t)|\Delta_n(t) \in \mathbb{N}\}
    \end{equation}
    
     \textbf{2) \textit{Layer 2:}} Users actively executing computational tasks. The state space in Layer 2 is defined as      
     \begin{equation}
         S_2 = \{(\Delta_n(t), D_n(t)) \} 
     \end{equation}where $D_n(t) = \Delta_n(G_n(t))$ represents the AoI at the task generation time $G_n(t)$.
    \item \textbf{Transition function:}
The state transition dynamics under both scheduling paradigms are modeled using \(\Delta_n(t)\). However, solely using $\Delta_n(t)$ can not fully characterize the transition of the states in multi-layer MDP. The transition probability from state $s$ to \(s'\) when user $n$ selects server $m$ is denoted as \(q_{nm}^{ss'}\), and we have
\begin{equation}
    q_{nm}^{ss'}=\mathbb{P}\{s'\ |\ s,\ y_{nm}(t)=1\},
\end{equation}
where $q_{nm}^{ss'}$ can be derived from the transition probability in Section \ref{defTP}. The transition function for this formation can include both preemptive and non-preemptive manners. 
Let $l \in \{1,2\}$ denote the layer where the state s is located.
To be specific, we have $q_{nm}^{ss'}=1$ for the following conditions:

\begin{subequations}\label{threecondition}
\setlength{\jot}{1pt} 
\addtolength{\abovedisplayskip}{-2pt} 
\small\begin{align} 
&q_{nm}^{ss'} = \mathbb{P}\Bigl\{ 
    s' = \Delta_n(t) + 1 \Bigm| l=1, \nonumber \\[-5pt]
    &\phantom{{}= \mathbb{P}\Biggl\{}s = \Delta_n(t),\ y_n(t) = 0 
\Bigr\} = 1, \label{threecondition1} \\[-2pt]
&q_{nm}^{ss'} = \mathbb{P}\Biggl\{ 
    s' = \Bigl( \Delta_n(t) + 1,\, \Delta_n(t) \Bigr) 
    \Bigm| l=1, \nonumber \\[-5pt]
    & \phantom{{}= \mathbb{P}\Biggl\{} s = \Delta_n(t),\ y_{nm}(t) = 1 
\Biggr\} = 1, \label{threecondition2} \\[-5pt]
&q_{nm}^{ss'} = \mathbb{P}\Biggl\{ 
    s' = \Bigl( \Delta_n(t) + 1,\, D_n(t) \Bigr) 
    \Bigm| l=2, \nonumber \\[-5pt]
    &s = \Bigl( \Delta_n(t),\, D_n(t) \Bigr), 
        \Delta_n(t) - D_n(t) \leq \tau_n^{\mathrm{min}}
\Biggr\} = 1.\label{threecondition3}
\end{align}
\end{subequations}

The above three formulas correspond to three situations that 1) When user $n$ does not offload, i.e., the layer $l=1$ and \(y_n(t)=0\), the age increases by 1, transitioning from \(s = \Delta_n(t)\) to \(s' = \Delta_n(t)+1\); 2) Offloading to server $m$, i.e., \(y_{nm}(t)=1\), shifts the state from \(s = \Delta_n(t)\) to \(s' = (\Delta_n(t)+1, \Delta_n(t))\), where \(\Delta_n(t)+1\) is the new age and \(\Delta_n(t)\) is the generation time \(D_n(t)\). 3) when the difference between the AoI of user \(n\) and the age at the time of generating the latest task is less than the minimum computation time, the state transition will necessarily occur.

\item \textbf{Cost function:} We define the immediate cost as 
\begin{align}\label{costfunction}
    C_n(s_n(t),m)\triangleq \Delta_n(t) + \nu_m ,
\end{align}
\noindent  which includes the current AoI and the server cost. The generating age $D_n(t)$ does not affect the immediate cost. The choice of actions determines the specific selection of server $m$, thereby ultimately affecting the server cost $\nu_m$

\end{itemize}

According to Lemma \ref{detsta}, there exists one deterministic stationary policy $\pi_n^*$ that reaches optimal average AoI for the relaxed sub-problem \eqref{originrelax}. However, value iteration when deriving the optimal policy for the original problem suffers from the curse of dimensionality \cite{optimalcontrol}. Specifically, the state space of the MEC system expands exponentially with the number of state variables, leading to prohibitive computational complexity and infeasible runtime.
Therefore, we seek for an approach that has less complexity and
achieves near optimality.

\section{Index-Based Policy}

In this section, a nested index approach is introduced to address the RMAB problem, with a particular focus on its application in both preemptive and non-preemptive scheduling within MEC systems. This approach has been proven to achieve asymptotic optimality in task offloading decisions, regradless of whether the tasks are scheduled preemptively or non-preemptively. We first formalize the nested index concept and establish the indexability of the proposed 2-layer MDP model for MEC systems. Next, a nested index policy is proposed for task scheduling.  It is explicitly designed to balance the flexibility that preemptive scheduling offers and the computational integrity of non-preemptive scheduling. Furthermore, the asymptotic optimality of the proposed approach is validated, and the nested index function is obtained in a closed form. This optimality holds true for both two scheduling within the MEC system. To accommodate both scheduling paradigms, the nested index framework is extended via structural modifications to the MDP layers.

\subsection{Core Concept of Passive Set and Intra-Indexability}

In the multi-layer MDP system, to better evaluate the pros and cons of different server selections under both both preemptive and non-preemptive scheduling, the concepts of the passive set and intra-indexability are introduced. 
First, we define the expected cost of choosing server $m$ given state $s$ as
\begin{equation}\label{bellman}
    \mu_{nm}(s,\bm{\nu})\triangleq C_n(s,m)+\sum_{{s}'\in\mathcal{S}}q^{ss'}_{nm}V_n({s}',\bm{\nu})
\end{equation}
where  function $V_n(s, \bm{\nu})$ is the differential cost-to-go \cite{optimalcontrol}. The cost-to-go function \(V_n(s, \nu)\) is used to evaluate the value of state $s$ under the given cost $\bm{\nu}$.   $\sum_{s'\in\mathcal{S}_l}q^{ss'}_{nm}V_n(s',\bm{\nu})$ reflects the impact of future states on the current decision-making cost. 
Following \cite{whittle1988restless}, the \textit{passive set} is defined with the dual variables in (\ref{originrelax})  as the activating cost vector \(\nu \triangleq (\nu_1, \ldots, \nu_M)\), where \(\nu\) denotes the computation costs.
\begin{definition}[Passive Set]
The passive set for user $n$ to transit to layer $l$ at server $m$ given activating cost $\bm{\nu}$ is denoted as:
\begin{equation}
\begin{aligned}
       &\mathcal{P}_{nm}^l(\bm{\nu})\triangleq \\&\left\{s\in\mathcal{S}_l\ |\min_{m'\in\mathcal{M}, m'\neq m }\mu_{nm'}(s,\bm{\nu})\leq \mu_{nm}(s,\bm{\nu})\right\}.
    \label{eq:passiveset}
\end{aligned}
\end{equation}
 We denote $\mathcal{P}_{nm}(\bm{\nu})\triangleq\cup_{l=1}^L\mathcal{P}_{nm}^l(\bm{\nu})$ as the overall passive set.
\end{definition}
The passive set \(\mathcal{P}_{n m}^{l}(\nu)\) for user $n$ at layer $l$ is defined as a subset of states \(S_l\). Specifically, for at least one alternative server \(m' \neq m\), the expected cost of choosing \(m'\) is less than or equal to that of selecting server $m$. That is, it refers to the set of states at layer $l$ that are sub-optimal for selecting server $m$ for computing with activating costs $\bm{\nu}$. In classic RMAB problems \cite{hsu_age_2018, kriouile_global_2021}, activating cost $\nu$ is a scalar. Whittle \cite{whittle1988restless} established that a problem is \textit{indexable} if the size of the passive set increases monotonically from $0$ to $+\infty$ as activating cost $\nu$ increases from $0$ to $+\infty$. 
Each state \(s \in S_l\) is assigned a critical activation activating cost \(\nu^*(s)\), defined as the maximum \(\nu\) where the optimal action remains unchanged, i.e., $\nu^*(s)=\max \left\{ \nu \mid \mu_{n m}(s, \nu) \leq \mu_{n m'}(s, \nu),\forall m' \neq m \right\}$.
The activating cost also gives the urgency of such a state as Whittle stated \cite{whittle1988restless}.
A state with a higher \(\nu^*(s)\) has a higher priority for selection. 

In MEC systems, \textit{indexability} is essential for designing efficient task offloading policies that balance user states and server costs. It refers to the property that for each layer in the multi-layer MDP. As the cost \(\nu_m\) increases, the size of the passive set grows monotonically. This monotonic growth ensures that each state has a unique critical cost threshold, allowing us to determine the optimal server by comparing the current cost against these thresholds. 
However, the indexability analysis becomes more complex due to two factors: \textit{Multi-action choices} and \textit{Heterogeneous servers}. Specifically, each user can choose from multiple servers and each server has distinct activation costs \(\nu_m\), which complicate the definition of indexability:
\begin{definition}[Intra-Indexability]\label{Def4}
\textit{In a Multi-Layer MDP} $\langle{\mathcal{S},\mathcal{A}, \mathcal{P}, \mathcal{C},L}\rangle$, given servers cost $\bm{\nu}$, if for any layer $l$, the cardinality of passive set $|\mathcal{P}_{nm}^l(\bm{\nu})|$ increases monotonically to the cardinality $|\mathcal{S}_l|$ of layer $l$ as cost $\nu_m$ for server $m$ increases from $0$ to $+\infty$, then this multi-layer MDP is intra-indexable.
\end{definition}

Using intra-indexability, we analyze how server costs relate to optimal server selection at each layer. By leveraging the monotonic increase in passive set size as costs rise, we determine scheduling thresholds to decide when to offload tasks to each server. This allows us to dynamically optimize task offloading policies, minimizing AoI in the system.
Given layer $l$, there exists the largest server cost $\nu_m'$. Beyond this threshold, state $s_n(t)$ is no longer included in the passive set $\mathcal{P}^l_{nm}(\bm{\nu})$. The monotonic property ensures that this critical cost is uniquely determined. Moreover, $|\mathcal{P}_{nm}(\bm{\nu})|$ is non-decreasing in $\bm{\nu}$ if $|\mathcal{P}_{nm}^l(\bm{\nu})|$ is non-decreasing in $\bm{\nu}$, for each $1\le l\le L$.

\subsection{Nested Index Policy Under Preemptive Structure}
In the research of the MEC system, to solve the optimization problem of sub-problem \eqref{originrelax}, we introduce the Bellman equation. 

\subsubsection{The Bellman Equation and MLTT Structure of Preemptive Scheduling}
The Bellman equation for each sub-problem  (\ref{decoupled_problem}) can be formulated as:
\begin{equation}
\begin{aligned}
        &\gamma_n^*+V_n(s, \bm{\nu})= \\&\min\limits_{m\in\mathcal{M}, l\in \mathcal{L}}\left[ C_n(s,m)+\sum\limits_{s'\in\mathcal{S}_l}q^{ss'}_{nm}V_n(s',\bm{\nu})\right]
\end{aligned}\label{BE}
\end{equation}
where \(\gamma_n^*\) is the optimal average cost per stage. The right side of the equation indicates that for all possible servers $m$ and all possible layers $l$, decision-making choices are made to find the server and layer that minimize the overall cost.

This Bellman equation can be adjusted in real-time according to the changes of the system state, which  fits well with the preemptive condition at both layers. The multi-dimensional state space and action set in our framework distinguish it from Whittle's index, motivating the development of a hierarchical indexing structure to capture complex task-server interactions.

Deriving the optimal server selection at layer $l$ via Bellman Equation \eqref{BE} is computationally challenging. However, we can conclude that due to the multi-layer nature of the MDP, each sub-problem \eqref{originrelax} exhibits a key structural property in its optimal solution ------ \textit{Multi-Layer-Threshold Type} (MLTT) structure. 
We define the critical AoI thresholds  as minimum current information age at which user $n$ selects server $m$ over $m'$ during the task execution with an age of $D$, denote as $H_n(m,m',D)$.

Denote $\Tilde{m}=\mathop{\arg\max}_{m\in\mathcal{M}} p_m$ as the index of the tentatively optimal server. Denote $p_{\pi_n(s(t))}$ as the task completion probability of the server selected by the scheduling policy $\pi_n$, which directly reflects the priority of task processing. Then, the definition of MLTT is given:

\begin{definition}[Multi-Layer-Threshold Type] If the following two conditions hold, then the offloading policy $\pi_n$ of the sub-problem \eqref{originrelax} has the Multi-Layer-Threshold Type (MLTT) structure:
\begin{enumerate}
    \item If user $n$ is at Layer 1 and the state is $s_n(t)=(\Delta_n(t)=A)$:
\begin{itemize}
    \item for any server $m'\neq \Tilde{m},m'\in\mathcal{M}\cup\{0\}$, there exists $H_n(m',\Tilde{m},0)$, $\mathrm{s.t.}$ when  $A\ge\max_{m'}H_n(m',\Tilde{m},0)$, $\pi_n(s_n(t))= \Tilde{m}$. 

    \item for any two states $s_n(t)\text{ and } s'_n(t)$ satisfying $\Delta_n(t)<\Delta'_n(t)$, there is $p_{\pi_n(s(t))}\le p_{\pi_n(s'(t))}$;
\end{itemize}

\item If user $n$ is at Layer 2 and the state is $s_n(t)=(\Delta_n(t)=A,D_n(t)=D)$:
\begin{itemize}
    \item given $D$, there exists $H_n(m,m',D)$, 
 $\mathrm{s.t.}$ when $A\ge H_n(m,m',D)$, $\pi_n(s_n(t))= m'$;
    \item for any two states $s_n(t) \text{ and } s'_n(t)$ satisfying $D_n(t)=D'_n(t) \text{ and } \Delta_n(t)<\Delta'_n(t)$, there is $p_{\pi_n(s(t))}\le p_{\pi_n(s'(t))}$;
\end{itemize}
\end{enumerate}

\end{definition}
The MLTT structure shows some common properties for optimal thresholds at both layer $1$ and layer $2$. If the user is at layer $1$, the MLTT property is equal to the Multi-Threshold Type (MTT) property in \cite{zou2021minimizing}. Since there are no ongoing tasks in Layer 1, the decision only depends on the current age $A$, without the need to consider the age $D$ at the time of task generation. Therefore, the threshold parameter is $0$. At layer $1$, the user is in a task-free state and needs to select the optimal one from all possible servers. Therefore, the thresholds of all non-optimal servers need to be covered. For Layer 2, a similar conclusion holds when the generation age \(D_n(t)\) is fixed. At Layer 2, the user is in the task execution state, and the decision-making is centered around the status of the current task. Only the comparison of specific server pairs is required, and there is no need for a global maximum. The proposition proposes that a threshold exists for a user when choosing between two servers, and the threshold depends only on the current age of the user given the same age at generation. Under preemptive scheduling, the MLTT structure takes on a distinct role. Here, the thresholds not only guide server selection but also determine task interruption points to minimize AoI. This decision mechanism leverages the unique flexibility of preemptive scheduling. 

\subsubsection{Bound of Cost-to-go Function Under Preemptive Scheduling}
To characterize the properties of the cost-to-go function, we introduce the following lemma. For simplicity, the order of $p_m$ is specified as $p_{m-1}\le p_m,\forall 1<m\le M$. 
Under the preemptive manner, we have:
\begin{lemma}[Bound of Preemptive Cost-to-go Function]\label{upperbound}
    Let $\bm{\nu}$ and $\bm{\nu}'$ be two activating cost vectors that differ only in entry by $\Delta$, i.e. state $s=(A,D)$ at layer 2. Denote $\bm{\nu}'=[\nu_1,\cdots,\nu_m+\Delta,\cdots,\nu_M],\ \forall \Delta \ge 0$. The difference between two cost-to-go functions given $\bm{\nu}$ and $\bm{\nu}'$ can be upper-bounded by
    \begin{equation}
        V_n(s,\bm{\nu}')-V_n(s,\bm{\nu})<\frac{\Delta}{p_m^2},\forall 1\le m\le M.
    \end{equation}
The difference between two cost-to-go function can be lower-bounded by
    \begin{equation}
    \begin{aligned}\label{pM+1}
         &V_n(s,\bm{\nu}')-V_n(s,\bm{\nu})>-\frac{\Delta}{p_{m+1}^2},
         \\& \forall 1\le m\le M, A\ge H_n(m,m+1,D).
    \end{aligned}
    \end{equation}
\end{lemma}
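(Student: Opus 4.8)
The plan is to read the statement as a sensitivity (perturbation) bound for the differential cost-to-go of the average-cost MDP defined by the Bellman equation \eqref{BE}, and to prove it by combining a sample-path coupling argument (for intuition and the leading estimate) with a value-iteration induction (to make the bound survive the minimization over heterogeneous servers). First I would fix a recurrent reference state $s_0$ (say a low-age state at Layer 1), normalize $V_n(s_0,\bm{\nu})=V_n(s_0,\bm{\nu}')=0$, and use the renewal representation $V_n(s,\bm{\nu})=\mathbb{E}^{\pi_n^*}\!\left[\sum_{k=0}^{T_0-1}\bigl(C_n(s_k,a_k)-\gamma_n^*\bigr)\mid s_0=s\right]$, where $T_0$ is the first hitting time of $s_0$. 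Writing $W(s)\triangleq V_n(s,\bm{\nu}')-V_n(s,\bm{\nu})$, the entire lemma reduces to two-sided control of $W$ at the Layer-2 state $s=(A,D)$.

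For the upper bound I would apply the $\bm{\nu}$-optimal policy (which is of MLTT type) to the system run under the perturbed costs $\bm{\nu}'$; by optimality this over-estimates $V_n(s,\bm{\nu}')$. Coupling the two systems so that they share the same completion randomness forces identical age trajectories under the common policy, so the accumulated cost differs \emph{only} by the extra $\Delta$ charged on every slot in which server $m$ is selected. Discarding the non-positive centering term $-\bigl(\gamma_n^*(\bm{\nu}')-\gamma_n^*(\bm{\nu})\bigr)\mathbb{E}[T_0]$ keeps the inequality in the right direction, giving $W(s)\le \Delta\,\mathbb{E}\bigl[\#\{k<T_0:a_k=m\}\bigr]$. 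Because the task duration on server $m$ is geometric with parameter $p_m$, one execution contributes expected $1/p_m$ selections, so the leading estimate is $\Delta/p_m$. To make this rigorous across servers I would instead verify, by induction on the value-iteration index, the recursion $W^{(k+1)}(A,D)=\Delta+(1-p_m)W^{(k)}(A+1,D)+p_m W^{(k)}(s')$ whenever the optimal action at $(A,D)$ continues on $m$, and adopt the uniform invariant $\Delta/p_m^2$ (which dominates $\Delta/p_m$ since $p_m\le 1$) precisely because the min-over-servers in \eqref{BE} can route $W^{(k)}(A+1,D)$ and $W^{(k)}(s')$ through states whose optimal server differs from $m$, breaking the single-parameter geometric sum.

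For the lower bound I would run the reverse comparison, applying the $\bm{\nu}'$-optimal policy to the unperturbed system, so that $W(s)\ge -\Delta\,\mathbb{E}\bigl[\#\{k<T_0:a_k=m\}\text{ under }\pi^*_{\bm{\nu}'}\bigr]$. The hypothesis $A\ge H_n(m,m+1,D)$ is exactly what guarantees that once $\nu_m$ is raised the policy substitutes the next-fastest server $m+1$ for $m$ at state $s$: by the definition of the critical threshold $H_n(m,m+1,D)$, the ordering $p_{m-1}\le p_m$, and the intra-indexability of Definition \ref{Def4}, the optimal action can only migrate to the neighboring server rather than a slower one. Consequently the excursion governing the negative part of $W(s)$ is controlled by the completion rate $p_{m+1}$, and repeating the geometric-series/renewal estimate with $p_{m+1}$ in place of $p_m$ yields $W(s)>-\Delta/p_{m+1}^2$.

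The main obstacle is propagating these bounds through the minimization over heterogeneous servers in \eqref{BE}: a state may switch its optimal server under the perturbation, so the clean single-$p_m$ (resp. single-$p_{m+1}$) geometric cancellation does not close directly, and I must show that $\Delta/p_m^2$ (resp. $-\Delta/p_{m+1}^2$) is an invariant fixed by the value-iteration operator. Making this precise requires controlling the average-cost shift $\gamma_n^*(\bm{\nu}')-\gamma_n^*(\bm{\nu})$, bounding the coupling time $T_0$, and handling the Layer-1/Layer-2 boundary state $s'$ reached on completion; the ordering $p_{m-1}\le p_m$ together with the indexability result is what confines every induced policy switch to the adjacent server and thereby keeps the two geometric factors intact.
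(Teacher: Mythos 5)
Your overall architecture is the same as the paper's: both arguments rest on the stochastic-shortest-path / renewal representation of the relative value function anchored at a recurrent reference state, both obtain the upper bound by running the $\bm{\nu}$-optimal policy on the perturbed system (the coupling makes the cost difference exactly $\Delta$ times the expected number of server-$m$ selections before hitting the reference state, and the average-cost shift term is dropped because $\gamma_n^*(\bm{\nu}')\ge\gamma_n^*(\bm{\nu})$), and both obtain the lower bound from the reverse policy exchange together with the threshold hypothesis $A\ge H_n(m,m+1,D)$ and the bound $\gamma_n^*-\gamma_n^{*\prime}\ge-\Delta$.

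However, there is a genuine gap at the single step that carries all of the quantitative content: you never validly establish that the expected number of server-$m$ selections in a renewal cycle is at most $1/p_m^2$. Your per-execution estimate gives only $1/p_m$, and you correctly observe that a cycle can contain several executions; but the repair you propose --- a value-iteration induction with the \emph{uniform} invariant $\Delta/p_m^2$ --- does not close arithmetically. Plugging the invariant into your own recursion gives
\begin{equation*}
W^{(k+1)}(A,D)\;=\;\Delta+(1-p_m)W^{(k)}(A+1,D)+p_m W^{(k)}(s')\;\le\;\Delta+\tfrac{\Delta}{p_m^2}\;>\;\tfrac{\Delta}{p_m^2},
\end{equation*}
so $\Delta/p_m^2$ is \emph{not} a fixed point of the operator: a constant bound cannot absorb the fresh $+\Delta$ charged at every slot in which server $m$ is used. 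The count you need is intrinsically a renewal quantity (finite only because the value is pinned to zero at the recurrent state $\zeta$), not something a state-uniform invariant can capture. The paper closes exactly this hole differently: it constructs an explicit dominating policy $\pi_n'$ that selects server $m$ on a \emph{superset} of the states where the MLTT-optimal policy does (Proposition \ref{mltt} guarantees the threshold structure making this a superset), so the optimal policy's server-$m$ count is bounded by $N_{s\zeta}^{\pi_n'}$, and then computes that count in closed form from the geometric recursion $N_{s\zeta}^{\pi_n'}=p_m+\sum_{i\ge 2}p_m(1-p_m)^{i-1}\bigl(i+N_{s\zeta}^{\pi_n'}\bigr)$, which solves to exactly $1/p_m^2$. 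Without either this dominating-policy construction or a state-dependent (renewal-aware) induction hypothesis, your upper bound stops at $\Delta/p_m$ per execution and the claimed constant $\Delta/p_m^2$ is unproven; the same deficiency then propagates to your lower bound, which reuses the identical counting estimate with $p_{m+1}$ in place of $p_m$.
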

Actually, we denote $p_{M+1}=1$ here to make mathematical formulations consistent and simplify the boundary conditions. It enables the inequality in Eq. (\ref{pM+1}) to be handled without the need to separately deal with the case of $m=M$. When $m=M$, $p_{M+1}=1$ makes the lower bound expression $-\frac{\Delta}{p_{m+1}^2}$ take the form of $-\Delta$, which ensures that the lower bound is still valid and mathematically consistent even for the last server. Lemma 2 reveals the cost-to-go function bounds in the preemptive case. Following this, Proposition \ref{mltt} delves into the optimal solution of sub-problem (\ref{originrelax}). The bounds from Lemma \ref{upperbound} are closely linked to determining this optimal solution, as they shape task offloading decisions.

\begin{proposition}\label{mltt}
The optimal solution $\pi_n^*$ to the sub-problem \eqref{originrelax} is MLTT.
\end{proposition}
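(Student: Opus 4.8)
The plan is to argue directly from the average-cost Bellman equation \eqref{BE}, using the deterministic stationary optimality granted by Lemma \ref{detsta}, and to characterize the minimizing server in \eqref{BE} as a threshold function of the current age $A$. Concretely, I would realize the differential cost-to-go $V_n$ as the limit of the value-iteration sequence obtained by repeatedly applying the operator on the right-hand side of \eqref{BE}, and prove by induction on the iteration index two structural properties that are preserved in the limit: (i) $V_n$ is non-decreasing in $A$ at both layers (for fixed $D$ at Layer~2); and (ii) the marginal increment of $V_n$ in $A$ along the Layer-2 continuation branch dominates the marginal increment of the Layer-1 value evaluated at the corresponding completed age. Property (i) is the routine monotonicity of AoI value functions, whereas property (ii) is the mechanism that actually generates the thresholds.

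The Layer-1 case is the easier half. There the state is the scalar $A$ and the transitions \eqref{threecondition1}--\eqref{threecondition2} are deterministic, so the server decision reduces to the one-dimensional comparison already resolved by the Multi-Threshold Type argument of \cite{zou2021minimizing}. Combining that argument with monotonicity~(i) yields the threshold $H_n(m',\tilde m,0)$, the selection of the fastest server $\tilde m$ once $A\ge\max_{m'}H_n(m',\tilde m,0)$, and the monotone-server property $p_{\pi_n(s)}\le p_{\pi_n(s')}$ whenever $\Delta_n<\Delta'_n$.

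The core of the proof is Layer~2. For a state $s=(A,D)$ with $A-D>\tau_n^{\min}$, the completion transition \eqref{eq:tran_prob_b} maps the system to the Layer-1 age $A-D+1$, while the continuation branch returns $(A+1,D)$, so that from \eqref{bellman} and \eqref{costfunction},
\begin{equation}
\begin{aligned}
\mu_{nm}((A,D),\bm{\nu})={}&A+\nu_m+p_m V_n(A-D+1,\bm{\nu})\\
&+(1-p_m)V_n((A+1,D),\bm{\nu}).
\end{aligned}
\end{equation}
For two servers with $p_{m'}>p_m$ the pairwise difference collapses to
\begin{equation}
\mu_{nm'}-\mu_{nm}=(\nu_{m'}-\nu_m)+(p_{m'}-p_m)\,W(A),
\end{equation}
where $W(A)\triangleq V_n(A-D+1,\bm{\nu})-V_n((A+1,D),\bm{\nu})$. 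Since the completed age $A-D+1$ is small while the continuation age $A+1$ is large, $W(A)<0$, and property~(ii) makes $W(A)$ non-increasing in $A$. Hence $\mu_{nm'}-\mu_{nm}$ is single-crossing in $A$: once a faster server becomes preferable it stays preferable. This simultaneously establishes the existence of the threshold $H_n(m,m',D)$ and the monotonicity $p_{\pi_n(s)}\le p_{\pi_n(s')}$ for fixed $D$. I would then invoke the bounds of Lemma \ref{upperbound} to control the cost-to-go differences entering $W(A)$, ensuring that $(p_{m'}-p_m)|W(A)|$ eventually dominates the fixed gap $\nu_{m'}-\nu_m$ so that each threshold is finite, and that the thresholds of successive server pairs are consistently ordered.

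The main obstacle is establishing property~(ii)---equivalently, the monotonicity of $W(A)$---because the completion transition couples the two layers: the induction must propagate a convexity-type comparison between a Layer-1 value taken at the \emph{small} argument $A-D+1$ and a Layer-2 value taken at the \emph{large} argument $A$, and show the value-iteration operator preserves this relation through both the completion and continuation branches. Verifying that the $\min$ over servers in \eqref{BE} does not destroy the relation, and separately treating the boundary regime $A-D\le\tau_n^{\min}$ governed by the forced transition \eqref{threecondition3}, are the delicate points a complete proof must handle with care.
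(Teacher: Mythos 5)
Your proposal follows essentially the same route as the paper's own proof. The paper's Appendix B likewise writes the pairwise expected-cost difference as a constant plus $(p_m - p_{m'})$ times a difference of cost-to-go values, and your structural properties (i) and (ii) are exactly the paper's Lemma 5 (Appendix A): monotonicity of $V_n$ in $A$, and the inequality $V((A,D),\bm{\nu}) - V((A-D),\bm{\nu}) \le V((A',D),\bm{\nu}) - V((A'-D),\bm{\nu})$ for $A \le A'$, both proven there by the same induction on value iteration (via Proposition 3.1 of \cite{jiang2015approximate}). Your single-crossing argument for $W(A)$ at Layer 2, and the reduction of Layer 1 to the MTT argument of \cite{zou2021minimizing} combined with monotonicity, appear in the paper in nearly identical form, so the core of your plan is sound and matches the intended proof.

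Two genuine gaps remain. First, your appeal to Lemma \ref{upperbound} to guarantee finiteness of the thresholds is the wrong tool: that lemma bounds $V_n(s,\bm{\nu}')-V_n(s,\bm{\nu})$ when the \emph{cost vector} is perturbed by $\Delta$ in a single coordinate; it gives no control over how $W(A)$ behaves as $A$ grows, so it cannot show that $(p_{m'}-p_m)\lvert W(A)\rvert$ eventually exceeds the fixed gap $\nu_{m'}-\nu_m$. What is needed (and what the paper supplies) is a two-limit sign analysis: for small $A$ the cost gap $\nu_{m'}-\nu_m$ dominates, while as $A\to\infty$ the accumulated age cost along the continuation branch diverges, so the difference changes sign and discreteness of the state space yields a smallest crossing point. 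Second, your argument covers only the preemptive dynamics, where Layer-2 actions include switching servers. The paper's proof additionally treats the non-preemptive case, in which Layer-2 actions collapse to ``continue on the current server'' versus ``drop the task'' (a virtual server $0$), and establishes the threshold $H_n(m,0,D)$ by showing that $\mu_{nm}(A,D,\bm{\nu})-\mu_{n0}(A,D,\bm{\nu})$ is strictly decreasing in $A$ with a sign change. Since Theorem \ref{intraindex} invokes the MLTT structure under both scheduling modes, this half of the claim cannot be omitted from a complete proof.
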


The proof of proposition \ref{mltt} is shown in Appendix \ref{ap1}. Fig. \ref{fig_mdpcost} illustrates the relationship between the \textit{nested index} and the MDP. Fig. \ref{fig_first_case} compares the decision of offloading a task to server $m$ at neighbor time slots for user $n$ at layer 1. In the figure, $d$ represents the specific numerical value of AoI of user $n$.
The optimal threshold $H_n(m,m',0)$ decreases as the server cost $\nu_m$ increases from 0 to $\infty$. 
For a given state \(s_{n}(t)=\Delta_{n}(t)\), there exists a minimum value for \(\nu_{m}\), which is precisely the minimum value of \(\nu_{m}\) corresponding to the case when offloading the task to server $m$ becomes the optimal decision.
Fig. \ref{fig_2_case} shows the nested index in preemptive condition, assuming that the generation age $D_n(t)$ is fixed. 
Here, the nested index quantifies the urgency of task offloading by adjusting the server cost \(\nu_m\), thereby achieving the optimality.

\begin{figure}[!t]
\centering
\subfloat[state transition Markov Chain for adopting action $y_{nm}(t)=1$ at layer $1$, with nested index defined by cost $\nu_m$ yielding threshold $H_n(m,m',0)=d+2$.]{\includegraphics[width=2.2in]{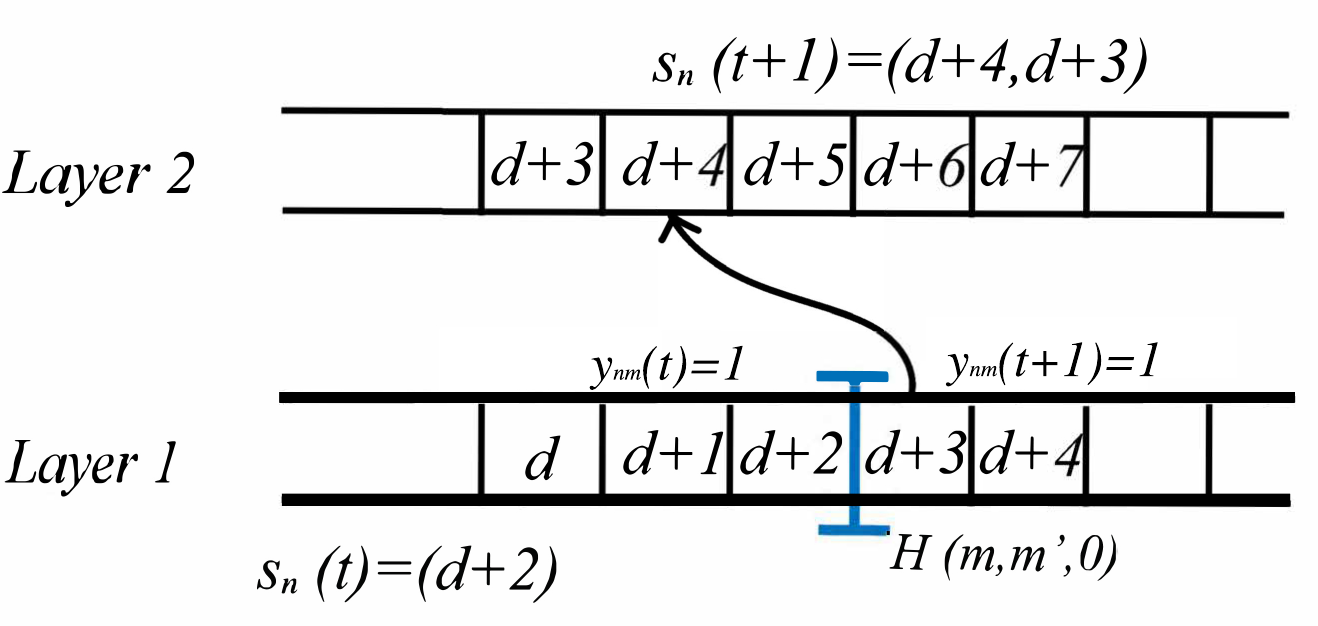}
\label{fig_first_case}}
\hfil
\subfloat[state transition Markov Chain for adopting action $y_{nm}(t)=1$ at layer $2$,  with nested index defined by cost $\nu_m$ yielding threshold $H_n(m,m',d+3)=d+6$.]{\includegraphics[width=2.4in]{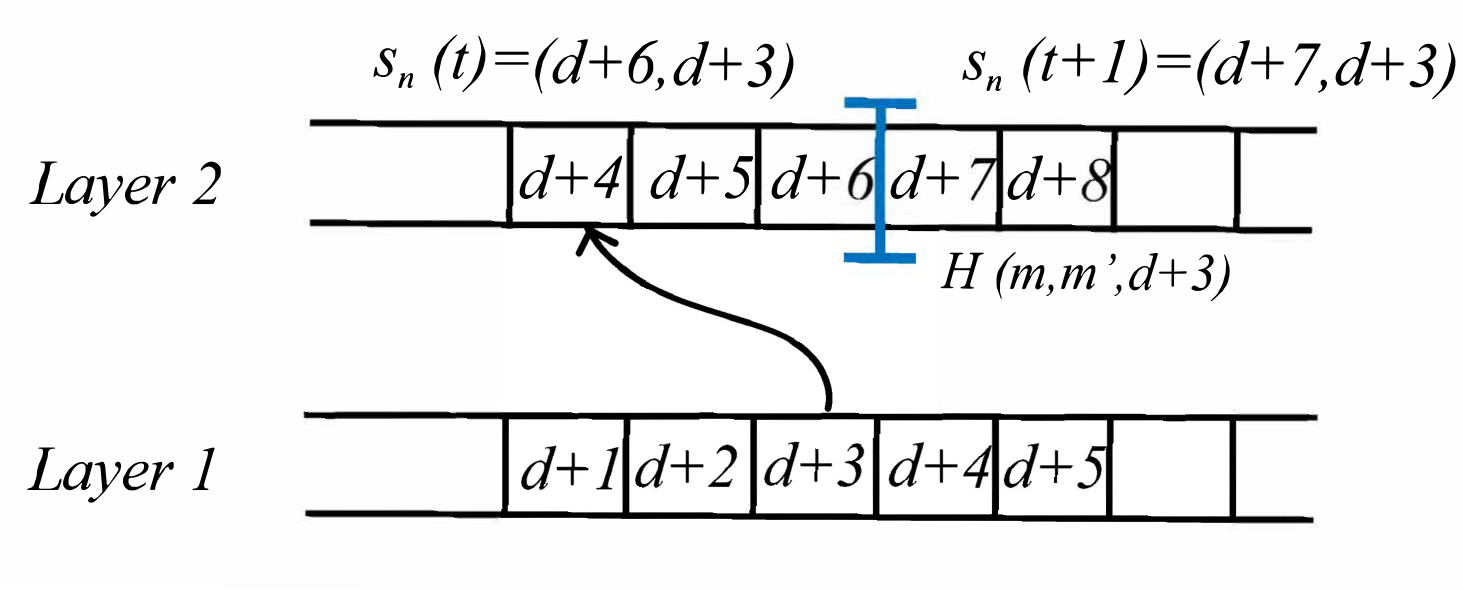}
\label{fig_2_case}}

\caption{The relationship between the \textit{nested-index} and the threshold in Markov Chain.}
\label{fig_mdpcost}
\end{figure}
\vspace{-1em}






\subsection{Nested Index Policy Under Non-Preemptive Structure}





While the previous subsection established the nested index for preemptive systems, non-preemptive task execution imposes distinct operational constraints. 
In the preemptive scheduling, the action of Layer 2 includes  ``interrupt the current task" and ``switch to another server". However, in the non-preemptive mode, the task cannot be interrupted. 

\subsubsection{The Bellman Equation and MLTT Under Non-Preemptive Structure}
The actions of Layer 2 can only be 2 types: 1) Continue to execute on the current server, denoted as $y_{nm}(t)=1$, where $m$ is the current server; 2) Abandon the current task, denoted as $y_{nm}(t)=0$. The way of transforming the action of Layer 2 from ``Selecting any server" to ``Continue on the current server" or ``Abandon the task" ensures the ``non-interruptible" constraint in the non-preemptive mode.

To address this, we reclaim the definition of actions at layer $2$ for non-preemptive conditions. 
For users at layer $1$, $y_n(t)$ represents the edge server to be chosen; for users at layer $2$, $y_{nm}(t)=1$ represents continuing execution at the server $m$ while any other action means to drop the task. The Bellman equation under the restatement is as follows:
\begin{equation}
    \gamma_n^*+V_n(s, \bm{\nu})=C_n(s,m)+\sum\limits_{s'\in\mathcal{S}_l}q^{ss'}_{nm}V_n(s',\bm{\nu}),
\end{equation}
where the summation is for the state $s'$ of the specific layer $\mathcal{S}_l$. This formulation captures the state transition dynamics and value propagation under non-preemptive constraints. Although the Bellman equations of the two scheduling mechanisms are similar in mathematical form, the non-preemptive scheduling incorporates the non-preemptive constraints (\ref{probability_group}) into the Bellman equation through $q^{ss'}_{nm}$.

For non-preemptive systems, the equilibrium condition is derived as:
\begin{equation}\label{beequal}
\begin{aligned}
&\ \ \ \ \ C_n(s,m')+\sum\limits_{s'\in\mathcal{S}_l}q^{ss'}_{nm'}V_n(s',\bm{\nu})\\&=C_n(s,m'')+\sum\limits_{s'\in\mathcal{S}_l}q^{ss'}_{nm''}V_n(s',\bm{\nu}),\ \forall m',m''\neq m.
\end{aligned}
\end{equation}
Eq. \eqref{beequal} indicates that for any two servers \(m'\) and \(m''\), their expected costs in state $s$ are the same. 
It should be noted that Eq. (\ref{beequal}) is not applicable to preemptive scheduling. In preemptive scheduling, tasks can be interrupted and reassigned to other servers, and this flexibility makes the decision-making process more complex. The state transition and decision constraints are significantly different from those in non-preemptive scheduling. The Bellman equation of preemptive scheduling needs to be adjusted in real time according to the changes in the system state, and its decision objective (Eq. (\ref{BE})) is to find the server and layer that minimize the overall cost among all possible servers $m$ and layers $l$. This is essentially different from the equality of expected costs of different servers in a non-preemptive system, as described by Eq. (\ref{beequal}).

In the MLTT property, different actions have their respective optimal thresholds related to the age of users, and these thresholds are determined by the layer $l$ and the generating age \(D_n(t)\). However, the non-preemptive condition has a more simple structure. In a generate-at-will model, it is always optimal to wait until computation finishes. By utilizing the MLTT property of the solution to problems in MEC systems, we can show the intra-indexability of the sub-problem \eqref{originrelax}.

\subsubsection{Bound of Cost-to-go Function Under Non-preemptive Scheduling}
In non-preemptive scheduling, once a task starts execution, it cannot be switched to another server. Therefore, the immediate cost before completion remains the same. However, when the server cost changes, the user at layer $1$ may choose a different server from the very start, which causes a maximum gap of $\frac{\Delta}{p_m}$. Therefore, under the condition in lemma \ref{upperbound}, we have:
\begin{lemma}[Bound of Non-preemptive Cost-to-go Function]\label{lemma3}
     The difference between two cost-to-go function under non-preemptive scheduling can be lower-bounded by
    \begin{equation}
         V_n(s,\bm{\nu}')-V_n(s,\bm{\nu})<\frac{\Delta}{p_{m}},\forall 1\le m\le M.
    \end{equation}
\end{lemma}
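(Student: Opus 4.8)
The plan is to control the perturbation of the differential cost-to-go by exploiting the one structural fact established just above for the non-preemptive regime: at Layer~2 the optimal action is always to continue on the committed server $m$ (waiting until completion is optimal in the generate-at-will model), so the only decision that the cost $\nu_m$ can influence is the Layer-1 offloading choice. Since $\bm{\nu}'$ and $\bm{\nu}$ induce the \emph{same} transition kernel $q^{ss'}_{nm}$ and differ only by $+\Delta$ in the per-slot cost charged whenever server $m$ is used, I would first upper-bound $V_n(s,\bm{\nu}')-V_n(s,\bm{\nu})$ by evaluating the $\bm{\nu}$-optimal stationary policy $\pi^*$ under the perturbed cost $\bm{\nu}'$. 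Because $V_n(\cdot,\bm{\nu}')$ is the minimum over all policies, this substitution can only raise the cost; and under a single fixed policy the two trajectories coincide, so the resulting gap equals $\Delta$ times the expected number of slots on which $\pi^*$ uses server $m$, measured net of the stationary usage rate $f_m$ (the stationary part is exactly the shift $\gamma_n^*(\bm{\nu}')-\gamma_n^*(\bm{\nu})=\Delta f_m$ that the average-cost normalization removes from the differential value).

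Next I would evaluate this net server-$m$ occupancy starting from the Layer-2 state $s=(A,D)$. The non-preemptive constraint forces the recursion $V_n(A,D)+\gamma_n^*=(A+\nu_m)+(1-p_m)V_n(A+1,D)+p_mV_n(s_{\mathrm{done}})$, which I would unroll over the geometrically distributed residual service time. Along this in-progress epoch every slot is spent on server $m$ (usage rate $1$) for an expected $1/p_m$ slots, so its contribution to the differenced value is $\Delta(1-f_m)/p_m$; upon completion the chain regenerates to Layer~1 at a state that is identical under both cost vectors.

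Finally I would show the post-completion future adds nothing to the bound. Raising $\nu_m$ can only make server $m$ less attractive at Layer~1, so by the MLTT / intra-indexability monotonicity the $\bm{\nu}'$-policy's downstream use of $m$ is no greater than its stationary rate and the corresponding net occupancy is non-positive. Combining the in-progress and downstream contributions gives $V_n(s,\bm{\nu}')-V_n(s,\bm{\nu})\le \Delta(1-f_m)/p_m<\Delta/p_m$, with strictness coming from $f_m>0$. I expect the main obstacle to be precisely this average-cost bookkeeping: rigorously neutralizing every \emph{later} execution epoch against the $\gamma_n^*$ normalization so that only the in-progress epoch survives in the differential value. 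I would discharge it through the regenerative representation of the relative value function (anchoring $V_n(\cdot,\bm{\nu})$ and $V_n(\cdot,\bm{\nu}')$ at a common Layer-1 reference) together with the Layer-1 monotonicity in $\nu_m$; a short separate check on the completion-free minimum-computation slots $\tau_n^{\min}$ and on the Layer-2 drop option confirms that the residual occupation never exceeds $1/p_m$ in the threshold regime of interest.
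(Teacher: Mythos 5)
Your proposal is correct in substance and takes essentially the same route as the paper: the paper's entire justification of Lemma 3 is precisely your occupancy argument --- under non-preemption the perturbation $\Delta$ is charged only over the single committed service epoch on server $m$, of expected length $1/p_m$ by the shifted-geometric service model, while the Layer-1 decision simply re-optimizes "from the very start" --- and your fixed-policy substitution with regenerative/SSP anchoring at a common Layer-1 reference is exactly the machinery the paper itself uses to prove the preemptive counterpart (Lemma 2, carried out as Lemma 6 in Appendix C), so your write-up is if anything more explicit than the paper's about the average-cost bookkeeping. Two cautions for the final write-up: do not invoke intra-indexability here, since the paper derives Theorem 1 (intra-indexability) \emph{from} Lemmas 2 and 3 and that appeal would be circular (the Layer-1 monotonicity you need already follows from the MLTT property, Proposition 1, which is proved independently in Appendix B); and the identity $\gamma_n^*(\bm{\nu}')-\gamma_n^*(\bm{\nu})=\Delta f_m$ holds with equality only when the optimal policy is unchanged by the perturbation --- in general it is only an upper bound on $\gamma_n^*(\bm{\nu}')-\gamma_n^*(\bm{\nu})$, so the downstream-neutralization step should be phrased via the renewal-reward identity under the \emph{fixed} $\bm{\nu}$-optimal policy rather than as a claim that post-completion usage of $m$ never exceeds its stationary rate (which can fail, e.g., when $m$ is the fastest server and the post-completion age immediately triggers re-offloading to $m$).
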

The boundary constraints of Lemma \ref{lemma3} ensure the stability of the threshold in the non-preemptive mode. By limiting the variation range of the cost function, the existence of the MLTT structure in the non-preemptive mode is guaranteed, that is, there exists a unique threshold $H_n(m,0,D)$. When the AoI exceeds this threshold, the expected cost of continuing to execute the current server is lower than that of abandoning the task. This structure corresponds to the structure of the preemptive scheduling MLTT, and they jointly support the nested index to achieve near-optimal decisions through the threshold strategy in both types of scenarios, enabling the nested index structure to uniformly handle the heterogeneous constraints of preemptive and non-preemptive scenarios.
The difference between \ref{upperbound} and Lemma \ref{lemma3} reflects the uniqueness of the non-preemptive mode. And these lemmas state the relationship between the server cost and the optimal state to offload a task to the corresponding server. The indexability characterizes the positive correlation between the server cost and the size of the passive set. Therefore, we can derive the intra-indexability by contradiction.
\begin{theorem}\label{intraindex}
The MDP sub-problem \eqref{originrelax} is intra-indexable given cost $\bm{\nu}$.
\end{theorem}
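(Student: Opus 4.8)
The plan is to verify Definition \ref{Def4} directly: for every layer $l\in\{1,2\}$ and every server $m$, I must show that the passive set $\mathcal{P}_{nm}^l(\bm{\nu})$ only gains states (never loses one) as $\nu_m$ grows, and that it eventually fills all of $\mathcal{S}_l$. Since membership $s\in\mathcal{P}_{nm}^l(\bm{\nu})$ is equivalent to the sign condition $G_m(s,\bm{\nu})\triangleq\mu_{nm}(s,\bm{\nu})-\min_{m'\neq m}\mu_{nm'}(s,\bm{\nu})\ge 0$, monotone growth of the passive set follows once I prove that $G_m(s,\bm{\nu})$ is non-decreasing in $\nu_m$ for each fixed state $s$. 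This single-entry monotonicity is the quantity I would focus on.

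First I would isolate the $\nu_m$-dependence. Raising the $m$-th entry from $\nu_m$ to $\nu_m'=\nu_m+\Delta$ with $\Delta>0$ and writing $\delta(s')\triangleq V_n(s',\bm{\nu}')-V_n(s',\bm{\nu})$, the cost definition $C_n(s,m)=\Delta_n(t)+\nu_m$ in \eqref{costfunction} gives $\mu_{nm}(s,\bm{\nu}')-\mu_{nm}(s,\bm{\nu})=\Delta+\sum_{s'}q^{ss'}_{nm}\delta(s')$, whereas for any competitor $m'\neq m$ the direct term is unchanged, so $\mu_{nm'}(s,\bm{\nu}')-\mu_{nm'}(s,\bm{\nu})=\sum_{s'}q^{ss'}_{nm'}\delta(s')$. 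Fixing $\hat m$ as the minimizing competitor at the lower cost $\bm{\nu}$ and using $\min_{m'\neq m}\mu_{nm'}(s,\bm{\nu}')\le\mu_{n\hat m}(s,\bm{\nu}')$, the increment $G_m(s,\bm{\nu}')-G_m(s,\bm{\nu})$ is bounded below by the single inequality $\Delta+\sum_{s'}q^{ss'}_{nm}\delta(s')-\sum_{s'}q^{ss'}_{n\hat m}\delta(s')\ge 0$, which I denote $(\star)$; once $(\star)$ holds we obtain $G_m(s,\bm{\nu}')\ge G_m(s,\bm{\nu})\ge 0$, i.e.\ $s$ stays in the passive set.

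To control the two weighted $\delta$-sums I would invoke Lemma \ref{upperbound} in the preemptive case (and Lemma \ref{lemma3} in the non-preemptive case) together with the MLTT structure of Proposition \ref{mltt}. The role of MLTT is to let me take the relevant competitor $\hat m$ adjacent to $m$ in the $p_m$-ordering, so the two-sided bounds $-\Delta/p_{m+1}^2<\delta(s')<\Delta/p_m^2$ apply precisely on the successor states actually reached under $m$ and $\hat m$; moreover, the kernels for $m$ and an adjacent $\hat m$ share the ``task-not-completed'' successor and differ essentially on the completion branch, which I would exploit to obtain a cancellation between the two $\delta$-sums rather than summing their worst cases. After this cancellation I expect the surviving contribution to be dominated by the explicit $+\Delta$ term, establishing $(\star)$. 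For exhaustion I would note that $\mu_{nm}(s,\bm{\nu})\ge\nu_m\to\infty$ while every competing $\mu_{nm'}$ stays bounded, since an optimal policy can simply avoid the now-expensive server $m$; hence every state of $\mathcal{S}_l$ is eventually absorbed and $|\mathcal{P}_{nm}^l(\bm{\nu})|\to|\mathcal{S}_l|$.

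The non-preemptive case would then follow as a specialization: at layer $2$ the only actions are ``continue on $m$'' or ``drop'' (competitor $m'=0$), so the comparison is one-sided and the single bound of Lemma \ref{lemma3} suffices, producing the unique threshold $H_n(m,0,D)$, while at layer $1$ the argument is identical to the preemptive layer-$1$ analysis. I expect the main obstacle to be exactly the inequality $(\star)$: the worst-case forms of the Lemma \ref{upperbound}/\ref{lemma3} bounds do not individually close it, since $1/p_m^2+1/p_{m+1}^2$ can exceed $1$, so the proof must use the structural correlation between the successor states of $m$ and its adjacent competitor — guaranteed by MLTT — to secure the cancellation that lets the direct $+\Delta$ term dominate. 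Carrying this accounting through consistently across both layers and both scheduling modes is the delicate step; the remainder is bookkeeping on the Bellman recursion.
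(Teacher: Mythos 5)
Your overall architecture (reduce intra-indexability to sign-preservation of the margin $G_m(s,\bm{\nu})=\mu_{nm}(s,\bm{\nu})-\min_{m'\neq m}\mu_{nm'}(s,\bm{\nu})$ under an increase of $\nu_m$, plus an exhaustion argument as $\nu_m\to\infty$) is logically sound, and your exhaustion argument matches the paper's statement (ii). But there is a genuine gap at the center: the inequality you call $(\star)$ is never proved, only conjectured. You correctly observe that the worst-case bounds of Lemma \ref{upperbound} cannot close it (since $1/p_m^2+1/p_{m+1}^2>1$), and you appeal to a "cancellation" between the successor kernels of $m$ and an adjacent competitor $\hat m$; writing that cancellation out, $(\star)$ becomes $\Delta+(p_{\hat m}-p_m)\bigl[\delta(s'_{\mathrm{nc}})-\delta(s'_{\mathrm{c}})\bigr]\ge 0$, where $s'_{\mathrm{nc}}$ is the not-completed successor and $s'_{\mathrm{c}}$ the reset successor. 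Closing this still requires (a) a two-sided bound on $\delta$ at \emph{both} successors, yet the lower bound in Lemma \ref{upperbound} is stated only for layer-2 states satisfying $A\ge H_n(m,m+1,D)$, so it is not licensed at the layer-1 reset state $s'_{\mathrm{c}}$; and (b) a quantitative condition of the form $(p_m-p_{\hat m})(1/p_m^2+1/p_{m+1}^2)\le 1$ relating adjacent completion probabilities, which you never identify. Demanding $(\star)$ at \emph{every} passive state is also stronger than necessary and makes the accounting hardest exactly where you have no slack.

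The paper circumvents both difficulties by a different route: using the MLTT structure (Proposition \ref{mltt}) it writes the passive set explicitly as a union of threshold intervals, so monotone growth of $\mathcal{P}_{nm}^l(\bm{\nu})$ is equivalent to two threshold inequalities, $H_n(m-1,m,d)\le H_n'(m-1,m,d)$ and $H_n(m,m+1,d)\ge H_n'(m,m+1,d)$; these are then proved by contradiction \emph{at the threshold state itself}, where the Bellman comparison holds with (near-)equality under $\bm{\nu}$ and with the reversed inequality under $\bm{\nu}'$. Subtracting the two relations isolates a cost-to-go difference $\delta$ that must exceed $\Delta/(p_m-p_{m-1})$, and the monotonicity of $V_n$ in $\nu_m$ (established in the proof of Lemma \ref{detsta}) lets the reset-state term be discarded with the correct sign; the contradiction with Lemma \ref{upperbound} then follows under $p_m-p_{m-1}\le p_m^2$ — precisely the condition your plan is missing. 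Your treatment of the non-preemptive case also underestimates the work: it is not a one-line specialization of Lemma \ref{lemma3}, since the non-switching constraint forces a separate two-slot recursion (the $(1-p_m)^2$ analysis and the resulting $K$-th-order bound in the paper's appendix) to obtain the contradiction there. So the proposal identifies the right obstruction but does not overcome it; as written it is a proof plan, not a proof.
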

The proof is shown in Appendix \ref{ap2}. By analyzing the boundary properties of the cost-to-go function and the monotonicity of the passive set, the intra-indexability of the multi-layer MDP model is proved. 
Since the AoI minimizing problem in MEC systems is a $2$-layer MDP, we design thresholds for actions at both layers. The intra-indexability property ensures that each server has a unique threshold at each layer, and as the server cost increases, the size of the passive set for each server at each layer also grows. This allows the index to represent the urgency of a state and enables comparison of states across layers. Based on these properties, we define the index for the 2-layer MDP.

\subsubsection{Nested Index Under Non-Preemptive Structure}
To characterize the decision priorities in non-preemptive task scheduling, where tasks cannot be interrupted once initiated, a refined index framework is essential. Based on the multi-layer MDP model, the nested index concept is extended to accommodate the unique constraints of non-preemptive execution.
\begin{definition}[Nested Index]
\textit{The nested index for taking $y_{n}(t)=m$ at state $s_n(t)$ is defined as}
\begin{subequations}
\begin{align}\label{nestedindex}
 &I_{nm}(s_n(t), \bm{\nu})\triangleq \max\Big[0, \inf\Big\{\nu_m\ | \nonumber\\&\ \min_{m'\in\mathcal{M}}\mu_{nm'}(s_n(t),\bm{\nu}) <\mu_{nm}(s_n(t),\bm{\nu})\Big\}\Big].
 \end{align}
\end{subequations}
\end{definition}
\noindent where \(\mu_{nm}(s_{n}(t),\bm{\nu})\) represents a function related to the performance measure when taking action \(y_{n}(t) = m\) in state \(s_{n}(t)\) with server cost vector \(\bm{\nu}\). The nested index \(I_{nm}(s_{n}(t),\bm{\nu})\) is used to determine the relative urgency of user $n$ choosing server \(m\) in state \(s_{n}(t)\). Compared with the partial index \cite{zou2021minimizing}, the nested index not only considers the cost of server $m$ but also involves the costs of other servers, which can reflect the urgency of transitioning to different layers. 

In terms of the non-preemptive condition, the actions at Layer $2$ can be divided into 2 groups: keeping computation and dropping the current task. Since the optimal solution is to keep computing until the task is completed, the non-preemptive condition can be regarded as a MLTT MDP. $\forall n\in\mathcal{N},m\in\mathcal{M}$ the threshold can be expressed as $H_n(m,m',D)=1$.
According to the \textit{nested index} derived at each time slot, the central actuator can schedule the tasks of all users following the nested index policy. Define $y_{nm}$ as the decision variable for user $n$ on server $m$, which is used to indicate whether user \(n\) offloads the task to server \(m\).
At each time slot \(t \in T\), the following binary decision scheduling problem needs to be solved:
\begin{subequations}\label{eq:algo}
    \begin{align}
         \label{algoobject1}\max\limits_{\bm{u}} &\sum\limits_{n\in\mathcal{N}}\sum\limits_{m\in\mathcal{M}} I_{nm}(s_n(t),\bm{\nu}) y_{nm}&\\ \label{algocons1}
         \mathrm{s.t.}&\sum\limits_{n\in\mathcal{N}}y_{nm}\le 1,&&\forall m\in\mathcal{M}, \\\label{algocons2}
         &\sum\limits_{m\in\mathcal{M}}y_{nm}\le 1,&&\forall n\in\mathcal{N},\\\label{algocons3}
         & y_{nm}\in \{0,1\} ,\forall n\in\mathcal{N},&&\forall m\in\mathcal{M}.
    \end{align}
\end{subequations}

By maximizing the sum of the products of the nested index values of all user-server combinations and the decision variables, the optimal task offloading decision can be determined. Relax the constraint (\ref{algocons1}) into the objective function to obtain the Lagrangian dual problem:
\begin{subequations}
\begin{align}\label{lagrange_nu}
\mathcal{L}(u, \bm\nu') = &\sum\limits_{n\in\mathcal{N}}\sum\limits_{m\in\mathcal{M}}  I_{nm}(s_n(t),\bm{\nu})y_{nm}\nonumber\\& + \sum\limits_{m\in\mathcal{M}} \nu'_{m,t} \left(1 - \sum\limits_{n\in\mathcal{N}} y_{nm}\right),
    \end{align}
\end{subequations}
where $\nu'_{m,t}>0$  is the dual variable corresponding to constraint (\ref{algocons1}), representing the cost of resource usage for server $m$ at time $t$.
 The mapping process from the current state $s_n(t)$ to the action variable $y_{nm}(t)$ is named the \textit{nested index policy}.


In Algorithm \ref{policy}, the nested index $I_{nm}(s_n(t),\bm{\nu}_{t-1})$ for each user is computed  via Eq. (\ref{nestedindex}). Then, the optimal solution $y_{nm}(t)$ for problem (\ref{eq:algo}) is obtained, which is an integer  linear programming problem. Then, the solution for problem (\ref{eq:algo}) is mapped to offloading decisions and computing decisions to schedule tasks. State updates from edge servers, offloading decisions, and computing decisions are obtained. 
Subsequently, the activating cost $\bm{\nu}_{t}$ is updated. The operations from Line 4 to Line 8 are executed at each time slot until the nested policy converges. In the non-preemptive mode, the process can be further simplified. Specifically, for a user at Layer 2, according to Eq. (\ref{beequal}), there is no need to calculate the index value for all the actions. Instead, only the index values for the current server and dropping the current task need to be calculated. This is because Eq. (\ref{beequal}) indicates that in non-preemptive scenarios, the optimal action at Layer 2 is either to keep executing at the current server or to drop the task, simplifying the index-value calculation. The calculation of the index value can be very complex, and we gave an approximation of the nested index given $s_n$, $\bm{\nu}_{-m}$, $\forall\  0<m<M$, where  $\bm{\nu}_{-m}$ is the other parts of $\bm{\nu}$ except for the $m$-th item.

\begin{algorithm}[t]

\caption{A Nested Index Policy}
    \begin{algorithmic}[1]\label{policy}
        \STATE Initialize parameters $N$, $M$, $L$, $\beta$;
        \STATE Initialize $s_n(0)$ for each user $n$ and server cost $\bm{\nu}_0$;
        \FOR{$t\le T$}
            \STATE Compute nested index $I_{nm}(s_n(t),\bm{\nu}_{t-1})$ for each user $n$ and layer $l$ by Eq. \eqref{nestedindex};
            \STATE Solve the maximization problem (\ref{eq:algo}) and obtains $y_{nm}$;
            \STATE Schedule tasks according to $y_{nm}$;
            \STATE Get state updates from edge servers and actions;
            \STATE Update cost $\bm{\nu}_{t}\gets(1-\beta)\bm{\nu}_{t-1}+\beta\bm{\nu}_{t-1}'$, where $\bm{\nu}_{t}$ is the server cost at time $t$, and $\bm{\nu}'_t =\{\nu'_{1,t}, \nu'_{2,t}, \ldots, \nu'_{M,t}\}$ is a cost vector composed of the optimal dual variables of all servers from Lagrangian dual problem \eqref{lagrange_nu} at time $t$.
        \ENDFOR 
    \end{algorithmic}
\end{algorithm}

\begin{proposition}\label{indexfunc}
Given $s_n(t)=(\Delta_n(t),D_n(t))$, the index function satisfies
\begin{equation}
I_{nm}(s_n(t),\bm{\nu})=\nu_{m-1}+\Delta_n(t)-\gamma_n^*.
\end{equation}
The index for server $m$ can be derived by solving $I_{nm}(s_n(t),\bm{\nu})=\nu_m$. That is, 
\begin{itemize}
    \item if \(\nu_m < \nu_{m-1} + \Delta_n(t) - \gamma_n^*\): The cost is too high, and it is better to choose other servers;
    \item if \(\nu_m = \nu_{m-1} + \Delta_n(t) - \gamma_n^*\): It is in a critical state, and the expected cost of choosing $m$ is equal to that of the suboptimal server;
    \item if \(\nu_m > \nu_{m-1} + \Delta_n(t) - \gamma_n^*\): The cost is too high, and it is better to choose other servers.
\end{itemize}
\end{proposition}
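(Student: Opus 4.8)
The plan is to compute the nested index directly from its variational definition \eqref{nestedindex}, reading $I_{nm}$ as the critical value of the activating cost $\nu_m$ at which server $m$ crosses from being (weakly) optimal to strictly suboptimal in state $s_n(t)$. The starting point is that, exactly at this threshold, the expected cost $\mu_{nm}(s_n(t),\bm\nu)$ of \eqref{bellman} equals the running minimum over the competing servers; equivalently, by the Bellman equation this common value is $\gamma_n^*+V_n(s_n(t),\bm\nu)$. First I would use intra-indexability (Theorem \ref{intraindex}) together with the MLTT structure (Proposition \ref{mltt}) and the server ordering $p_{m-1}\le p_m$ to argue that the binding competitor at the threshold is the adjacent server $m-1$, so that the defining equality reduces to $\mu_{nm}(s_n(t),\bm\nu)=\mu_{n,m-1}(s_n(t),\bm\nu)$.

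Next I would expand both expected costs using the immediate cost $C_n(s,m)=\Delta_n(t)+\nu_m$ from \eqref{costfunction} and the two-point transition kernel of Section \ref{defTP}: with probability $p_m$ the task completes and the state refreshes to the corresponding Layer-1 age, and with probability $1-p_m$ it continues to $(\Delta_n(t)+1,D_n(t))$. Since the immediate AoI term $\Delta_n(t)$ and the continuation cost-to-go term are common to both sides, subtracting the two Bellman expressions isolates $\nu_m-\nu_{m-1}$ against a weighted difference of cost-to-go values. I would then invoke the non-preemptive equilibrium condition \eqref{beequal}, which forces the expected costs of the competing servers to coincide, to collapse the recursive value terms; substituting the Bellman relation $\gamma_n^*+V_n(s_n(t),\bm\nu)=\mu_{nm}(s_n(t),\bm\nu)$ at the critical state then converts the residual cost-to-go contribution into $\Delta_n(t)-\gamma_n^*$. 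Solving the resulting scalar equation for $\nu_m$ yields the claimed closed form $I_{nm}(s_n(t),\bm\nu)=\nu_{m-1}+\Delta_n(t)-\gamma_n^*$.

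Finally, the three-case characterization follows immediately from monotonicity: intra-indexability guarantees that $\mu_{nm}(s_n(t),\bm\nu)$ is increasing in $\nu_m$ while the competing minimum is not, so comparing $\nu_m$ against the threshold $\nu_{m-1}+\Delta_n(t)-\gamma_n^*$ decides whether server $m$ is optimal (below), critically indifferent with the suboptimal server (at), or dominated (above). I expect the main obstacle to be the middle step: showing that the weighted difference of cost-to-go values collapses exactly to $\Delta_n(t)-\gamma_n^*$ with no residual state- or $\bm\nu$-dependent terms. This is delicate because $V_n(\cdot,\bm\nu)$ is itself defined only implicitly through the Bellman fixed point and varies with $\bm\nu$; making the cancellation rigorous will rely on the cost-to-go bounds of Lemmas \ref{upperbound} and \ref{lemma3} to guarantee that the value differences are well defined and that the adjacent server $m-1$ is indeed the binding competitor at the threshold.
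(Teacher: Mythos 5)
Your overall strategy---reading $I_{nm}$ as the critical $\nu_m$ at which $\mu_{nm}$ ties with the best competitor, reducing to the adjacent competitor $m-1$ via MLTT, and then solving the indifference equation---is a legitimate reading of the nested-index definition, and your first and last steps are sound. But the middle step, which you yourself flag as the main obstacle, is a genuine gap, and the tools you invoke cannot close it. Writing out the indifference condition $\mu_{nm}(s_n(t),\bm\nu)=\mu_{n,m-1}(s_n(t),\bm\nu)$ and cancelling the common immediate age term leaves (with $A=\Delta_n(t)$, $D=D_n(t)$)
\begin{equation*}
\nu_m-\nu_{m-1}=(p_m-p_{m-1})\bigl[V_n\bigl((A+1,D),\bm\nu\bigr)-V_n\bigl(A+1-D,\bm\nu\bigr)\bigr],
\end{equation*}
a \emph{cross-state} difference of cost-to-go values at a \emph{fixed} cost vector, weighted by $p_m-p_{m-1}$. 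To reach the claimed closed form you must show this weighted difference equals $\Delta_n(t)-\gamma_n^*$ exactly, independent of $p_m$, $p_{m-1}$, and $D_n(t)$. Lemmas \ref{upperbound} and \ref{lemma3} cannot do this: they bound differences $V_n(s,\bm\nu')-V_n(s,\bm\nu)$ across \emph{perturbed cost vectors at a fixed state}---the transposed kind of difference---and they are one-sided inequalities, not identities. Likewise, Eq.~\eqref{beequal} only says that all non-continue actions at Layer 2 have equal cost (they all amount to dropping the task); it carries no quantitative information about the gap between the continuation state and the post-completion state. So the exact cancellation your argument needs is simply not available from the results you cite, and the derivation stalls at the indifference equation.

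The paper closes exactly this hole by a constructive computation that your proposal skips: following the technique of \cite{tripathi_whittle_2019}, it anchors the value function at a recurrent Layer-1 state with $V(1,\bm\nu)=0$, solves the Bellman recursion in closed form region by region over the MLTT thresholds $H_1^*\le\cdots\le H_M^*$ (obtaining, e.g., $V(A,\bm\nu)=\frac{A-\gamma_n^*+\nu_M}{p_M}+\frac{1-p_M}{p_M^2}$ for $A\ge H_M^*$ by summing the geometric series, and analogous expressions for the intermediate regions and the Layer-2 states $(H_m^*,A)$), and only then imposes the equilibrium chain $\nu_{m-1}+V((A,D),\bm\nu)\le\nu_m+V((A,D-1),\bm\nu)\le\nu_m+V((A+1,D),\bm\nu)\le\nu_{m-1}+V((A+1,D+1),\bm\nu)$ to sandwich the average cost, $\nu_{m-1}-\nu_m+A\le\gamma_n^*\le\nu_{m-1}-\nu_m+A+1$. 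Solving the binding lower bound $\gamma_n^*=\nu_{m-1}-\nu_m+A$ for $\nu_m$ is what produces $I_{nm}(s_n(t),\bm\nu)=\nu_{m-1}+\Delta_n(t)-\gamma_n^*$. It is precisely the explicit, piecewise-linear-in-$A$ structure of $V_n$ established by that recursion that makes the value-difference terms collapse; to repair your proof you would need to reproduce that explicit solution (or an equivalent one), at which point you would essentially be following the paper's argument.
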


When the value of the index function \(I_{nm}(s_{n},\bm{\nu})\) equals to the cost \(\nu_{m}\) of server \(m\), the determined index value can balance the user state and the server cost. The proof is given in Appendix \ref{ap3}. The indices for other layers can be similarly derived within finite steps of computation. 
In the non-preemptive mode, we directly determine the final server selection by designing nested index which associates the task generation time $D_n(t)$ and the server cost $\nu_m$, solving the optimization problem of the ``one-time decision" in the non-preemptive mode.
We derive the optimal average cost $\gamma_n^*$ by the technique similar to that used in \cite{tripathi_whittle_2019}, which involves solving a set of a finite number of equations. This reduces the complexity of calculating the index function and enhances the feasibility of our algorithm.

\subsection{Fluid Limit Model}
The optimality of the index policy in Algorithm 1 is established using a fluid limit argument, following the approach in \cite{verloop_asymptotically_2016}.
The fixed-point solution is the solution for the fluid limit model of the original problem. 
We will show that the fixed-point of the fluid limit model for problem (\ref{eq:algo}) is equivalent to that of problem (\ref{originrelax}).

The fluid fixed-point and fluid limit model are defined as follows. Let $z_{ns}\in [0,1]$ denote the fraction of user $n$ in state $s$, where $\sum_{s\in\mathcal{S}}z_{ns}=1$. Here, \(\mathcal{S}\) represents the set of all possible states for users. Let  $y_{nm}^s\in[0,1]$ denote the fraction of user $n$ combined with state $s$ at server $m$ given by the optimal solution of the relaxed problem \eqref{originrelax}, which is a continuous variable under the fluid model.
Let $(\bm{y}^*,\bm{z}^*,\bm{\nu}_m)$ represent the \textit{fluid fixed-point} of the following \textit{fluid limit reformulation} of problem (\ref{originrelax}):

\begin{subequations}\label{fluidx}
    \begin{align}
        \mathop{\mathrm{min}}\limits_{\bm{y},\bm{z}}& \sum_{n\in \mathcal{N}}{\sum_{s\in\mathcal{S}}}\sum\limits_{m\in\mathcal{M}}z_{ns}(\Delta_n(t)+\nu_m)y_{nm}^{s} \tag{\theparentequation a}\label{fluidx:a} \\
        \mathrm{s.t.} &\sum_{n\in \mathcal{N}}\sum_{s\in \mathcal{S}}z_{ns}y_{nm}^{s}\le 1, \tag{\theparentequation b}\label{fluidx:b}  \forall m\in\mathcal{M}, \\
        &\sum\limits_{m\in\mathcal{M}}y_{nm}^{s}\le 1, \tag{\theparentequation c}\label{fluidx:c} \forall n\in\mathcal{N}, \forall s\in\mathcal{S}, \\
        &\sum\limits_{s\in\mathcal{S}}z_{ns}= 1, \tag{\theparentequation d}\label{fluidx:d} \forall n\in\mathcal{N}, \\
        &z_{ns},y_{nm}^s\in[0,1], \tag{\theparentequation e}\label{fluidx:e} \forall n\in\mathcal{N},\forall m\in\mathcal{M},\forall s\in\mathcal{S}, \\
        &\sum_{s'\in \mathcal{S}}z_{ns}\sum\limits_{m\in\mathcal{M}}y_{nm}^{s} q^{ss'}_{nm}=\nonumber\\&\sum_{s'\in \mathcal{S}}z_{ns'}\sum\limits_{m\in\mathcal{M}}y_{nm}^{s'}q^{s's}_{nm} \tag{\theparentequation f}\label{fluidx:f}, \forall n\in\mathcal{N},\forall s\in\mathcal{S}.
    \end{align}
\end{subequations}
where \eqref{fluidx:f} is a fluid balance constraint \cite{verloop_asymptotically_2016}, which is essential for maintaining the equilibrium of the fluid limit model in terms of the flow of users among different states and servers.

Similarly, the fluid limit reformulation problem (\ref{fluidy}) for the scheduling problem (\ref{eq:algo}) can be derived. Let $(\bm{y}^*,\bm{z}^*,\bm{\nu}'_m)$ be the fixed-point solution. We have:
\begin{subequations}
\begin{align}
&\mathop{\rm{max}}\limits_{\bm{y},\bm{z} }\sum_{n\in \mathcal{N}}{\sum_{s\in\mathcal{S} }}\sum_{m\in\mathcal{M}}z_{ns}I_{nm}(s_n(t),\bm{\nu})y_{nm}^{s}\\
\label{fluidc1}
{\rm s.t.} &\sum_{n\in \mathcal{N}}\sum_{s\in \mathcal{S}}z_{ns}y_{nm}^{s}\le 1, \ \ \ \ \forall m\in\mathcal{M},\\
&\sum_{m\in\mathcal{M}}y_{nm}^{s}\le 1,\ \ \ \ \ \ \ \ \ \ \ \  \forall n\in\mathcal{N},\forall s\in\mathcal{S}.
\end{align}\label{fluidy}
\end{subequations}
The above mentioned fluid-limit reformulation of problems \eqref{fluidx} and \eqref{fluidy} are crucial for analyzing the performance of the nested index policy. Then, we can evaluate the performance of our \textit{nested index} policy based on the fluid limit model for both problems.

\begin{proposition}\label{fixedpoint}
The fixed-point solution to the problem (\ref{fluidx}) is equivalent to the solution (the fluid fixed-point) to the problem (\ref{fluidy}), i.e., we have
\begin{equation}
    (\bm{y}^*,\bm{z}^*,\bm{\nu}'_m)=(\bm{y}^*,\bm{z}^*,\bm{\nu}_m).
\end{equation}
\end{proposition}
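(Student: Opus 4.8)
The plan is to prove that the two fluid fixed-points satisfy an identical system of Karush--Kuhn--Tucker (KKT) conditions, which forces both the primal allocations $(\bm{y}^*,\bm{z}^*)$ and the relevant dual variables to coincide. First I would linearize both reformulations: introducing $x_{nm}^s \triangleq z_{ns} y_{nm}^{s}$, the bilinear objectives of \eqref{fluidx} and \eqref{fluidy} become linear programs in $\{x_{nm}^s\}$, with $\bm{z}$ recovered through the normalization \eqref{fluidx:d} and the fluid-balance constraint \eqref{fluidx:f}. The server-capacity constraints \eqref{fluidx:b} and \eqref{fluidc1} are structurally identical, and I would attach the multiplier vector $\bm{\nu}$ to \eqref{fluidx:b} in the relaxed problem and $\bm{\nu}'$ to \eqref{fluidc1} in the index problem. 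Writing down stationarity, primal/dual feasibility, and complementary slackness for each program, the goal reduces to showing that, under a self-consistent choice of activation cost, these two KKT systems are equivalent.

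The second and central step is to invoke the definition of the nested index \eqref{nestedindex} together with the intra-indexability of Theorem \ref{intraindex}. By construction $I_{nm}(s,\bm{\nu})$ is precisely the critical activation cost at which selecting server $m$ in state $s$ ceases to be optimal for the Bellman subproblem, i.e.\ the threshold separating $s$ from the passive set $\mathcal{P}_{nm}(\bm{\nu})$. Intra-indexability guarantees that this threshold is well defined and that the passive set grows monotonically in $\nu_m$, so the comparison $\nu'_m \lessgtr I_{nm}(s,\bm{\nu})$ is equivalent to the expected-cost comparison $\mu_{nm}(s,\bm{\nu}) \lessgtr \min_{m'\neq m}\mu_{nm'}(s,\bm{\nu})$ that characterizes the optimal action in \eqref{fluidx}. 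Hence the index-greedy allocation solving \eqref{fluidy} selects, in every state, exactly the server chosen by the Bellman-optimal policy that generates the solution of \eqref{fluidx}, so the primal allocations $\bm{y}^*$ agree. Since $\bm{z}^*$ is in both cases pinned down as the stationary occupancy induced by this common allocation (through \eqref{fluidx:f} for the relaxed problem, and through stationarity of the induced fluid dynamics for the index problem), the occupancy vectors agree as well.

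Third, I would close the fixed-point loop on the dual variables. Complementary slackness in \eqref{fluidy} forces $\nu'_m>0$ only when the capacity of server $m$ is saturated at the fixed point, i.e.\ $\sum_{n\in\mathcal{N}}\sum_{s\in\mathcal{S}} z_{ns}\,y_{nm}^{s}=1$, which is exactly the condition governing $\nu_m$ in \eqref{fluidx}. Matching these, together with the equality of primal allocations and the strict monotonicity supplied by the cost-to-go bounds in Lemma \ref{upperbound} and Lemma \ref{lemma3}, yields $\nu'_m=\nu_m$ for every $m\in\mathcal{M}$, and therefore $(\bm{y}^*,\bm{z}^*,\bm{\nu}'_m)=(\bm{y}^*,\bm{z}^*,\bm{\nu}_m)$.

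The hard part will be the self-referential definition of the index: $I_{nm}$ is computed using $\bm{\nu}$, which is simultaneously the unknown dual price that the scheduling program \eqref{fluidy} must reproduce, so I must show that a consistent fixed point exists and is unique and that the threshold equivalence of the second step holds with the \emph{same} $\bm{\nu}$ on both sides. I expect to rely on the strict inequalities in Lemma \ref{upperbound} and Lemma \ref{lemma3} to guarantee that each critical cost $\nu^*(s)$ is attained at a single value, ruling out degenerate ties on the passive-set boundary and thereby pinning down the fixed point; the non-preemptive case additionally requires checking that the restricted Layer-2 action set does not open a gap between the two KKT systems, which follows from the simplified threshold $H_n(m,m',D)=1$ established after Theorem \ref{intraindex}.
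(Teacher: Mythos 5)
Your overall architecture --- linearize both fluid programs, match their KKT systems, and reduce everything to showing that the comparison $\nu_m \lessgtr I_{nm}(s,\bm{\nu})$ reproduces the Bellman comparison $\mu_{nm}(s,\bm{\nu}) \lessgtr \min_{m'\neq m}\mu_{nm'}(s,\bm{\nu})$ --- is sound, and your KKT framing is in fact more explicit than the paper's own treatment of how the two fixed points get identified. The genuine gap is in your second step: you assert that intra-indexability (Theorem \ref{intraindex}) already guarantees this equivalence. It does not, and the paper flags this explicitly: intra-indexability only gives monotone growth of the passive set in $\nu_m$, which is compatible with the optimal action switching strictly before or strictly after the index value is reached, or with ties persisting on a nondegenerate interval of costs. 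What is needed is the strictly stronger \emph{precise division} property --- (i) at $\nu_m = I_{nm}(s,\bm{\nu})$ server $m$ is still weakly optimal against every alternative, (ii) for $\nu_m < I_{nm}$ it is strictly optimal, and (iii) for $\nu_m > I_{nm}$ some alternative strictly beats it --- and proving this is essentially the entire content of the paper's proof of Proposition \ref{fixedpoint} (its Proposition 4 in the appendix).

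Concretely, your plan cannot close this hole with Lemma \ref{upperbound} and Lemma \ref{lemma3} alone. The paper needs a sharper uniform lower bound, $V_n(s,\bm{\nu}')-V_n(s,\bm{\nu}) > -\Delta/p_{m+1}$ (its Lemma \ref{seven}, with denominator $p_{m+1}$ rather than $p_{m+1}^2$ and, crucially, holding without the threshold restriction $A \ge H_n(m,m+1,D)$), in order to run the $\epsilon$--$\Delta$ continuity argument that establishes weak optimality exactly at the index value; the restricted bound in Lemma \ref{upperbound} is not uniform over the states where you need it. Moreover, statement (ii) in the case $p_{m'}>p_m$ does not follow from any cost-to-go bound at all: the paper handles it by a separate contradiction argument through the MLTT thresholds ($\mu_{nm} < \mu_{n(m+1)}$ forces $A < H_{m+1}$, and assuming $\mu_{nm'} \le \mu_{nm}$ for some $m' > m+1$ contradicts optimality of the policy at the threshold $H_{m+1}$). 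Your closing remark about ruling out degenerate ties on the passive-set boundary gestures at the right issue, but as written your argument invokes the exact threshold equivalence at the common $\bm{\nu}$ as if it were a consequence of indexability, when it is precisely the statement that has to be proved.
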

The proof is shown in Appendix \ref{ap4}. The equivalence of fixed-point solution builds the connection between problem (\ref{eq:algo}) and problem (\ref{decoupled_problem}).
Within this proof, we first prove the sub-problem \eqref{originrelax} satisfies the \textit{precise division} \cite{partialoffloading}  property under both preemptive and non-preemptive conditions. Though problem (\ref{eq:algo}) follows an instantaneous constraint (\ref{algocons1}), its fixed-point solution still reaches the optimality at the fluid limit, which contributes to the \textit{asymptotic optimality} of our policy in Algorithm \ref{policy}. By scaling a system by $r$, we scale the number of users $N^r$ and servers $M^r$ by $r$ proportionally, i.e., let $N^r=r\cdot N$, $M^r=r\cdot M$ while keeping $\frac{N^r}{M^r}$ a constant\footnote{The system parameters $\bm{\tau}$ and $\bm{p}$ are also scaled proportionally, i.e., $\bm{\tau}^r=[\bm{\tau},\bm{\tau},\dots,\bm{\tau}]\in\mathbb{Z}_+^{1\times N^r}$ and $\bm{p}^r=[\bm{p}',\bm{p}',\dots,\bm{p}']\in[0,1]^{M^r\times N^r}$, where $\bm{p}'=[\bm{p}^T,\bm{p}^T,\dots,\bm{p}^T]^T\in[0,1]^{M^r\times N}$}.  Building on the equivalence of the fixed-point solutions, we further explore the performance of the nested index policy. Lemma 4 presents the following result:
\begin{lemma}
Under a mild global attractor assumption,
the expected objective $V^{\pi}_r$ for problem (\ref{eq:algo})  under the nested index policy $\pi$ achieves the optimal objective $V^*$ for the fluid limit model of problem (\ref{decoupled_problem}) asymptotically, i.e.,
\begin{equation}
\lim\limits_{r\to+\infty} V^{\pi}_r=V^*.
    \end{equation}
\end{lemma}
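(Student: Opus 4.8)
The plan is to follow the fluid-limit methodology of \cite{verloop_asymptotically_2016}, exploiting the fixed-point equivalence already established in Proposition \ref{fixedpoint}. The argument proceeds in three movements: (i) show that the scaled empirical occupancy process converges to a deterministic fluid trajectory, (ii) identify the equilibrium of that trajectory with the common fluid fixed point, and (iii) transfer convergence of occupancy measures into convergence of the time-averaged cost. Throughout, the linearity of the objective in the occupancy measure and the threshold (MLTT) structure of the nested index policy are the structural features I would lean on.

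First, I would introduce the normalized occupancy process $\bm{z}^r(t)$, whose component $z^r_{ns}(t)$ records the fraction of the $N^r=rN$ users lying in state $s$ at time $t$ under the nested index policy $\pi$. Using standard tightness and relative-compactness arguments for density-dependent Markov chains, I would show that as $r\to\infty$ this sequence converges, uniformly on compact time intervals and in probability, to a deterministic Lipschitz trajectory $\bm{z}(t)$ solving the fluid dynamics --- the ODE whose drift is dictated by the transition kernels $q^{ss'}_{nm}$ and the index-induced allocation $y^{s}_{nm}$. The key structural input is that the nested index rule, being a threshold decision on each per-user state, induces an allocation that is a measurable (almost-everywhere continuous) function of the occupancy measure, so the drift is Lipschitz and the limiting ODE is well-posed.

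Second, I would argue that the rest points of this ODE are exactly the feasible points of the fluid reformulation (\ref{fluidy}) that also satisfy the balance constraint (\ref{fluidx:f}), and then invoke Proposition \ref{fixedpoint} to conclude that the index-policy fluid fixed point coincides with the optimizer $(\bm{y}^*,\bm{z}^*)$ of the relaxed fluid problem (\ref{fluidx}). The \emph{global attractor assumption} guarantees that $\bm{z}(t)\to\bm{z}^*$ as $t\to\infty$ from every initial condition, so the stationary law of $\bm{z}^r$ concentrates on $\bm{z}^*$ in the large-$r$ limit. Finally, because the stage cost in (\ref{eq:objective}) is linear in the occupancy measure, $V^\pi_r$ is the expectation of an inner product of $\bm{z}^r$ with the cost coefficients; combining the concentration of the stationary occupancy on $\bm{z}^*$ with a uniform-integrability bound on the unbounded AoI tail yields that $V^\pi_r$ converges to the value of the fluid objective (\ref{fluidx:a}) evaluated at $(\bm{y}^*,\bm{z}^*)$, which is precisely $V^*$.

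I expect the main obstacle to be the interchange of the two limits $r\to\infty$ and $t\to\infty$: the fluid convergence in the first movement holds only on finite horizons, whereas $V^\pi_r$ is a stationary, infinite-horizon average. Bridging them is exactly what the global attractor assumption is for, but it must be paired with a uniform-integrability or Lyapunov tail bound on the AoI, because the cost is unbounded and, a priori, rare states with very large age could contribute non-negligibly to the expectation. Establishing such a drift bound --- showing that the scaled chains return geometrically fast to a bounded-age region uniformly in $r$ --- is the technically delicate piece that makes the limit interchange rigorous and is where I would concentrate the effort.
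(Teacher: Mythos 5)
Your proposal is correct and takes essentially the same route as the paper: the paper establishes the fixed-point equivalence (Proposition \ref{fixedpoint}, via the precise-division property) and then obtains the lemma by combining it with the global attractor assumption and the fluid-limit machinery of \cite{verloop_asymptotically_2016}, which is precisely the three-movement argument you outline. Note that the paper writes down no further details itself --- the finite-horizon fluid convergence, the interchange of the $r\to\infty$ and $t\to\infty$ limits, and the unbounded-AoI uniform-integrability issue are all delegated to the citation --- so your sketch, including the technical caveats you flag, is a faithful expansion of what the paper leaves implicit.
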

Details of the global attractor assumption can be found in \cite{verloop_asymptotically_2016}. 
It can be summarized that the system will converge to a unique steady state in the long-term operation, enabling the average age can be calculated. This assumption ensures the stability of the queue state transition. The equivalence of the fixed-point solution shows the accordance of the fluid limit model for both problems. Under a mild global attractor assumption, the objective of problem \eqref{reformulatedproblem} under our \textit{nested index} converges to the optimal cost of problem \eqref{fluidx}.

\section{Simulations}
To demonstrate the practical application of the nested index policy proposed in this paper, we consider a simplified MEC system with two heterogeneous edge servers and a single user, illustrating how the policy leverages age thresholds and server costs to minimize AoI under preemptive and non-preemptive scheduling, in line with the multi-layer MDP framework and MLTT structure. 

We assume that Server 1 is a high-performance unit with a task completion probability \(p_1=0.8\) and incurs a cost of \(\nu_1 = 5\) per time slot. In contrast, Server 2 is a low-performance  server, having a task completion probability \(p_2 = 0.5\) and incurs a cost of \(\nu_1 = 3\) per time slot. This system caters to a single user whose minimum computation time \(\tau_n^{\text{min}} = 1\), indicating that tasks can commence completion after 1 time slot.
The state space of the system is partitioned into two layers. In Layer 1, when there is no ongoing task, the state is denoted as \(s = A\), where \(A\) represents the current AoI. In Layer 2, during task execution, the state is \(s=(A, D)\), with \(A\) being the current AoI and \(D\) being the generation age, which is the AoI at the moment the task was offloaded.

\subsubsection{Case Study for Non-preemptive Scheduling}

In the non-preemptive scheduling, when in Layer 1, the user has three options for task offloading: Server 1, Server 2 or waiting. We use a virtual server 0 with \(\nu_0 = 0\) to present waiting state.
Based on Proposition 2, the nested index for server \(m\) is computed using the formula \(I_{n m}(A,\nu)=\nu_{m-1}+A - \gamma_n^*\), where \(\nu_{m-1}\) is the cost of the sub-optimal server (for the first server selection, \(\nu_{m-1} = 0\)), and \(\gamma_n^*\) is the optimal average cost per stage. 
For Server 1 with \(\nu_{m-1} = 0\), the index is \(I_{n1}=A-\gamma_n^*\), while for Server 2 with \(\nu_{m-1}=\nu_1 = 5\), the index is \(I_{n2}=5 + A-\gamma_n^*\). The decision rule is to offload to server \(m\) if \(I_{n m}\geq\nu_m\). Assuming a steady-state optimal average cost \(\gamma_n^* = 10\), the thresholds are determined as \(H_1 = 15\) for Server 1 (\(A - 10\geq5\)) and \(H_2 = 8\) for Server 2 (\(5+A - 10\geq3\)). This leads to three distinct decision regions. When \(A < 8\), the user waits to avoid unnecessary server costs, as the increase in AoI due to waiting is less costly than using either server. When \(8\leq A < 15\), Server 2 is chosen to balance its lower cost of \(\nu_2 = 3\) with its moderate completion probability of \(p_2 = 0.5 \). When \(A\geq15\), Server 1 is favored to rapidly reduce stale information despite its higher cost of \(\nu_1 = 5\).

Taking an example, suppose the user starts in Layer 1 with \(A = 5\). At time \(t = 1\), since \(A = 5<H_2 = 8\), the user waits, and the AoI increases to \(A = 6\) at \(t = 2\). When \(t = 4\) and \(A = 8\), reaching the threshold \(H_2\), the user offloads the task to Server 2 and enters Layer 2 with the state \((9,8)\). During execution on Server 2, if at a certain time \(A = 15\) and the state is \((15,8)\), since \(I_{n2}=5 + 15-10 = 10\geq\nu_2 = 3\), the user continues the execution. When the task finally completes at \(t = 10\), the AoI resets to \(t - D = 10 - 8 = 2\), and the system returns to Layer 1 for the next decision cycle. 

Under non-preemptive scheduling, as \(A\) increases, server selection transitions from waiting to Server 2 and then Server 1, in accordance with the MLTT structure, where older states prioritize servers with higher completion probabilities to minimize stale information. Low \(A\) values favor waiting to cut costs, moderate \(A\) values choose Server 2 for a cost-speed equilibrium, and high \(A\) values prioritize Server 1 to reduce AoI at the cost of higher expenditure, showcasing the rationality of the thresholds. Once a task is offloaded to Server 2, it cannot be switched, and the policy decides to continue or abandon based on the nested index.

\subsubsection{Case Study for Preemptive Scheduling}

Under preemptive scheduling, the decision-making process in Layer 1 initially mirrors that of the non-preemptive case for server selection, with the nested index calculations for Server 1 and Server 2 remaining based on \(I_{n m}(A,\nu)=\nu_{m-1}+A - \gamma_n^*\). However, the ability to interrupt tasks necessitates potential adjustments to the thresholds and decision regions. With preemptive scheduling, the system can reevaluate the server choice at each time slot, and the decision rule becomes more dynamic. If the current AoI \(A\) and the nested index of another server \(I_{n k}(A,\nu)\) (\(k\neq m\)) satisfy \(I_{n k}(A,\nu)<\nu_m\) and \(I_{n k}(A,\nu)\geq\nu_k\), the task can be preempted and offloaded to the more suitable server.

Starting again with the user in Layer 1 and \(A = 5\), the initial waiting and offloading decisions are the same as in the non-preemptive case. But consider a situation where the task is offloaded to Server 2 at \(A = 8\). Suppose at time \(t = 6\) during the execution on Server 2, the current AoI \(A = 11\). The nested index for Server 1, \(I_{n1}(11,\nu)=11 - 10 = 1<\nu_2 = 3\) and \(I_{n1}(11,\nu)\geq\nu_1 = 5\). In this scenario, under preemptive scheduling, the task on Server 2 is preempted, and the task is offloaded to Server 1. The state then changes to \((12,8)\), with the new current AoI reflecting the time elapsed during the switch, while the generation age remains \(D = 8\). Given that Server 1 has a higher completion probability, the task is more likely to be completed earlier. If the task is completed at \(t = 8\), the AoI resets to \(t - D = 8 - 8 = 0\), and the system returns to Layer 1. 

Under preemptive scheduling, the decision thresholds are continuously re-evaluated, allowing the system to utilize the high-performance Server 1 earlier when necessary, without being restricted by non-preemptive limitations. Although preemptive scheduling offers more flexibility, it also incurs additional costs related to task interruption and re-offloading. Nevertheless, the nested index policy still balances these costs with the objective of minimizing AoI, ensuring efficient server resource utilization.

\section{Numerical Results}



In this section, numerical researches were conducted to evaluate the performance of the nested index algorithm and verify its convergence property. We simulate a MEC system with initial tasks of $N=50$ users that can be divided into $6$ groups, with 
$\bm{\tau}^{min}=[2, 4, 8, 16, 32, 64]$, and successful updating probability of $\bm{p}=[0.8, 0.7, 0.6, 0.5, 0.3, 0.1]$, each of which has the number of $[5,10,5,5,10,15]$ users respectively. We set $\beta=50$ and simulated $T=10000$ slots. In the simulation, we mainly test the performance of the \textit{nested index} when solving a multi-layer MDP. 
\vspace{-1em}

\begin{figure}[htbp]
    \centering
    \subfloat[Dual cost update in the non-preemptive structure
\label{cost_nonpreemptive}]{\includegraphics[width=2.8in]{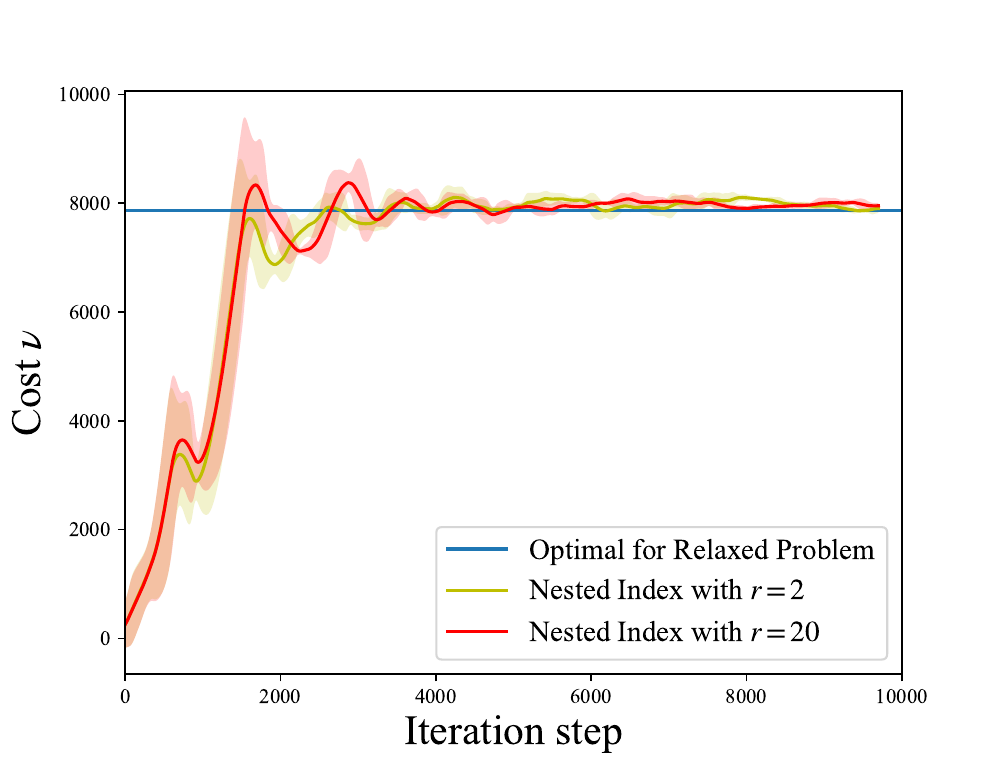}}
    \hfil
    \subfloat[Dual cost update in the preemptive structure\label{cost_preemptive}]{\includegraphics[width=2.8in]{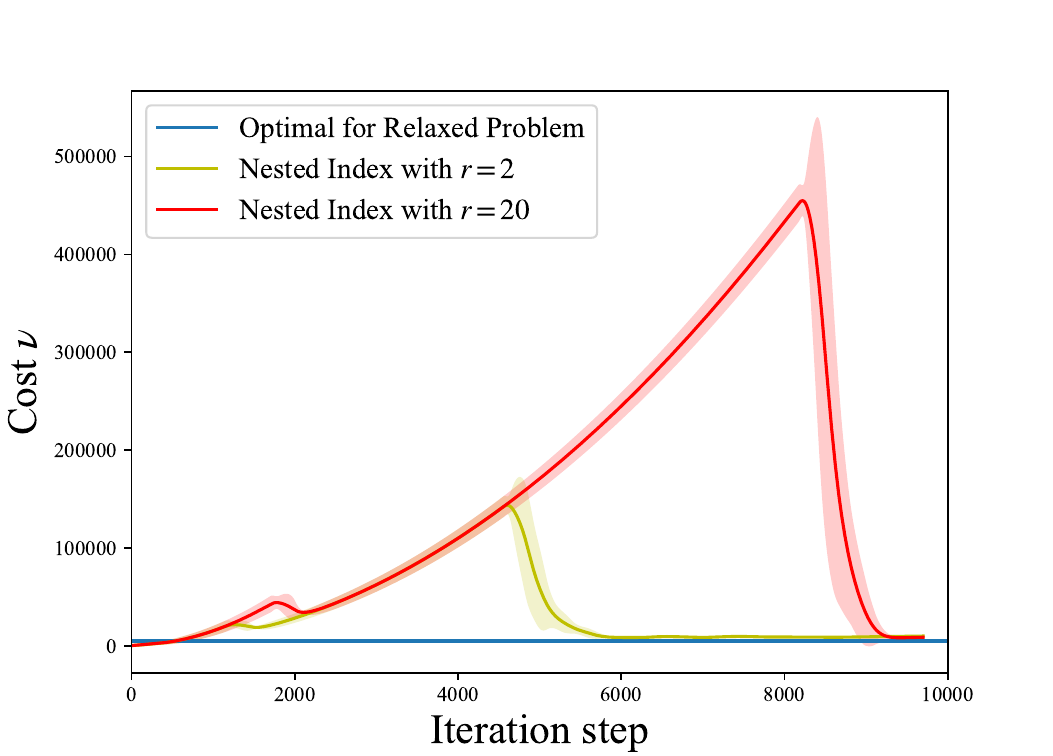}}    
    \caption{Dual cost update for the proposed index-based policy, which is compared with the dual cost update of the relaxed problem.}
    \label{converge_lambda}
\end{figure}

\subsection{Numerical Results Under Non-preemptive Scheduling}
\subsubsection{Convergence of the Cost Update}
The cost update dynamics of the proposed nested index policy are compared with those of the optimal solution to problem (\ref{originrelax}). The optimal solution in Fig. \ref{cost_nonpreemptive}
represents the dynamics of the server cost $\nu$ of the relaxed problem. A new cost \(\nu(t)\) is obtained at each time slot via dual gradient ascent. As shown in Fig. \ref{cost_nonpreemptive}, the server cost converges smoothly to a small neighborhood around the optimal cost, demonstrating the stability and effectiveness of the proposed policy in cost adjustment.
Subsequent analysis verifies the server cost dynamics of the proposed index-based policy under increasing system scales. Fig. \ref{cost_nonpreemptive} represents the dynamics of the cost update at scale $r=2$ and $r=20$. With the increase of the scalar $r$, the cost for the proposed index-based policy approaches is close to the optimal value of the dual cost.

\subsubsection{Average AoI Performance}

The average AoI performance of the proposed policy is evaluated by adopting the optimal solution to problem (\ref{originrelax}) as the lower bound of the index policy \cite{verloop_asymptotically_2016} for comparison purposes. Three benchmark policies are considered in the experimental analysis:\textit{Max-Age Matching Policy} (MAMP), \textit{Max-Age Reducing Policy} (MARP), both of which are greedy policies, and \textit{Rounded Relax Policy} (RRP). Each of these policies employs distinct task selection strategies:

\begin{itemize}
    \item The MAMP selects the user with the current maximum age and allocates a server with a higher success probability to them.

    \item The MARP takes the transition probability of MC into consideration. Define the weight in this policy as $w_{n}=\Delta_n(t)+\frac{1}{p_n}(\Delta_n(t)-D_n(t))$, which represents the probable approximation of optimality gap reduction.
    \item The RRP is derived from the solution of the relaxed problem. 
    The RRP chooses users uniformly at random to satisfy the feasibility when violating the constraint \eqref{action}.
\end{itemize}

\begin{figure}[htbp]
    \centering
    \subfloat[The average AoI in non-preemptive structure\label{converge_non}]
        {\includegraphics[width=2.75in]{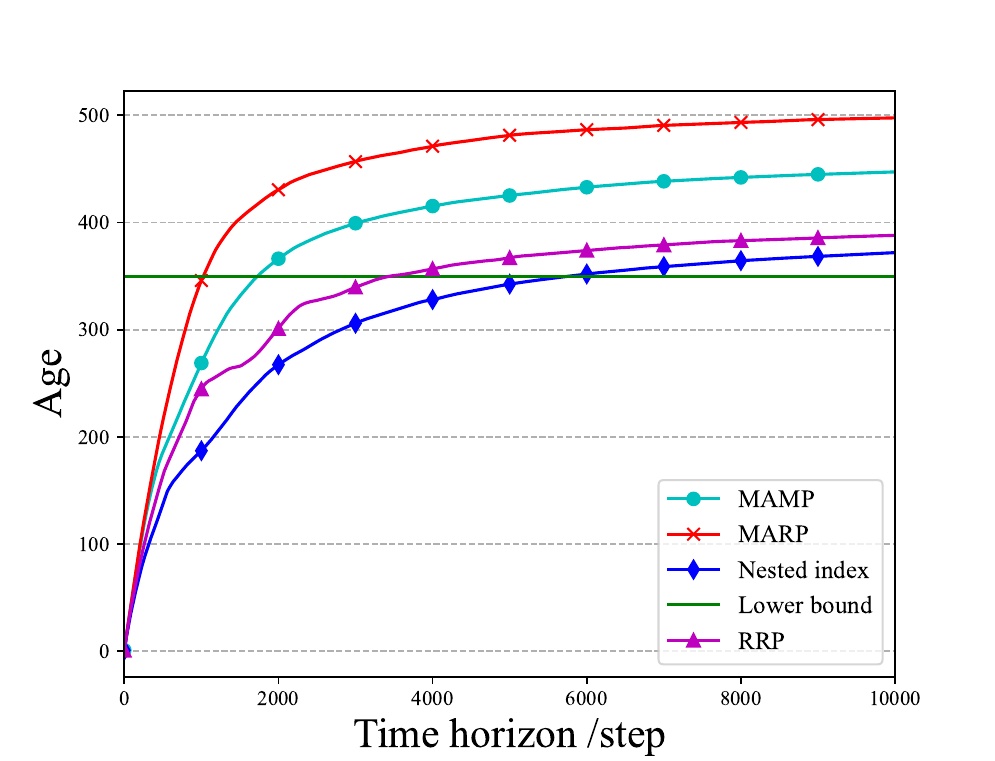}}
    \hfil
    \subfloat[The average AoI in preemptive structure\label{converge_pre}]
        {\includegraphics[width=2.8in]{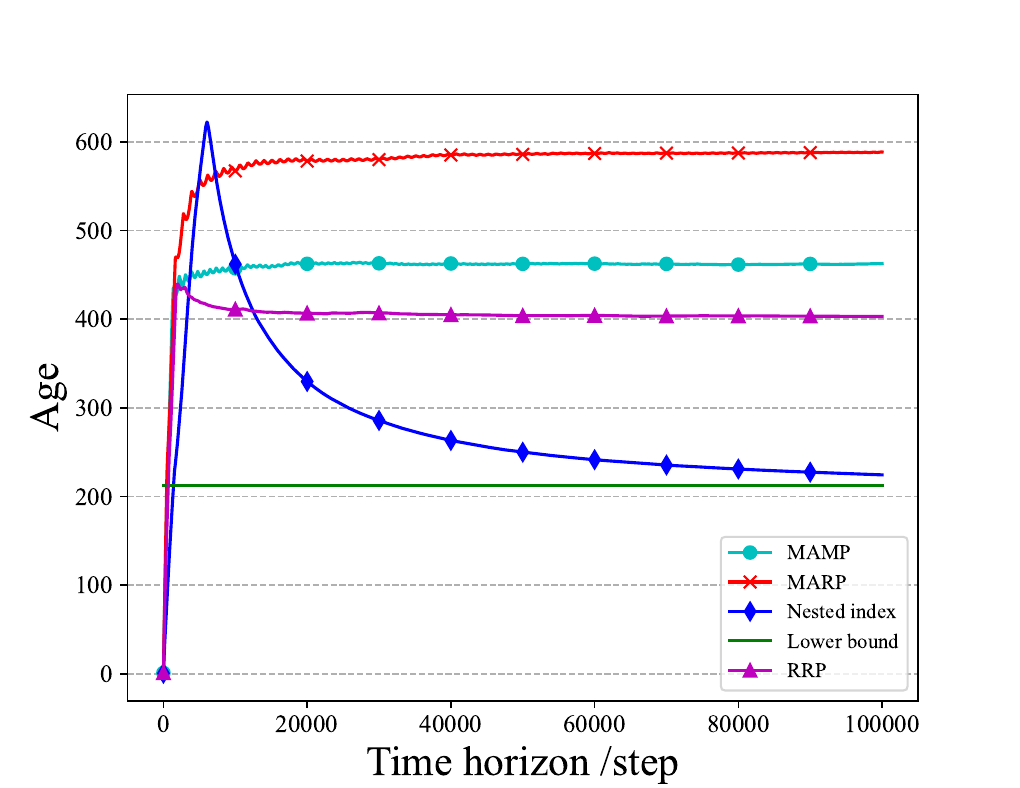}}
    
    \caption{Average AoI performance during computing.}
    \label{aoi_converge}
\end{figure}

Fig. \ref{converge_non} evaluates the average AoI under different policies such as nested index policy, MAMP, MARP, RRP, and lower bound of problem \eqref{originrelax} with $r=20$. These results clearly demonstrate the superiority of the nested index policy over the benchmark policies. 
In Table \ref{table_non}, we present the different strategy performance in non-preemptive Structure.
 The greedy policy MAMP is $17.03\%$ worse than the \textit{Nested Index} policy, MARP is $25.43\%$ worse than \textit{Nested Index} policy and RRP is $4.35\%$ worse than our approach when $r=20$, demonstrating the effectiveness of our method.
The normalized system AoI gets closer to that of the optimal AoI for the relaxed problem with the increase of the system scalar $r$. Fig. \ref{aoi_nonpreemptive} shows the normalized AoI of the system, i.e., the average age per user. The normalized AoI decreases almost monotonically as $r$ increases, further validating the effectiveness of our proposed policy in reducing the AoI as the system scale expands.

\begin{table}[htbp]
\footnotesize
  \centering
  \caption{Strategy Performance Comparison in Non-preemptive Structure \\ 
           (Last 500 steps with $ r = 20 $)}
  \label{table_non}
  \begin{tabular}{| l | r | r | r |}  
    \hline
    Strategies      & Average AoI & Absolute Gap  & Relative Gap \\
    \hline
    Nested Index    & 370.05      & --                        & --           \\
    RRP             & 386.87      & 16.82                    & 4.35\%       \\
    MAMP            & 446.03      & 75.88                    & 17.03\%      \\
    MARP            & 496.22      & 126.76                   & 25.43\%      \\
    \hline
  \end{tabular}
  \footnotetext{Compared with Nested Index}
\end{table}

\subsection{Numerical Results Under Preemptive Scheduling}
\subsubsection{Cost Convergence}
For preemptive condition, the nested index strategy demonstrates an effective ability to regulate costs during the iterative process in Fig. \ref{cost_preemptive}. When scale $r=2$ and $r=20$, the cost curve of the indexing strategy gradually rises with the increase of the number of iterative steps, and the cost gap with the Optimal solution remains stable, indicating that the strategy can approach the theoretical optimal level when allocating resources of a fixed scale. Overall, the indexing strategy achieves a gradual and stable cost through iterative optimization in the preemptive mechanism, verifying its effectiveness in the scenarios of dynamic task interruption and reallocation.

\begin{figure}[htbp]
    \centering
    \subfloat[The average AoI in the non-preemptive structure varies with the system scale
\label{aoi_nonpreemptive}]{\includegraphics[width=2.65in]{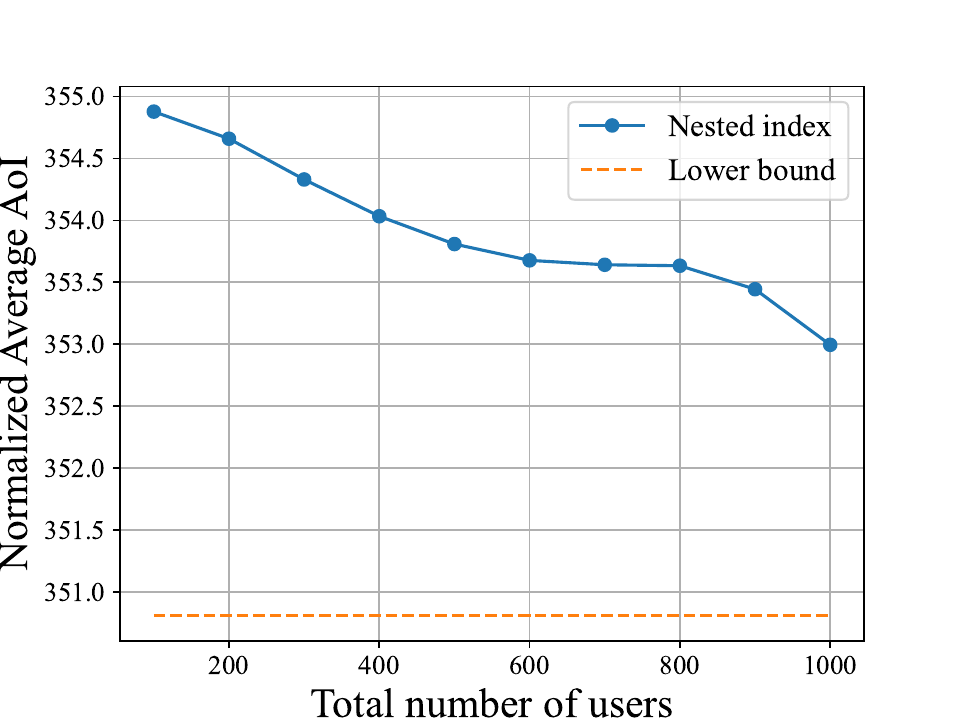}}
    \hfil
    \subfloat[The average AoI in the non-preemptive structure varies with the system scale
\label{aoi_preemptive}]{\includegraphics[width=2.65in]{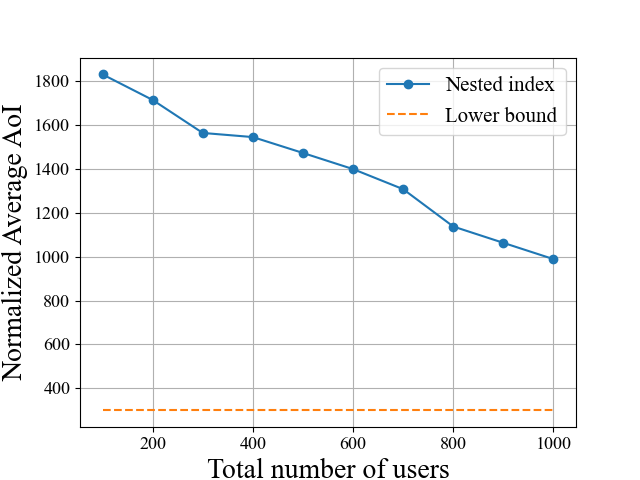}}
    \caption{Average AoI vs. system scale.}
    \label{fig_sim}
\end{figure}

\subsubsection{Average AoI Performance}
As the number of users increases, the normalized average AoI shows a convergent trend and tends to stabilize (Fig. \ref{converge_pre}). Compared with the Lower bound, the AoI of the nested indexing strategy is always higher than the theoretical optimal lower bound. This is due to the overhead of task preemption and decision constraints in the actual scheduling. However, the overall trend still remains to converge close to the lower bound, which still reflects the scalability and performance advantages of the strategy in large-scale user scenarios.
Compared with Fig. \ref{aoi_nonpreemptive}, Fig. \ref{aoi_preemptive} shows the same monotonic trend, further verifying the effectiveness of the nested indexing strategy in the preemptive structure.

In Table \ref{table_pre}, we show the performance of different strategies in the preemptive structure. Compared with other strategies, the Nested Index performs relatively poorly. When $r = 20$, the Nested Index is 48.12\% better than the greedy strategy MAMP, 88.35\% better than the MARP strategy, and 29.04\% better than the RRP strategy.

\begin{table}[htbp]
\footnotesize
  \centering
  \caption{Strategy Performance Comparison in Preemptive Structure \\ 
           (Last 500 steps with $ r = 20 $)}
  \label{table_pre}
  \begin{tabular}{| l | r | r | r |}  
    \hline
    Strategies      & Average AoI & Absolute Gap  & Relative Gap \\
    \hline
    Nested Index    & 224.55    & --                        & --           \\
    RRP              & 403.15      & 178.60      & 44.30\%            \\
        MAMP             & 462.74      & 238.20        & 61.84\%           \\
        MARP             & 588.43     & 363.88      & 51.47\%           \\
    \hline
  \end{tabular}
  \footnotetext{Compared with Nested Index}
\end{table}

\section{Conclusion}
In this study, we explored the minimization of AoI in a MEC system with heterogeneous servers and users, incorporating both preemptive and non-preemptive task scheduling mechanisms. The problem was formulated as a 2-layer MDP, explicitly modeling the distinct dynamics of these mechanisms. A novel nested index framework was introduced to account for the differing behaviors of the two scheduling mechanisms, enabling optimal decision-making under both modes. We devised scheduling polices that employs the nested index, ensuring the asymptotic optimality of the average expected AoI of the MEC system as the system scale expands. Additionally, the computation of the nested index was derived, with the advantage of lower computational complexity. Through simulation, it was proven that the proposed algorithm converges in both scheduling modes and especially provides near-optimal performance in non-preemptive structure, verifying its effectiveness in practical MEC deployments with different task requirements.

In the future, we will conduct more in-depth exploration in the aspects of model complexity and collaborative scheduling. Based on the current multi-layer MDP and nested index strategy, it is significative to consider more dimensional state variables and constraint conditions. Meanwhile, exploring the optimization problem of collaborative scheduling among multiple users and multiple servers is also an interesting direction.


\bibliographystyle{IEEEtran}
\bibliography{tmc_main}






%

\clearpage
\begin{appendices}
\section{}
{\centering\section*{Proof of Lemma \ref{detsta}}}
Recall the Lemma 1: \textit{The optimal solution to each sub-problem \eqref{originrelax} is deterministic stationary.} 
According to \cite{sennott1989average}, if we can proof the optimality of sub-problem satisfying \textit{finite average cost} and \textit{bounded relative cost function}, then the Lemma 1 holds. Finite average cost means that there exists a deterministic stationary policy for each MDP such that the average cost is finite. Bounded relative cost function means there exists a non-negative number $L$ such that the relative cost function $V(s)-V(s_0)\ge-L$.

First, we show the finite property of average cost. The average age under this policy can be calculated by:
\begin{equation}
    \begin{aligned}
      A_{ave}=p_m\cdot 1+\sum\limits_{i=2}^\infty p_m(1-p_m)^{i-1}(i+A_{ave})
\end{aligned}
\end{equation}
Solving for $A_{ave}$, we obtain $A_{ave}=\frac{1}{p_m^2}$.
Since $p_m>0,\forall\ m\in\mathcal{M}$, the average age is finite. Meanwhile, the average server cost per step is bounded by $\nu_m$ under this policy, ensuring the average cost $A_{ave}+\nu_m=\frac{1}{p_m^2}+\nu_m$ is also finite. 

Then we consider the boundary of relative cost function.
If there exists a state $s_0$ which has a minimum cost-to-go value among any state in $\mathcal{S}$, then relative cost function is bounded. Before that, we first show a lemma on the monotonicity of the cost-to-go function:  
\begin{lemma}[\textbf{Monotonicity of Cost-to-Go}]\label{monotonicincrease}
Denote the state of a user at layer $2$ as $s=(A,D)$. The cost-to-go function under $\pi^*$ is non-decreasing \text{w.r.t.} $A$:
\begin{equation}\label{monot}
\begin{aligned}
   &A\le A',D=D' \\&\Rightarrow  V((A,D),\bm{\nu})\le V((A',D),\bm{\nu})
\end{aligned}
\end{equation}
and satisfies:
\begin{equation}\label{monot2}
    \begin{aligned}
   &V((A,D),\bm{\nu})-V((A-D),\bm{\nu})
   \\&\le V((A',D),\bm{\nu})-V((A'-D),\bm{\nu}).
    \end{aligned}
\end{equation}
\end{lemma}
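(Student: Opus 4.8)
The plan is to prove both inequalities by induction along the relative value iteration $V^{(k+1)}(s) = \min_{m,l}\,[\,C_n(s,m) + \sum_{s'} q^{ss'}_{nm} V^{(k)}(s')\,] - \gamma_n^*$, started from $V^{(0)} \equiv 0$. Since the sub-problem has finite average cost and a relative cost function that is bounded below (both established earlier in this appendix), the iterates converge to the differential cost-to-go $V_n$, so it suffices to show that each structural property is preserved by one application of the Bellman operator \eqref{BE} and then pass to the limit.

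First I would treat the monotonicity \eqref{monot}. The key facts are that the immediate cost $C_n(s,m)=A+\nu_m$ from \eqref{costfunction} is non-decreasing in the current age $A$, and that the layer-$2$ continuation action sends $(A,D)$ either to $(A+1,D)$ (age increased, $D$ fixed) with probability $1-p_m$, or to the reset layer-$1$ state of age $A-D$ with probability $p_m$. Both target ages are monotone in $A$, so if $V^{(k)}$ is non-decreasing in age at each layer, then evaluating state $(A,D)$ with the action that is optimal for $(A',D)$ yields an upper bound that is, term by term, no larger than $V^{(k+1)}((A',D))$, while taking the min over actions only lowers the left side. The layer-$1$ offloading action maps $A \mapsto (A+1,A)$, which couples the two coordinates, so I would state the induction hypothesis as a \emph{joint} monotonicity across both layers (and in the reset image) rather than layer-by-layer. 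Equivalently, one can run a sample-path coupling in which both systems use the same action and the same completion randomness, so the age gap $A'-A$ is preserved through every transition, including the simultaneous resets to $A-D$ and $A'-D$; pointwise domination of the instantaneous costs then gives the claim.

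The second inequality \eqref{monot2} is the harder part. I would first reduce it to $A'=A+1$ and telescope, so the goal becomes $V((A+1,D))-V((A,D)) \ge V(A-D+1)-V(A-D)$, i.e. the layer-$2$ value rises in $A$ at least as fast as the layer-$1$ value rises at the corresponding reset age. This is a supermodularity/convexity statement, and it is not self-sustaining in isolation: closing the induction requires simultaneously carrying a convexity invariant for $V^{(k)}$ in $A$ at both layers, because the completion branch mixes a layer-$2$ increment with a layer-$1$ increment evaluated at the shifted age $A-D+1$. My plan is therefore to prove a bundled induction hypothesis --- \eqref{monot}, convexity of $V^{(k)}$ in $A$ at each layer, and the difference-monotonicity \eqref{monot2} --- and verify that the Bellman operator reproduces all three together. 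Where the minimizing server or action changes between $A$ and $A+1$, I would use that substituting the neighbor's optimal action gives a valid one-sided bound, which combined with the hypotheses preserves the convex/supermodular ordering.

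The main difficulty I anticipate is exactly this joint closure: individually, monotonicity is routine via coupling, but the convexity needed for \eqref{monot2} interacts with the age reset $A \mapsto A-D$ on completion, and one must check that the probabilistic mixture of the (convex, monotone) continuation value and the reset value remains convex and preserves the difference ordering. I would also verify the minimum-computation-time regime $A-D \le \tau_n^{\min}$ separately, where by \eqref{threecondition3} the transition is deterministic and the inequalities follow directly from the age shift, providing the base layer of the induction before the stochastic completion branch activates.
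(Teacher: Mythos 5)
Your handling of \eqref{monot} is fine and is essentially the paper's argument: the paper checks exactly your two ingredients (stage-cost monotonicity and transition monotonicity under an age ordering that couples both layers and the reset image) and then invokes a monotone value-iteration result, which is equivalent to your coupling. The genuine gap is in your plan for \eqref{monot2}. The auxiliary invariant you propose to carry --- convexity of $V^{(k)}$ in $A$ at both layers --- is false for this MDP, and it points in the wrong direction exactly where it is needed. In the layer-1 waiting region the average-cost Bellman equation gives $V(A,\bm{\nu}) = A + V(A+1,\bm{\nu}) - \gamma_n^*$ (this is the paper's own layer-1 recursion in Appendix D), so the layer-1 increments are $V(A+1,\bm{\nu})-V(A,\bm{\nu}) = \gamma_n^* - A$, strictly \emph{decreasing} in $A$: the relative value function is concave there, not convex. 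Moreover, if you carry out your own cross-substitution step (bound $V_t((A,D),\bm{\nu})$ using the action optimal at $(A',D)$, and bound $V_t(A'-D,\bm{\nu})$ using the action optimal at $(A-D)$), the immediate-cost terms cancel, and after applying the inductive hypothesis \eqref{monot2} to the continuation branch you are left needing precisely
\begin{equation*}
V_{t+1}(A'-D,\bm{\nu}) - V_{t+1}(A-D,\bm{\nu}) \;\ge\; V_{t+1}(A'+1-D,\bm{\nu}) - V_{t+1}(A+1-D,\bm{\nu}),
\end{equation*}
i.e.\ \emph{non-increasing} layer-1 increments --- a concavity-type property, the reverse of what convexity supplies. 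So the bundled induction as you state it cannot close: the invariant is both unprovable and oriented the wrong way.

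For comparison, the paper introduces no second-order invariant at all. It carries \eqref{monot2} itself as the backward-induction hypothesis and rests on one structural observation you did not isolate: the immediate-cost difference between $(A,D)$ and its reset state $(A-D)$, namely $C((A,D),\bm{a}) - C((A-D),\bm{b}) = \nu_m - \nu_{m'} + (1-p_m)D$, is independent of $A$ (Eq.~\eqref{invariance}); this cancellation plus the expectation inequality \eqref{expectedt+1} propagates \eqref{monot2} through the Bellman operator. (Your cross-substitution recovers the cancellation implicitly, so the real divergence is only the convexity claim.) If you want to repair your route, replace convexity by the correct property --- non-increasing increments of the layer-1 relative value, consistent with the waiting-region recursion above --- and note additionally that the branch in which the layer-1 comparison action is ``offload'' produces layer-2 values at two \emph{different} generation ages, which neither your invariant nor the fixed-$D$ hypothesis \eqref{monot2} covers; that branch requires a separate argument (it is also the step the paper itself treats only tersely).
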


\begin{proof}
The proof exploits Proposition 3.1 in \cite{jiang2015approximate}, which extends the average cost within finite steps to the infinite horizon. Define $f:\mathcal{S}\times\mathcal{A}\times\mathcal{W}\to\mathcal{S}$ as the state transition function, where $s_n(t+1)= f(s_n(t),\bm{y}_n(t),w_t)$ and $w_t\in\mathcal{W}$ is the information process at time $t$. 
First, with given $s'=(A',D')$ and $s\preccurlyeq s'$, where ``$\preccurlyeq$" denotes age ordering \cite{jiang2015approximate}, we note the sub-problem satisfying the following conditions:
\begin{enumerate}
    \item{State Transition Monotonicity:} For every $\bm{a}\in\mathcal{A}$ and $w\in\mathcal{W}$, the state transition function satisfies $f(s,\bm{a},w)\preccurlyeq f(s',\bm{a},w)$ when tasks remain uncompleted. Since state $s$, the next state could be either $(A+1,D)$ or $(A+1-D)$. We have $(A+1,D)\preccurlyeq (A'+1,D')$ and $A-D\le A'-D$ at the same time given $D=D'$;
    
    \item{Stage Cost Monotonicity:} Let $g_{m}(s)=C(s,\bm{a})=\nu_m+(1-p_m)A+p_m(A-D)=\nu_m+A-p_m\cdot D$ denote as the per stage cost. Then we have \(g_m(s) \leq g_m(s')\) for \(A \leq A'\).
  
\end{enumerate}
By Proposition 3.1 in \cite{jiang2015approximate} and above conditions, for any finite $T$, we have $V_T((A,D),\bm{\nu})\le V_T((A',D),\bm{\nu})$. By utilizing the convergence of the value iteration \cite{jiang2015approximate, optimalcontrol}, we derive the inequality \eqref{monot}.

To establish inequality \eqref{monot2}, we first assume that for the \((t+1)\)-stage value function, the difference \(V_{t+1}((A, D), \nu) - V_{t+1}((A-D), \nu)\) is non-decreasing in $A$, i.e., 
\begin{equation}
\begin{aligned}
         V_{t+1}((A,D),\bm{\nu})-V_{t+1}((A-D),\bm{\nu})\\\le V_{t+1}((A',D),\bm{\nu})-V_{t+1}((A'-D),\bm{\nu})
    \end{aligned}
\end{equation}
For any actions \(a, b\), because of the monotonicity of \(V_{t+1}\), larger $A$ leads to non-decreasing future cost differences and it can be bounded by the inductive hypothesis:
\begin{equation}\label{expectedt+1}
    \small\begin{aligned}
        &\mathbb{E}\left[V_{t+1}(f((A,D),\bm{a},w_{t+1}),\bm{\nu})\mid S_t=(A,D),a_t=\bm{a}\right]
        \\&-\mathbb{E}\left[V_{t+1}(f((A-D),\bm{b},w_{t+1}),\bm{\nu})\mid S_t=(A-D),a_t=\bm{b}\right]\\
        &\le \mathbb{E}\left[V_{t+1}(f((A',D),\bm{a},w_{t+1}),\bm{\nu})\mid S_t=(A',D),a_t=\bm{a}\right]
        \\&-\mathbb{E}\left[V_{t+1}(f((A'-D),\bm{b},w_{t+1}),\bm{\nu})\mid S_t=(A'-D),a_t=\bm{b}\right]
    \end{aligned}
\end{equation}
Then we define the $T$-stage value function via the Bellman equation, which optimally selects the action \(a \in \mathcal{A}\) to minimize the sum of immediate cost \(C(s, a)\) and expected future cost:
\begin{equation}\label{stagecostt}
\begin{aligned}
    &V_t(s,\bm{\nu})=\min\limits_{\bm{a}\in\mathcal{A}}\big[ C(s,\bm{a})
    \\&+\mathbb{E}\left[V_{t+1}(f(s,\bm{a},w_{t+1}),\bm{\nu})\mid S_t=s,a_t=\bm{a}\right]\big].    
\end{aligned}
\end{equation}
And the difference in immediate costs between states \((A, D)\) and \((A-D)\) is invariant to $A$, depending only on server costs and task parameters:
\begin{equation}\label{invariance}
\begin{aligned}
    &C((A,D),\bm{a})-C((A-D),\bm{b})\\
    =&\nu_m-\nu_{m'}+(1-p_m)D\\
    =&C((A',D),\bm{a})-C((A'-D),\bm{b}),\forall \bm{a},\bm{b}\in\mathcal{A},    
\end{aligned}
\end{equation}
Substituting the Bellman equation (\ref{stagecostt}) into the expectation inequality (\ref{expectedt+1}) with the cost invariance (\ref{invariance}), then for state $(A, D)$, we have $V_t((A, D), \nu) = \min_a \left[ C((A, D), a) + \mathbb{E}[\cdot] \right]$, and for state $(A-D)$,  we have $V_t((A-D), \nu) = \min_b \left[ C((A-D), b) + \mathbb{E}[\cdot] \right]$. Subtracting these two equations and applying (\ref{invariance}), the optimal choice does not destroy the monotonicity of future cost differences:
    \begin{equation}
        \begin{aligned}
            V_t((A, D), \nu) - V_t((A-D), \nu) \\\leq V_t((A', D), \nu) - V_t((A'-D), \nu).
        \end{aligned}
    \end{equation}
It holds for all $t$ by using backward induction, i. Hence, the lemma follows.
\end{proof}

Lemma \ref{monotonicincrease} demonstrates that the cost function $V((A,D),\bm{\nu})$ is monotonic increasing w.r.t. $A$. To proof the boundary of relative cost function, we assert that the value function $V((A,D),\bm{\nu})$ is also monotonic increasing w.r.t. $\nu_m,\forall\ m\in\mathcal{M}$, and it can be proved by contradiction: If $V((A,D),\bm{\nu})$ is not increasing in $\nu_m$, then there exist a $\nu'_m>\nu_m, \text{ s.t. } V((A,D),\{\bm{\nu}_{-m},\nu'_m\})<V((A,D),\bm{\nu}))$. However, by the definition of server cost $\bm{\nu}$, if we apply the optimal policy under $\{\bm{\nu}_{-m},\nu'_m\}$ to the condition of server cost $\bm{\nu}$, we will obtain a lower cost-to-go due to $\nu'_m>\nu_m$, which contradicts the optimality. Therefore, the value function is also increasing w.r.t. $\nu_m,\forall\ m\in\mathcal{M}$. With Lemma 5 established, the boundary of relative cost function holds by choosing \(s_0\) as the minimal-age state, where \(V(s_0, \nu)\) serves as the reference point with \(L = \max_s |V(s, \nu) - V(s_0, \nu)|\), which is finite due to finite average cost. Hence, the result of Lemma 1 follows.

\section{}\label{ap1}
{\centering\section*{Proof of Proposition \ref{mltt}}}
Recall the Proposition 1: \textit{The optimal solution $\pi_n^*$ to the sub-problem \eqref{originrelax} is MLTT.} According to the definition of MLTT in Definition 4, Proposition 1  needs to be proved from the following two aspects:
\begin{itemize}
    \item The existence of critical thresholds for the decisions of Layer 1 and Layer 2;
    \item The task completion probability $p_{\pi_n(s(t))}$  changes monotonically with $A$.
\end{itemize}
We will show the proof of MLTT for above two aspects under preemptive and non-preemptive scheduling respectively. To streamline notation, we drop subscript $n$ to simplify the notation as the proposition holds for each sub-problem.

\subsection{Proof of MLTT under Preemptive Scheduling}

To prove that the optimal solution is MLTT under preemptive scheduling, we first show that the expected cost-to-go differences is strictly increasingly in $A$.  For any user with state $s=A$ at Layer 1, the expected cost-to-go differences with higher task completion probability for server $m$ (i.e., $p_m> p_{m'}$) is as follows:
\begin{equation}
\label{19}
\begin{aligned}
       &\mu_{m'}(A,\bm{\nu})- \mu_m(A,\bm{\nu})\\=&\ \ 
     \tau^{min}\cdot(\nu_{m'} -\nu_{m})\\
     &+(1-p_{m'})[A+\tau^{min}+V((A+\tau^{min}+1,A),\bm{\nu})]\\
     &-(1-p_m)[A+\tau_{n}^{min}+ V((A+\tau^{min}+1,A),\bm{\nu})]\\
     &+p_{m'} V(\tau^{min},\bm{\nu})-p_{m} V(\tau^{min},\bm{\nu}).
\end{aligned}
\end{equation}
Isolating terms involving the current age $A$, we rewrite it as:
\begin{equation}\label{simpmono1}
    \begin{aligned}
           &\mu_{m'}(A,\bm{\nu})- \mu_m(A,\bm{\nu})=
           \cdots
           \\&+(p_m-p_{m'})[A+ V((A+\tau^{min}+1,A),\bm{\nu})\\&- V(\tau^{min}+1,\bm{\nu})],
    \end{aligned}
\end{equation}
where the constant terms independent of $A$ are omitted. According to Lemma 5, given $D=A$, we can easily derive that $V((A+\tau^{min}+1,A),\bm{\nu})$ is strictly increasing w.r.t. $A$. Since $p_m>p_{m'}$, Eq. \eqref{simpmono1} is strictly increasing w.r.t. $A$. 
When $A\rightarrow 0$, there is
\begin{equation}
\begin{aligned}
       &\mu_{m'}(A,\bm{\nu})- \mu_m(A,\bm{\nu})\\\approx&\ \ 
     \tau^{min}\cdot(\nu_{m'} -\nu_{m})+\tau^{min}\cdot(p_m-p_{m'})\\
     \end{aligned}
\end{equation}
The probability $p_m \in [0, 1]$, while $\nu_m$ is usually greater than 1. Thus,  the difference of $\nu_m$ dominates the total cost, i.e., $\mu_{m'}(A,\bm{\nu})- \mu_m(A,\bm{\nu})<0$. When $A \rightarrow \infty$, according to \eqref{simpmono1}, age cost dominates and the cost difference becomes positive, i.e., $\mu_{m'}(A,\bm{\nu})- \mu_m(A,\bm{\nu})>0$. 
Since the state space is a discrete set of natural numbers, there must exist a certain smallest \(A = H_n(m, m', 0)\) such that the cost difference is non-negative for the first time, that is, the threshold exists.


Then we show the monotonicity of task completion probability at  Layer 1 under preemptive scheduling. 
For any two states \(s(t)\) and \(s'(t)\) where \(\Delta_n(t)=A_1 < A_2=\Delta'_n(t)\), if \(A_1 <  H_n(m, m', 0) < A_2\), then \(p_{\pi_n(A_1)} = p_{m'}\)(low completion probability) and \(p_{\pi_n(A_2)} = p_m\) (high completion probability), satisfying  \(p_{m'} < p_m\). If \(A_1, A_2 <  H_n(m, m', 0)\) or \(A_1, A_2 \geq  H_n(m, m', 0)\), then both choose the low completion probability server, \(p_{\pi_n(A_1)} = p_{\pi_n(A_2)} = p_{m'}\). In conclusion, for any  \(A_1 < A_2\), there must be \(p_{\pi_n(A_1)} \leq p_{\pi_n(A_2)}\),  that is, the priority increases monotonically with age.


For the existence of critical threshold with state $s=(A,D)$ at Layer 2, the expected cost difference between selecting server $m$ and $m'$ can be expressed as:
\begin{equation}\label{simpmono}
    \begin{aligned}
           &\ \ \ \ \ \mu_{m'}((A,D),\bm{\nu})- \mu_m((A,D),\bm{\nu})
           \\&=\cdots+(p_m-p_{m'})[A+ V((A+1,D),\bm{\nu})
          \\&\ \ \ \ \ -V(A-D+\tau^{min}+1,\bm{\nu})].
    \end{aligned}
\end{equation}
Here, $A-D$ represents the elapsed computation time. As $A$ increases, the elapsed computing time $A-D$ gradually approaches the minimum computing time $\tau^{min}$,  and the marginal cost of delaying the task completion decreases. That is, the impact of $A \rightarrow A+1$ on the long-term average cost gradually weakens, resulting in a monotonically decreasing on marginal cost. 
When $A \rightarrow D+\tau^{min}$, the immediate cost advantage of the low-completion-probability server dominates, and \(\mu_{m'}((A,D),\bm{\nu})- \mu_m((A,D),\bm{\nu}) > 0\) (choosing $m'$ is more preferable). When $A\rightarrow \infty$, the future cost advantage of the high completion probability server (faster reduction of age) is gradually emerging, and due to the decreasing marginal cost, $\mu_{m'}((A,D),\bm{\nu})- \mu_m((A,D),\bm{\nu})$ monotonically decreases as $A$ increases, leading to $\mu_{m'}((A,D),\bm{\nu})- \mu_m((A,D),\bm{\nu})<0$. Therefore there exists a unique critical threshold $H(m,m',D)$.


Similarly, in Layer 2, given the generation age $D$, as $A$ increases, the optimal policy transitions to servers with higher \(p_m\), maintaining \(p_{\pi_n(s(t))} \leq p_{\pi_n(s(t'))}\) for \(\Delta_n(t) < \Delta'_n(t)\). 
These properties ensure that the sub-problem \eqref{originrelax} exhibits MLTT in preemptive scheduling.

\subsection{Proof of MLTT under Non-preemptive Scheduling}
Next, we consider the proof under non-preemptive condition. The difference between preemptive and non-preemptive conditions lies in the second layer of multi-layer MDP. 
For non-preemptive condition, the $M+1$ actions also can be divided into 2 groups, i.e., continuously computing, or interrupts computing on the server $m$ and returns to Layer 1.
Since a task can not be switched to another server during computation, all other actions can be considered as dropping the task, which will cause the user's status to directly return from $(A,D)$ in Layer 2 to Layer 1, and the age is updated to $A+1$. 
Using $m' = 0$ to represent the virtual server that abandons the task. The decision at Layer 2 under non-preemptive manner can be regarded as the server's comparison of $(m,0)$, corresponding to the threshold $H_n(m,0,D)$ in Definition 4 of the main text. Specifically, for the non-preemptive sub-problem (3), it is necessary to prove the existence of a threshold $H_n(m,0,D)$. When the current age $\Delta_n(t) = A \geq H_n(m,0,D)$, the expected cost of continuing to execute the current server $m$ is lower; conversely, it is better to choose the virtual server $0$, i.e., giving up the task. 
The difference in the expected costs between the two is:
\begin{equation}
    \begin{aligned}
        &\mu_{nm}(A, D, \nu) - \mu_{n0}(A, D, \nu) \\=& \left[\nu_m + (1 - p_m)V_n((A+1, D), \nu)\right] - \left[A + 1 + V_n(A+1, \nu)\right]
        \\=& (\nu_m - A - 1) + (1 - p_m)\left[V_n((A+1, D), \nu) - V_n(A+1, \nu)\right]
    \end{aligned}
\end{equation}

By Lemma 5, both \(V_n((A+1, D), \nu)\) and  \(V_n(A+1, \nu)\) are non-decreasing with respect to A, and \(V_n((A+1, D), \nu) \geq V_n(A+1, \nu)\). 
And the immediate cost difference  \(\nu_m - A - 1\) strictly decreases with $A$ leading the overall cost difference to monotonically decrease. Therefore, \(\mu_{nm}(A, D, \nu) - \mu_{n0}(A, D, \nu) \) strictly decreases as $A$ increases. When $A \rightarrow D$, the immediate cost of giving up the task $A+1$ is low, while the high cost of continuing to execute $\nu_m$ dominates, and the cost difference is positive (giving up is better, \(\mu_{nm} > \mu_{n0}\)). When $A\rightarrow \infty$, The high completion probability \(p_m\) of continuing the execution reduces future age growth, and the cost difference is negative (continuing the execution is better, \(\mu_{nm} < \mu_{n0}\)). Combining the monotonic decreasing property, there must exist a unique \(H_n(m, 0, D)\) that causes the cost difference to turn from positive to negative.

In summary, in both preemptive and non-preemptive condition, the sub-problem (3) naturally satisfies the threshold property due to the binary division of decision actions and the monotonicity of the value function. Since the core of the MLTT structure is ``the existence of a critical threshold related to the state", the preemptive and non-preemptive threshold-based strategy directly conforms to the definition 4, therefore, the sub-problem (3) also has an MLTT structure under the preemptive and non-preemptive scheduling.

\section{}\label{ap2}
{\centering\section*{Proof of Theorem \ref{intraindex}}}
Review Theorem 1 mentioned before:
\textit{The MDP sub-problem \eqref{originrelax} is intra-indexable given cost $\bm{\nu}$.} The intra-indexability will hold, if we show the following two statements:
\begin{enumerate}
    \renewcommand{\labelenumi}{(\roman{enumi})} 
    
    \item If $s=(A,D)\in \mathcal{P}_{nm}^l(\bm{\nu})$, then  $s \in \mathcal{P}_{nm}^l(\bm{\nu}')$ must hold for $\bm{\nu}'=[\nu_1,\cdots,\nu_m+\Delta,\cdots,\nu_M]$,  where $\Delta>0$. 
    
    \item If $\nu_m\to+\infty$, then $\lim\limits_{\nu_m\to+\infty}\mathcal{P}_{nm}^l(\bm{\nu}')=\mathcal{S}_l$ .
\end{enumerate}
\noindent In Appendix B, we have shown that the optimal policy for the MDP is MLTT. \(H_n(m-1, m, d)\) is the threshold age at which server \(m-1\) prefers better than $m$ for a task generated at age $d$. The passive set for layer 2 is \cite{zou2021minimizing}:
\begin{equation}
\begin{aligned}
     &\mathcal{P}_{nm}^l(\bm{\nu})=
    \mathop{\cup}\limits_{d=1}^\infty\big\{(a,d)\mid a\in\\
    &\{1,\cdots,H_n(m-1,m,d)-1\}\cup\{H_n(m,m+1,d),\cdots\}\big\}, \\&\forall 1<m<M,
\end{aligned}
\end{equation}
and
\begin{equation}
\begin{aligned}
     \mathcal{P}_{nM}^l(\bm{\nu})=
    \mathop{\cup}\limits_{d=1}^\infty\left\{(a,d)\ |\ a\in  \{1,\cdots,H_n(M-1,M,d)-1\}\right\}
\end{aligned}
\end{equation}

To establish statement (i), it suffices to show \( \mathcal{P}_{nm}^l(\bm{\nu}) \subseteq \mathcal{P}_{nm}^l(\bm{\nu}') \), which is equivalent to show
\begin{subequations}
    \begin{align}
        (a)H_n(m-1,m,d) \le H_n'(m-1,m,d) ,\ \forall m \le M \label{35a}\\
        (b)H_n(m,m+1,d)\ge H_n'(m,m+1,d),\ \forall m\ge M \label{35b}
    \end{align}
\end{subequations}
where $H_n'(m-1,m,d)$ is threshold age under cost vector augmentation \( \bm{\nu}' = [\nu_1,...,\nu_m+\Delta,...,\nu_M] \).
For any state $s$ in Layer $l$, if \(s \in P_{nm}^l(\nu)\) (that is,  \(s < H_n(m-1,m,d)\)), due to \(H_n(m-1,m,d) \le H_n'(m-1,m,d)\), there is  \(s < H_n'(m-1,m,d)\). Thus, \(s \in P_{nm}^l(\nu')\), meaning \(P_{nm}^l(\nu) \subseteq P_{nm}^l(\nu')\). And \(H_n(m,m+1,d) \geq H_n'(m,m+1,d)\) guarantees to avoid strategic confusion caused by threshold crossover. It means that the threshold of the server with a high completion probability decreases after the cost increases, but the range of its sub-optimal state still maintains a hierarchical nature. Therefore, if we show \eqref{35a} and \eqref{35b}, then statement (i) holds.


Before proof statement (i), we first have to introduce the following lemma. Without loss of generality, we specify the order of $p_m$, i.e., $p_{m-1}\le p_m,\forall 1<m\le M$.
\begin{lemma}[Bound of Cost-to-go Function]\label{upperbound1}
    Given server cost $\bm{\nu}$ and state $s=(A,D)$ at layer 2, denote $\bm{\nu}'=[\nu_1,\cdots,\nu_m+\Delta,\cdots,\nu_M],\ \forall \Delta \ge 0$. The difference between two cost-to-go functions given $\bm{\nu}$ nad $\bm{\nu}'$ can be upper-bounded by
    \begin{equation}\label{upper}
        V_n(s,\bm{\nu}')-V_n(s,\bm{\nu})\le \frac{\Delta}{p_m^2},\\ \forall 1\le m\le M.
    \end{equation}

    The difference between two cost-to-go functions can be lower-bounded by
    \begin{equation}\label{lower}
    \begin{aligned}
         V_n(s,\bm{\nu}')-V_n(s,\bm{\nu})\ge -\frac{\Delta}{p_{m+1}^2},  \\\forall 1\le m\le M, A\ge H_n(m,m+1,D).
    \end{aligned}
    \end{equation}
\end{lemma}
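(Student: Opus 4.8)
The plan is to read \eqref{upper}--\eqref{lower} as a sensitivity bound on the differential value function $V_n$ with respect to a single coordinate of the activating-cost vector, and to prove it by a policy-comparison (coupling) argument. The two directions are handled asymmetrically: for the upper bound I would evaluate the policy optimal under $\bm\nu$ while charging the perturbed cost $\bm\nu'$, and for the lower bound evaluate the $\bm\nu'$-optimal policy while charging $\bm\nu$. In each case the optimality of the true value function turns the suboptimal evaluation into a genuine one-sided bound, so that only one inequality per direction must be propagated. To make the average-cost bookkeeping rigorous I would carry the argument out on the finite-horizon iterates $V_n^{(k)}$ generated by the Bellman operator in \eqref{BE} and pass to the limit using the convergence of value iteration already invoked in the proof of Lemma~\ref{monotonicincrease}.

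For the upper bound, fix the $\bm\nu$-optimal (MLTT) action at each state and couple the two cost systems on a common state trajectory and common shifted-geometric completion events. The only per-slot discrepancy is the extra $\Delta$ charged on slots with $y_{nm}(t)=1$, i.e. $\Delta\,\mathbf{1}\{y_{nm}(t)=1\}$; after subtracting the induced increase of the optimal average cost $\gamma_n^*$, the accumulated discrepancy over one regeneration cycle is controlled by the expected number of server-$m$ busy slots in that cycle. Because a task on server $m$ completes geometrically with rate $p_m$, and because the differential value function is itself on the scale of the average age $1/p_m^2$ established in the proof of Lemma~\ref{detsta}, this accumulated discrepancy is bounded by $\Delta/p_m^2$, giving \eqref{upper}.

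For the lower bound I would run the symmetric comparison, now holding the $\bm\nu'$-optimal action fixed and charging $\bm\nu$, so the difference is non-positive in the worst case and it remains to bound its magnitude. The hypothesis $A\ge H_n(m,m+1,D)$ enters precisely here through the threshold ordering established in Proposition~\ref{mltt}: once the current age exceeds $H_n(m,m+1,D)$, the optimal policy at the Layer-2 state $s=(A,D)$ no longer weighs server $m$ against a slower server but against server $m+1$, so the return excursion to the reference state is governed by the completion rate $p_{m+1}$ rather than $p_m$. Repeating the same bookkeeping with $p_{m+1}$ in place of $p_m$ yields the magnitude bound $\Delta/p_{m+1}^2$ and hence \eqref{lower}. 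Throughout, the monotonicity of $V_n$ in $A$ from Lemma~\ref{monotonicincrease} is used to fix the signs of the age-difference terms that appear when the Bellman recursion is unfolded.

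The hard part will be the bookkeeping forced by the differential nature of $V_n$: since the relative value function is defined only up to the optimal average cost and $\gamma_n^*(\bm\nu)\neq\gamma_n^*(\bm\nu')$, the raw per-slot discrepancy $\Delta\,\mathbf{1}\{y_{nm}(t)=1\}$ must be taken net of the change in average cost before the geometric sum converges. Making this precise requires fixing a common reference state --- the minimal-age Layer-1 state reached after a completion --- and showing the excursion to it has finite expected length, which follows from the finite-average-cost property in Lemma~\ref{detsta}. A secondary difficulty is ensuring that the $\min$ over actions in the Bellman recursion does not destroy the inductive inequality; this is exactly why the suboptimal-policy evaluation is used, as it replaces the $\min$ on one side by a fixed action and reduces each inductive step to a direct term-by-term comparison.
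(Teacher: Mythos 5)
Your proposal is correct and follows essentially the same route as the paper: the paper likewise writes $V_n$ via a stochastic-shortest-path/regenerative decomposition $V_n(s,\bm{\nu})=\mathrm{Cost}^{\pi}_{s\zeta}(\bm{\nu})-\gamma_n\cdot N^{\pi}_{s\zeta}$ to a Layer-1 recurrent state, obtains the upper bound by evaluating a fixed (suboptimal) policy under both cost vectors so that the discrepancy is $\Delta\cdot N^{\pi}_{s\zeta}$ with the excursion length bounded by the geometric-completion recursion $N=1/p_m^2$, and obtains the lower bound from the average-cost sensitivity $\gamma_n^*-\gamma_n^{*\prime}\ge-\Delta$ combined with the threshold condition $A\ge H_n(m,m+1,D)$ forcing the excursion rate $p_{m+1}$. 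Your only deviations (finite-horizon iterates plus coupling instead of the direct SSP identity, and charging $\bm{\nu}$ against the $\bm{\nu}'$-optimal policy for the lower bound rather than using the $\gamma$-difference) are symmetric variants of the same policy-comparison mechanism, not a different proof.
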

\noindent

The proof leverages the stochastic shortest path (SSP) formulation of the AoI minimization problem \cite{optimalcontrol}. By defining a reference recurrent state and analyzing the expected cost and step counts under policy \(\pi_n'\), we derive the upper bound \eqref{upper} using the geometric distribution of task completion times. The lower bound \eqref{lower} follows from the monotonicity of the cost-to-go function and the threshold properties established in Proposition 1. The specific proof process is as follows:

\begin{proof}
  Recall the optimal cost for user $n$ is denoted as $\gamma_n^*$. The cost-to-go function can be rewritten as: 
\begin{equation}\label{sspcost}
\small\begin{aligned}
V_n(s,\bm{\nu})=
&\mathop{\rm{min}} \limits_\pi( \mathbb{E}_{\pi}[\text{the cost from state } s \\ &\ \ \ \ \  \text{ to the recurrent state for the first time}] \\
&-\mathbb{E}_\pi [\text{the cost from state } s \\& \ \  \ \ \ \text{ to the recurrent state with stage cost } \gamma_n^*])
\end{aligned}
\end{equation}
Unlike Appendix B in \cite{zou2021minimizing}, where the recurrent state for reference is not defined as the state with minimum AoI, the recurrent state here can be any state $\zeta$ in the state space $S_1$ of Layer 1, and its cost-to-goal function is 0. 
That means no matter the state of the current Layer 1 is \(\Delta_n(t)=k, k \in \mathbb{N}\), after the task is completed, it will return to a certain state of Layer 1 (it may be a different $k'$, but still belongs to $S_1$), therefore all states of Layer 1 constitute a recurrent class.
Given a policy $\pi_n$ and server cost $\bm{\nu}$, let $\text{Cost}^{\pi_n}_{s\zeta}(\bm{\nu})$ denote the expected cost from $s=(A,D)$ in Layer 2 to recurrent state $\zeta$ in Layer 1, with $N_{s\zeta}^{\pi_n}$ representing the corresponding expected number of steps. 
According to \cite{zou2021minimizing}, \(V_n(s, \nu)\) can be decomposed into the expected cost from $s$ to the recursive state $\zeta$, minus the accumulation of the average cost during this period:
\begin{equation}
    V_n(s, \bm\nu) = \text{Cost}^{\pi_n^*}_{s \zeta}(\nu) - \gamma_n^* \cdot N_{s\zeta}^{\pi^*_n}
\end{equation}
 When the strategy is fixed as  \(\pi_n\), which is not necessarily the optimal strategy, we have:
\begin{equation}
    V_n(s, \bm\nu) \le \text{Cost}^{\pi_n}_{s \zeta}(\nu) - \gamma_n \cdot N_{s\zeta}^{\pi_n}
\end{equation}
Therefore, for recurrent states $\zeta=(j), \text{where }  j\ge A-D$, we have
\begin{equation}
\begin{aligned}
    V_n(s,\bm{\nu}')-V_n(s,\bm{\nu})    \le\text{Cost}^{\pi_n}_{s\zeta}(\bm{\nu}')-\text{Cost}^{\pi_n}_{s\zeta}(\bm{\nu})
\end{aligned}
\end{equation}
where \(\text{Cost}_{s\zeta}^{\pi_n}(\nu') - \text{Cost}_{s\zeta}^{\pi_n}(\nu) = \Delta \cdot N_{s\zeta}^{\pi_n}\) represents the total expected cost increasing from state $s$ to $\zeta$ after the cost of server $m$ increases by $\Delta$  (\cite{zou2021minimizing}).

According to Proposition \ref{mltt}, for any generation age $D$, there exists the threshold $H_n(m-1,m,D)$. To derive the upper bound, consider the following policy $\pi_n'$:  
\begin{itemize}
    \item For all state $s=(A,D)$ where $D < H_n(m-1,m,0)$ and $A\ge\max_d H_n(m-1,m,d)$, select server $m$, i.e., $\pi_n'(s)=m$; 
    \item For all other $s'$, $\pi_n'(s')=1$. 
\end{itemize}
Under policy $\pi_n'$,  the task completion probability at each state satisfies $p_{\pi_n'(s)}\le p_{\pi_n(s)}$, meaning $\pi_n'$ prioritizes servers with lower completion probabilities compared to the optimal policy \(\pi_n\). And the cardinality of the set of states where $\pi_n'$ selects server $m$ is larger than that under $\pi_n$, i.e., $|\{s\mid \pi_n'(s)=m\}|>|\{s\mid \pi_n(s)=m\}|$. This structural property ensures the number of times that $\pi_n'$ elects server $m$ is not less than that of the optimal strategy $\pi_n$, i.e., $N_{s\zeta}^{\pi_n'}>N_{s\zeta}^{\pi_n}$. Multiply both sides by $\Delta$: \(\Delta \cdot N_{s\zeta}^{\pi_n} \leq \Delta \cdot N_{s\zeta}^{\pi_n'}\). Thus, we have:
\begin{equation}\label{cost}
\begin{aligned}
    \text{Cost}^{\pi_n}_{s\zeta}(\bm{\nu}')-\text{Cost}^{\pi_n}_{s\zeta}(\bm{\nu})
    =\Delta \cdot N_{s\zeta}^{\pi_n} \leq \Delta \cdot N_{s\zeta}^{\pi_n'}.
\end{aligned}
\end{equation}
 If the first step completes the task with a probability $p_m$, it directly reaches the cyclic state 
 \(\zeta\), with the weighted expected steps $p_m \cdot 1$. Otherwise, the probability that the first $i-1$ steps are not completed is $(1-p_m)^{i-1}$. Then in the $i$-th step, if the task is completed, then consume $i$ steps. When the $i$-th step is not completed, the remaining expected steps are still $N_{s\zeta}^{\pi_n'}$. Thus the total number of steps is $i+N_{s\zeta}^{\pi_n'}$and the weighted expected cost is \(p_m(1-p_m)^{i-1} \cdot (i + N_{s\zeta}^{\pi_n'})\). Therefore, according to the total expectation theorem, the expected step number $N_{s\zeta}^{\pi_n'}$ can be decomposed into: 
\begin{equation}
\begin{aligned}\label{47}
       N_{s\zeta}^{\pi_n'}&=p_m\cdot 1+\sum\limits_{i=2}^\infty p_m(1-p_m)^{i-1}(i+N_{s\zeta}^{\pi_n'}),
\end{aligned}
\end{equation}
By the series calculation: $\sum\limits_{i=2}^\infty p_m(1-p_m)^{i-1}=1-p_m$ and $\sum\limits_{i=2}^\infty p_m(1-p_m)^{i-1}i=\frac{1}{p_m}+1$, then we have:
\begin{equation}
\begin{aligned}
       N_{s\zeta}^{\pi_n'} - (1-p_m)N_{s\zeta}^{\pi_n'} &= p_m + \sum_{i=2}^{\infty} p_m(1-p_m)^{i-1} i\\&= p_m + \frac{1-p_m}{p_m}
\end{aligned}
\end{equation}
Therefore, $N_{s\zeta}^{\pi_n'}=\frac{1}{p_m^2}$. Substituting into (\ref{cost}), we derive the upper bound for the cost-to-go function difference:
\begin{equation}
\begin{aligned}
    &\ \ \ \ \ V_n(s,\bm{\nu}')-V_n(s,\bm{\nu})\\&\le\text{Cost}^{\pi_n}_{s\zeta}(\bm{\nu}')-\text{Cost}^{\pi_n}_{s\zeta}(\bm{\nu})\\&\le\Delta\cdot N_{s\zeta}^{\pi_n'}\le\frac{\Delta}{p_m^2}.
\end{aligned}
\end{equation}
According to Lemma 4.4 in \cite{zou2021minimizing}, the lower bound is established by considering the optimal stage cost difference \(\gamma_n^* - \gamma_n^{*'}\ge-\Delta\), where \(\gamma_n^{*'}\) corresponds to the server cost \(\nu'\):
\begin{equation}
\begin{aligned}
    &\ \ \ \ \ V_n(s,\bm{\nu}')-V_n(s,\bm{\nu})
    \\&\ge(\gamma_n^*-\gamma^{*'}_n)\cdot N_{s\zeta}^{\pi_n}\\&\ge -\frac{\Delta}{p_{m+1}^2},\ \
    \forall A\ge H_n(m,m+1,D),    
\end{aligned}
\end{equation}
\end{proof}

We first show $H_n(m-1,m,d) \le H_n'(m-1,m,d)$ under preemptive scheme by contradiction. Suppose that $H_n(m-1,m,d) > H_n'(m-1,m,d)$. At state $s'=(H_n'(m-1,m,d), d)$ and given $\bm{\nu}$, taking server $m-1$ has a smaller expected cost, and we have:
\begin{equation}
    \mu_{n,m-1}(s',\bm{\nu})\le \mu_{n,m}(s',\bm{\nu}),
\end{equation}
which expands to:
\begin{equation}\label{m'}
\begin{aligned}
        &(p_m-p_{m-1})[H_n'+ V_n(H_n',\bm{\nu})
    -H_n'+d
    \\&-V_n((H_n'+1-d,H_n'+1-d),\bm{\nu})]\\&\le \nu_m-\nu_{m-1},    
\end{aligned}
\end{equation}
where we denote $H_n= H_n(m-1,m,d)$ and $H_n'= H_n'(m-1,m,d)$ for simplicity.
If instead given $\bm{\nu}'$, taking server $m-1$ obtains a larger expected cost, i.e.,
\begin{equation}
    \mu_{n,m-1}(s',\bm{\nu}')\ge \mu_{n,m}(s',\bm{\nu}'),
\end{equation}
leading to:
\begin{equation}\label{nu'}
\begin{aligned}
       &(p_m-p_{m-1})[H_n' + V_n((H_n'+1,d),\bm{\nu}')\\&-H_n'+d
       -V_n((H_n'+1-d,H_n'+1-d),\bm{\nu}')]\\&\ge \nu_m'-\nu_{m-1}.    
\end{aligned}
\end{equation}
Subtracting Eq. \eqref{m'} from Eq. \eqref{nu'}, we have
\begin{equation}
\begin{aligned}
    &V_n((H_n'+1,d),\bm{\nu}')-V_n((H_n'+1,d),\bm{\nu}) \\\geq 
&V_n((H_n'+1,d),\bm{\nu}')-V_n((H_n'+1,d),\bm{\nu})\\&-
    [V_n((H_n'+1-d,H_n'+1-d),\bm{\nu}')\\&-V_n((H_n'+1-d,H_n'+1-d),\bm{\nu})]\\\geq &
    \frac{\Delta}{p_m-p_{m-1}}
\end{aligned}
\end{equation}
When $p_m-p_{m-1}\le p_m^2$, it contradicts the upper bound in Lemma \ref{upperbound1}, invalidating the initial assumption and confirming \(H_n(m-1, m, d) \leq H'_n(m-1, m, d)\).

Then, we show $H_n(m-1,m,d) \le H_n'(m-1,m,d)$ under non-preemptive condition by contradiction. Similarly, assume $H_n(m-1,m,d) > H_n'(m-1,m,d)$. Since the user cannot switch servers during computing, the optimal action when the computation does not finish is to keep the task at the current server. 
The definition of the passive set needs to consider the constraints of the continuous execution of the task. According to Lemma 6, for a user at Layer 2, the cost-to-go function satisfies \(V_n(s, \nu') - V_n(s, \nu) = 0\) when server $m$ is non-optimal, and \(V_n(s, \nu') - V_n(s, \nu) \le \Delta / p_m^2\) when $m$ is optimal due to the non-switching property. To formalize the cost-to-go function behavior under this constraint, we derive the following inequalities that characterize the relationship between server costs and state transitions, which are crucial for proving the monotonicity of the passive set in non-preemptive scheduling:
\begin{equation}
    \begin{aligned}\label{61}
        &\ \ \ \ \ (1-p_m)^2V_n((H'_m+2),\bm{\nu})\\&\le (2-p_m)\nu_m-(2-p_{m-1})\nu_{m-1},
    \end{aligned}
\end{equation}
and
\begin{equation}
    \begin{aligned}\label{62}
               &\ \ \ \ \ (1-p_m)^2V_n((H'_m+2),\bm{\nu}')\\&\ge (2-p_m)\nu_m'-(2-p_{m-1})\nu_{m-1}.
    \end{aligned}
\end{equation}
To prove inequalities \eqref{61}, we assume the user selects server $m$ when in the state \((H'_m+2, D)\), and the task is not completed within 2 consecutive time slots with the probability of \((1-p_m)^2\), then the AoI increases from \(H'_m+2\) to \(H'_m+4\).  In state $s=(H'_m+2,D)$, the immediate cost of selecting server $m$ includes the current AoI $H'_m+2$ and the fixed cost $\nu_m$ of using server $m$, thus \(C(s, m) = (H'_m + 2) + \nu_m\). Then we consider the expected future value.  When the task is not completed with a probability \(1-p_m\), the next state is  \(s' = (H'_m + 3, D)\) and its value is \(V_n((H'_m + 3), \nu)\), thus the expected future value is \(\mathbb{E}\left[ V_n(s') \right] = (1-p_m) \cdot V_n((H'_m + 3), \nu)\).
According to the Bellman principle, the value of the current state is equal to the immediate cost plus the expected future value:
\begin{equation}
  \begin{aligned}\label{63}
  V_n((H'_m + 2), \nu) = &\underbrace{(H'_m + 2 + \nu_m)}_{\text{Immediate cost}} \\&+ \underbrace{(1-p_m) \cdot V_n((H'_m + 3), \nu)}_{\text{Expected future value when not completed.}}
  \end{aligned}
\end{equation}
Deriving recursively, we have 
\begin{equation}
  \begin{aligned}\label{Hm+1}
  V_n(H'_m + 1, \nu) = &(H'_m + 1 + \nu_m) \\&+ (1-p_m) V_n(H'_m + 2, \nu)
  \end{aligned}
\end{equation}


Assume that at the critical threshold \(H'_m\), the expected costs of selecting server $m$ and the suboptimal server $m-1$ are equal, that is: \(\mu_n(m, s, \nu) = \mu_n(m-1, s, \nu)\). The expected cost of selecting server $m$ is \(\mu_n(m, s, \nu) = (A + \nu_m) + (1-p_m)V_n(A+1, \nu)\). he expected cost of selecting the suboptimal server $m-1$ is \(\mu_n(m-1, s, \nu) = (A + \nu_{m-1}) + (1-p_{m-1})V_n(A+1, \nu)\). At the critical state \(A = H'_m\), the two are equal:
\begin{equation}
    \begin{aligned}
        &(H'_m + \nu_m) + (1-p_m)V_n(H'_m+1, \nu) \\=& (H'_m + \nu_{m-1}) + (1-p_{m-1})V_n(H'_m+1, \nu)
    \end{aligned}
\end{equation}
After organizing, we get: 
\begin{equation}
    \begin{aligned}\label{69}
        &\nu_m - \nu_{m-1} = (p_m - p_{m-1})V_n(H'_m+1, \nu)
    \end{aligned}
\end{equation}

Then using \eqref{69}, we have
\begin{equation}
    \begin{aligned}\label{long}
        &(2-p_m)\nu_m - (2-p_{m-1})\nu_{m-1} 
        \\= &(2-p_m)\left[\nu_{m-1} + (p_m - p_{m-1}) V_n(H'_m + 1, \nu)\right] \\&- (2-p_{m-1})\nu_{m-1}
        \\= &\nu_{m-1}\left[(2-p_m) - (2-p_{m-1})\right] \\&+ (2-p_m)(p_m - p_{m-1}) V_n(H'_m + 1, \nu)
        \\=& \nu_{m-1}(p_{m-1} - p_m) + (2-p_m)(p_m - p_{m-1}) V_n(H'_m + 1, \nu)
        \\=& (p_{m-1} - p_m)\left[\nu_{m-1} - (2-p_m) V_n(H'_m + 1, \nu)\right]
    \end{aligned}
\end{equation}
Solving  \(\nu_m\) from \eqref{Hm+1}, we have 
\begin{equation}
    \begin{aligned}\label{nuHm1}
        &\nu_m = V_n(H'_m + 1, \nu) - (H'_m + 1) \\&- (1-p_m) V_n(H'_m + 2, \nu)
    \end{aligned}
\end{equation}
Substituting  \eqref{69} and \eqref{nuHm1} into \eqref{long}, we obtain
\begin{equation}
    \begin{aligned}\label{last}
        &(2-p_m)\nu_m - (2-p_{m-1})\nu_{m-1}\\ =& (p_{m-1} - p_m)\left[2 V_n(H'_m + 1, \nu) - V_n(H'_m + 2, \nu)\right]
    \end{aligned}
\end{equation}
The expected cost of two consecutive uncompleted is \((1-p_m)^2 V_n(H'_m + 2, \nu)\). According to the Bellman equation, the cost difference on the left side of \eqref{last} needs to be at least equal to this expected cost to ensure the optimality of choosing $m$: 
\begin{equation}
    \begin{aligned}
        (2-p_m)\nu_m - (2-p_{m-1})\nu_{m-1} \\\geq (1-p_m)^2 V_n(H'_m + 2, \nu)
    \end{aligned}
\end{equation}
Hence, we show \eqref{61}. Similarly, \eqref{62} can also be proven. Here we omit the process. These inequalities above establish bounds on the cost-to-go function \(V_n\) for a user in state \((H'_m + 2, D)\) under non-preemptive scheduling. The left-hand side of the first inequality represents the expected future cost over two consecutive time slots when the task remains uncompleted with probability \((1 - p_m)^2\), while the right-hand side balances the current server costs for $m$ and \(m-1\). Increasing the server cost to \(\nu'_m = \nu_m + \Delta\) (second inequality) shifts the right-hand side upward, reflecting the higher cost of continuing with server $m$.
Subtracting the first inequality from the second, we isolate the cost difference \(V_n((H'_m + 2), \nu') - V_n((H'_m + 2), \nu)\). That is
\begin{equation}
\begin{aligned}\label{Kth}
    &\ \ \ \ \ \ V_n((H'_m+2),\bm{\nu}')-V_n((H'_m+2),\bm{\nu})\\&\ge \frac{(2-p_m)\Delta}{(1-p_{m-1})^2-(1-p_m)^2}\\
    & \ge \frac{2\Delta}{p_m^2-p_{m-1}^2}\\
    &\ge\frac{\Delta}{p_m^2}.
\end{aligned}
\end{equation}
\noindent It contradicts the upper bound in Lemma \ref{upperbound1}, invalidating the initial assumption and confirming \(H_n(m-1, m, d) \leq H'_n(m-1, m, d)\). 

Additionally, it is noted that inequalities \eqref{Kth} reveal the cost-to-go difference is inversely proportional to \(p_m^2\), motivating the need for a formal definition to characterize such bounded relationships. Therefore, we introduce the definition of \textit{K-th-Order Bounded}:
\begin{definition}[K-th-Order Bounded]
A cost-to-go function \(V_n(s, \nu)\) is said to be K-th-order bounded if there exists a positive integer $k$ s.t. for any server cost increment \(\Delta \geq 0\),
    \begin{equation}
        V_n(s + K, \nu') - V_n(s + K, \nu) \geq \frac{\Delta}{p_m^2} \quad \forall m, n
    \end{equation}
    where $s$ denotes the current state, and the bound depends on the server's task completion probability \(p_m\).
\end{definition}
\noindent  Here we assume that the value function of each subproblem $n$ given $m$ is k-th-order bounded by $\frac{\Delta}{p_m^2}$. This definition formalizes the relationship observed in the inequalities, ensuring the cost-to-go function's response to cost changes is systematically bounded. This tighter bound, in contrast to the preemptive case, arises from the non-switching nature of the optimal policy in non-preemptive scheduling.

Then we show the sufficiency of $H_n(m,m+1,d)\ge H_n'(m,m+1,d),\ \forall m\ge M$ by analyzing the threshold relationship under monotonically increasing server costs. Assume  $H_n(m-1,m,d)\le H_n'(m-1,m,d)$. At state $s=(H_n'(m-1,m,d)-1,d)$, the optimal policy $\pi_n$ prefers server $m-1$ over server $m$.
According to Definition 4, we have $\mu_{n,m+1}(s',\nu)\leq\mu_{n,m}(s',\nu)$, i.e.\cite{zou2021minimizing}, 
\begin{equation}
\begin{aligned}\label{65}
        (p_{m+1}-p_{m})&[H_n'+ V_n((H_n' +1,d),\bm{\nu})]
        \\&\ge \nu_{m+1}-\nu_{m}.
\end{aligned}
\end{equation}
where \(p_{m+1} > p_m\) denotes the higher task completion probability of server \(m+1\). On the other hand, when server cost increases to \(\nu'_m = \nu_m + \Delta\), the policy \(\pi'_n\) now prefers server $m$ over \(m+1\) at state \(s = (H'_n(m, m+1, d) - 1, d)\). This adjustment stems from the impact of cost changes on decision-making. Despite the server $m+1$ having a higher probability of task completion, its relative cost advantage changes after the increase in the cost of $m$, making the selection of server $m$ a better solution near a specific state threshold.
We have:
\begin{equation}
\begin{aligned}\label{66}
        &(p_{m+1}-p_{m})[H_n' + V_n((H_n'+1,d),\bm{\nu}')]
     \\ &\le \nu_{m+1}'-\nu_{m}'
\end{aligned}
\end{equation}
Noting that $\nu'_{m+1}=\nu_{m+1}$ and $\nu'_m=\nu_m+\Delta$, subtracting \eqref{65} from \eqref{66} and rearranging terms, the cost-to-go difference satisfies:
\begin{equation}
\begin{aligned}
     &V_n((H_n'+1,d),\bm{\nu}')
    -V_n((H_n'+1,d),\bm{\nu})\\&
     \le -\frac{\Delta}{p_{m+1}-p_m},
    \end{aligned}
\end{equation}
when $p_{m+1}-p_m\le p_{m+1}^2$, contradicting to the lower bound derived in Lemma 6. So far, we have shown that \(H_{m} \le H_{m}'\), \(\forall m \le M\) and \(H_{m+1} \ge H_{m+1}'\), \(\forall m\ge M\). Thus,  \( \mathcal{P}_{nm}^l(\bm{\nu}) \subseteq \mathcal{P}_{nm}^l(\bm{\nu}') \), and statement (i) must hold. 

For Statement (ii), according to the definition of $\mu_m(s,\bm\nu)$ (Eq. \eqref{bellman}) and definition of cost function (Eq. \eqref{costfunction}), it is clear that when \(\bm\nu \rightarrow\infty\),  \(\mu_{m}(s, \bm\nu)>\mu_{0}(s, \bm\nu)\) holds for any finite $s$. It means the state $s$ belongs to the passive set $\mathcal{P}_{nm}^l(\bm{\nu}')$, i.e., $\lim_{\nu_m\to+\infty}\mathcal{P}_{nm}^l(\bm{\nu}')=\mathcal{S}_l$. Hence, we conclude that the sub-problem (3) is intra-indexable, and the theorem is proved.



\vspace{25pt}
\section{}\label{ap3} 
{\centering\section*{Proof of Proposition \ref{indexfunc}}}
First, we review the Proposition 2 mentioned before: \textit{Given $s_n(t)=(\Delta_n(t),D_n(t))$, the index function satisfies $I_{nm}(s_n(t),\bm{\nu})=\nu_{m-1}+\Delta_n(t)-\gamma_n^*.$ The index for server $m$ can be derived by solving $I_{nm}(s_n(t),\bm{\nu})=\nu_m$.} Here, \(\gamma_n^*\) denotes the optimal average cost per stage, and \(\nu_{m-1}\) is the cost of selecting the suboptimal server \(m-1\).

We then show this proposition. In Appendix \ref{ap1}, we have proven that the problem \eqref{originrelax} satisfies the MLTT property. For simplicity, denote $H^*_{m}$ as the minimal age to offload tasks to server $m$ for a user. Assume a minimum computation time of 1 time slot for all tasks and consider a recurrent state \(s = (1)\) (age 1, no ongoing task) with a base cost-to-go \(V(1, \nu) = 0\). Consider three cases of Layer 1:

1) For \(A < H_1^*\), i.e., there is no offloading occurs because no server is optimal, the cost-to-go increases linearly with age:
\begin{equation}
    \begin{aligned}
        V(A,\bm{\nu})=A+V(A+1,\bm{\nu})-\gamma_n^*.
    \end{aligned}
\end{equation}
\begin{equation}\label{end}
    V(1,\bm{\nu})=0.
\end{equation}

2) For \(H_{m-1}^* \leq A \leq H_m^*\), the cost-to-go function $V(A,\bm{\nu})$  is a weighted sum of immediate costs and future states, accounting for the probability of task completion at each step. The recursive relation for Layer 1 states is:
\begin{equation}
  \begin{aligned}
     V&(A,\bm{\nu})=\sum\limits_{i=1}^{H^*_{m}-H^*_{m-1}}(1-p_{m-1})^{i-1}\\
     &\cdot (H^*_{m-1}+i-1+p_{m-1}V(i,\bm{\nu})+\nu_{m-1}-\gamma_n^*)\\
     &+(1-p_{m-1})^{H^*_{m}(A)-1}V((H^*_{m},A),\bm{\nu}),
\end{aligned}  
\end{equation}

3) For $H^*_M\le A$, where \(H_M^*\) is the threshold for server $M$, we assume the optimal server for state $s=(A+1,A)$ is still $M$. The cost-to-go simplifies due to the deterministic server assignment:
\begin{equation}
\begin{aligned}
    V(A,\bm{\nu})=A&-\gamma_n^*+\\&\nu_{M}+(1-p_{M})V((A+1,A),\bm{\nu})
\end{aligned}
\end{equation}
Assuming \(V((A + 1, A), \nu)\)  stabilizes, we solve recursively for the steady-state cost. 
\begin{equation}
\begin{aligned}\label{initM}
    V(A,\bm{\nu})=&A-\gamma_n^*+\nu_{M}+(1-p_{M})V((A+1,A),\bm{\nu})
    \\=&(A - \gamma_n^{*} + \nu_{M}) \\&+ (1 - p_{M}) [ (A+1 - \gamma_n^{*} + \nu_{M}) \\&+ (1 - p_{M}) V((A+2, A+1), \nu_{M}) ]
        \\=&\dots
        \\=&\sum_{k=0}^{\infty} (1 - p_{M})^k \left( A + k - \gamma_n^{*} + \nu_{M} \right)
        \\=&\frac{A-\gamma_n^*+\nu_{m-1}}{p_{M}}+\frac{1-p_{M}}{p^2_{M}}.
        \\=&\frac{A-\gamma_n^*+\nu_{M}}{p_M}+\frac{1-p_M}{p^2_M}.
\end{aligned}
\end{equation}
 Eq. \eqref{initM} reflects the geometric distribution of task completion times, where the expected cost accumulates until the task finishes with probability \(p_M\).

This recursion continues for Layer 2 states \((H^*_m, A)\), where the cost-to-go further decomposes into server-specific contributions:
\begin{equation}
    \begin{aligned}
        V&((H^*_{m},A),\bm{\nu})=\sum\limits_{i=1}^{H^*_{m+1}-H^*_{m}}(1-p_{m})^{i-1}\\
     &\cdot(H^*_{m}+i-1+p_{m}V(i,\bm{\nu})+\nu_{m}-\gamma_n^*)\\
     +&(1-p_{m})^{H^*_{m+1}-H^*_{m}(A)-1}V((H^*_{m+1},A),\bm{\nu})    \end{aligned}
\end{equation}
\begin{equation}
    V((H^*_{M},A),\bm{\nu})=\frac{H^*_{M}-\gamma_n^*+\nu_{M}}{p_M}+\frac{1-p_M}{p^2_M}.
\end{equation}
This base case anchors the recursive solution, ensuring consistency across all age states. 
By imposing equilibrium conditions on the cost-to-go function across consecutive servers \(m-1\) and $m$, we derive the critical threshold relationship:
\begin{equation}
\begin{aligned}
&\nu_{m-1} + V\bigl((A,D),\bm{\nu}\bigr) \\&\leq \nu_m + V\bigl((A,D-1),\bm{\nu}\bigr) \\&
\leq \nu_m + V\bigl((A+1,D),\bm{\nu}\bigr)\\&\leq \nu_{m-1} + V\bigl((A+1,D+1),\bm{\nu}\bigr).
\end{aligned}
\end{equation}
Subtracting the middle term $\nu_m+V((A,D-1),\bm{\nu})-\gamma_n^*$  from all parts, we bound the optimal average cost:
\begin{equation}
\label{optimal}
    \nu_{m-1}-\nu_{m}+A\le\gamma_n^*\le \nu_{m-1}-\nu_{m}+A+1.
\end{equation}
Setting the lower bound equality $\gamma_n^*=\nu_{m-1}-\nu_{m}+A$ yields the index function for server $m$:
\begin{equation}
    I_{nm}(s_n(t),\bm{\nu})=\nu_{m-1}+\Delta_n(t)-\gamma_n^*,
\end{equation}
which balances user age and server cost to determine optimal offloading decisions, as derived in \cite{tripathi_whittle_2019}.

\section{}\label{ap4}
{\centering\section*{Proof of Proposition \ref{fixedpoint}}}
Recall the Proposition 3: \textit{The fixed-point solution to the problem (\ref{fluidx}) is equivalent to the solution (the fluid fixed-point) to the problem (\ref{fluidy}).} 
The proof hinges on the precise division property \cite{zou2021minimizing}, ensuring that the fluid limit model's fixed-point solutions coincide. The asymptotic optimum will hold for the original problem under the nested index policy if it satisfies \textit{precise division} property.

We will show the definition of precise division and use it to prove the applicability of the asymptotic optimal solution to the original problem. In Appendix \ref{ap2}, we have established the sub-problem \eqref{originrelax} satisfies the \textit{intra-indexability}. The \textit{precise division} property is defined as follows:
\begin{definition}[Precise Division]
    Given state $s$ and the server cost $\bm{\nu}$, assume the sub-problem is intra-indexable. The preference for server $m$ at layer $l$, quantified by the nested index \(I_{nm}(s, \nu, l)\), satisfies \textit{precise division} if:

\textit{(i) \(I_{nm}(s, \nu) = \nu_m\),}
then \(\mu_{nm}(s, \nu) \leq \mu_{nm'}(s, \nu), \forall m' \neq m.\)

\textit{(ii) \(I_{nm}(s, \nu) > \nu_m\),}
then  \(\mu_{nm}(s, \nu) < \mu_{nm'}(s, \nu), \forall m' \neq m.\)

\textit{(iii) \(I_{nm}(s, \nu) < \nu_m\),} 
then \(\exists m' \neq m \text{ s.t. } \mu_{nm}(s, \nu) > \mu_{nm'}(s, \nu).\)

\end{definition}

\noindent   The precise division property established the connection between the index value and the optimal policy. 

Then, we show that the sub-problem \eqref{originrelax} satisfies the \textit{precise division} property with the following proposition:
\begin{proposition}
Given state $\mathcal{S}$ and server cost $\bm{\nu}$, the sub-problem \eqref{originrelax} satisfies the precise division property defined in Definition 7.
\end{proposition}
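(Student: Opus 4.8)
The plan is to identify the nested index $I_{nm}(s,\bm{\nu})$ with the (clamped) critical server cost at which state $s$ enters the passive set $\mathcal{P}_{nm}^{l}(\bm{\nu})$, and then read off the three cases of precise division directly from the monotone growth of that passive set already established in Theorem~\ref{intraindex}. First I would fix a user $n$, a state $s\in\mathcal{S}_l$, and all cost components $\bm{\nu}_{-m}$, and regard $\mu_{nm}(s,\bm{\nu})$ and $\mu_{nm'}(s,\bm{\nu})$ for $m'\neq m$ as functions of the single scalar $\nu_m$. By the Lipschitz bounds on the cost-to-go function in Lemma~\ref{upperbound} (and Lemma~\ref{lemma3} in the non-preemptive case), each $\mu_{nm'}(s,\bm{\nu})$ is continuous in $\nu_m$, while the term $\nu_m$ entering $C_n(s,m)$ drives $\mu_{nm}$ upward; the resulting monotone separation between $\mu_{nm}$ and the competing costs is precisely the mechanism that produced intra-indexability, so I would invoke Theorem~\ref{intraindex} rather than re-establish it.

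Next I would translate the passive-set monotonicity into a single-crossing statement. By Theorem~\ref{intraindex}, as $\nu_m$ increases from $0$ to $+\infty$ the cardinality $|\mathcal{P}_{nm}^{l}(\bm{\nu})|$ grows monotonically, so membership $s\in\mathcal{P}_{nm}^{l}(\bm{\nu})$ is a monotone event in $\nu_m$. Hence there exists a unique threshold $\nu_m^{\star}(s)$ such that $s\notin\mathcal{P}_{nm}^{l}$ for $\nu_m<\nu_m^{\star}(s)$ and $\min_{m'\neq m}\mu_{nm'}(s,\bm{\nu})<\mu_{nm}(s,\bm{\nu})$ for $\nu_m>\nu_m^{\star}(s)$. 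Comparing this threshold with Eq.~\eqref{nestedindex} shows $I_{nm}(s,\bm{\nu})=\max\{0,\nu_m^{\star}(s)\}$, i.e. the index is exactly this crossing point clamped at $0$.

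With this identification, the three conditions follow by locating the actual $\nu_m$ relative to $I_{nm}$. If $I_{nm}>\nu_m$, then $\nu_m<\nu_m^{\star}(s)$, so $s$ lies strictly outside the passive set and $\mu_{nm}(s,\bm{\nu})<\mu_{nm'}(s,\bm{\nu})$ for every $m'\neq m$, giving case (ii). If $I_{nm}<\nu_m$, then $\nu_m>\nu_m^{\star}(s)\ge 0$, so $s$ is in the strict passive set and some $m'$ satisfies $\mu_{nm'}(s,\bm{\nu})<\mu_{nm}(s,\bm{\nu})$, giving case (iii). If $I_{nm}=\nu_m$, the point sits at the boundary $\nu_m=\nu_m^{\star}(s)$, where continuity of the $\mu$'s yields the weak inequality $\mu_{nm}(s,\bm{\nu})\le\mu_{nm'}(s,\bm{\nu})$ for all $m'$, giving case (i).

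I expect the main obstacle to be the boundary case (i): I must show the infimum in Eq.~\eqref{nestedindex} is attained and that at $\nu_m=\nu_m^{\star}(s)$ the crossing produces the weak inequality (a tie) rather than a strict one. This needs the continuity of $\mu_{nm}$ in $\nu_m$ from the Lipschitz estimates of Lemma~\ref{upperbound}/Lemma~\ref{lemma3} combined with the monotone transition from Theorem~\ref{intraindex}. I would also separately dispatch the clamp $I_{nm}=0$: when $\nu_m^{\star}(s)\le 0$, server $m$ is already weakly sub-optimal at zero cost, so for any admissible $\nu_m\ge 0$ the inequality required in case (i) or (iii) still holds. This closes the argument uniformly for preemptive and non-preemptive scheduling, since their only structural difference — the layer-2 action set — has already been absorbed into the transition kernel $q^{ss'}_{nm}$ and the thresholds $H_n(m,m',D)$ underlying Theorem~\ref{intraindex}.
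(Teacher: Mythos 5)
Your core move --- identifying the nested index with the passive-set membership threshold, $I_{nm}(s,\bm{\nu})=\max\{0,\nu_m^{\star}(s)\}$, and then reading the three cases off a single-crossing picture --- presupposes exactly what has to be proven. The passive set \eqref{eq:passiveset} is defined by the \emph{weak} inequality $\min_{m'\neq m}\mu_{nm'}\le\mu_{nm}$, while the index \eqref{nestedindex} is the infimum over costs at which the \emph{strict} inequality $\min_{m'}\mu_{nm'}<\mu_{nm}$ holds. Writing $A$ for the set of values of $\nu_m$ where the weak inequality holds and $B\subseteq A$ for the strict one, you always have $\nu_m^{\star}=\inf A\le\inf B=I_{nm}$ (up to the clamp at $0$), and nothing in Theorem \ref{intraindex} forces $\inf A=\inf B$: intra-indexability (monotone growth of the passive set, which is moreover stated in terms of cardinalities and only yields set inclusion through its proof) is perfectly compatible with a whole interval of costs on which $\mu_{nm}$ exactly ties the best competitor. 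On such a tie interval one has $I_{nm}(s,\bm{\nu})>\nu_m$ yet only the weak comparison, so your case (ii) fails; this is precisely why the paper remarks that precise division is \emph{more stringent} than intra-indexability, the transition of the optimal action having to occur exactly where the index coincides with the cost. Your step ``$I_{nm}>\nu_m \Rightarrow \nu_m<\nu_m^{\star}(s) \Rightarrow s$ strictly outside the passive set'' silently uses $I_{nm}\le\nu_m^{\star}$, i.e.\ the no-tie identification, which is never established.

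What the paper does instead, and what your proposal is missing, is a quantitative perturbation argument that manufactures the strictness. It first proves a new uniform bound (Lemma \ref{seven}: $V_n(s,\bm{\nu}')-V_n(s,\bm{\nu})>-\Delta/p_{m+1}$ with no threshold restriction, unlike Lemma \ref{upperbound}). For case (ii) it sets $\Delta=\tfrac{1}{2}\bigl(I_{nm}(s,\bm{\nu}_{-m})-\nu_m\bigr)$, invokes weak optimality of $m$ at the intermediate cost $\nu_m+\Delta<I_{nm}$, and then shows that lowering the cost back to $\nu_m$ improves $\mu_{nm}$ relative to $\mu_{nm'}$ by a strictly positive margin (of order $\Delta\, p_{m'}/p_m$ when $p_{m'}<p_m$), using the cost-to-go bounds; crucially, the case $p_{m'}>p_m$ cannot be handled this way at all and requires the MLTT threshold structure from Proposition \ref{mltt} to rule out $\mu_{nm'}\le\mu_{nm}$. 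Your proposal never engages with this case split on completion probabilities, and ``continuity from the Lipschitz estimates'' cannot substitute for it: continuity gives weak inequalities at a crossing point (your case (i), which is sound and close in spirit to the paper's $\epsilon$-argument), but it can neither exclude tie intervals nor upgrade weak comparisons to strict ones. Case (iii) in your write-up is essentially the paper's contradiction argument and is fine; the gap is concentrated in the identification $I_{nm}=\max\{0,\nu_m^{\star}\}$ and in case (ii).
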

The precise division property is more stringent than intra-indexability, due to the transition of optimal action happens when the index value coincides with the server cost. To prove Proposition 4, that is, the sub-problem (3) satisfies the above properties, we first utilize the lower bound constraint of $V_n(s,\bm{\nu}')-V_n(s,\bm{\nu})$ inthe following lemma:
\begin{lemma}\label{seven}
    The difference between two cost-to-go function can be lower-bounded by
    \begin{equation}
         V_n(s,\bm{\nu}')-V_n(s,\bm{\nu})>-\frac{\Delta}{p_{m+1}},\forall 1\le m\le M.
    \end{equation}
\end{lemma}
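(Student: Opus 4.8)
The plan is to transplant the stochastic shortest path (SSP) argument that produced the lower bound in Lemma~\ref{upperbound1}, adapting only the step-count estimate to reflect the single-pass dynamics forced by non-preemption. First I would recall the SSP decomposition of the differential cost-to-go (following \cite{zou2021minimizing}): fix a recurrent reference state $\zeta\in\mathcal{S}_1$ with $V_n(\zeta,\bm{\nu})=0$, and write $V_n(s,\bm{\nu})=\text{Cost}^{\pi_n^*}_{s\zeta}(\bm{\nu})-\gamma_n^*\cdot N_{s\zeta}^{\pi_n^*}$, where $\text{Cost}^{\pi}_{s\zeta}$ and $N_{s\zeta}^{\pi}$ denote the expected accumulated cost and the expected number of steps from $s$ to $\zeta$ under policy $\pi$. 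For any fixed, possibly suboptimal, policy $\pi_n$ the minimizing property gives $V_n(s,\bm{\nu})\le\text{Cost}^{\pi_n}_{s\zeta}(\bm{\nu})-\gamma_n\cdot N_{s\zeta}^{\pi_n}$.

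Second, to extract a lower bound I would evaluate the policy $\pi_n^{*\prime}$ that is optimal under the augmented cost $\bm{\nu}'=[\nu_1,\dots,\nu_m+\Delta,\dots,\nu_M]$ in both cost regimes. Since $\pi_n^{*\prime}$ is optimal for $\bm{\nu}'$ but merely feasible for $\bm{\nu}$, and since $\bm{\nu}'\ge\bm{\nu}$ componentwise forces $\text{Cost}^{\pi_n^{*\prime}}_{s\zeta}(\bm{\nu}')\ge\text{Cost}^{\pi_n^{*\prime}}_{s\zeta}(\bm{\nu})$, substituting the two SSP expressions yields
\begin{equation}
V_n(s,\bm{\nu}')-V_n(s,\bm{\nu})\ge(\gamma_n^*-\gamma_n^{*\prime})\cdot N_{s\zeta}^{\pi_n^{*\prime}}.
\end{equation}
The per-stage cost is then controlled by noting that inflating one server cost by $\Delta$ raises the optimal average cost by at most $\Delta$, so $\gamma_n^{*\prime}\le\gamma_n^*+\Delta$ and $\gamma_n^*-\gamma_n^{*\prime}\ge-\Delta$.

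Third, and this is the crux where non-preemption sharpens the constant, I would bound $N_{s\zeta}^{\pi_n^{*\prime}}$. Under the non-switching constraint \eqref{probability_group}, once the task is offloaded it executes on a single server until completion, which occurs each slot with that server's probability; by the MLTT threshold structure of Proposition~\ref{mltt} the binding comparison at the relevant boundary is against server $m+1$, so the return time to Layer~1 is the geometric expectation $1/p_{m+1}$ rather than the $1/p_{m+1}^2$ of the preemptive case, where repeated re-selections of the same server compound the step count. Substituting $N_{s\zeta}^{\pi_n^{*\prime}}\le 1/p_{m+1}$ gives $V_n(s,\bm{\nu}')-V_n(s,\bm{\nu})\ge-\Delta/p_{m+1}$; strictness follows exactly as in Lemma~\ref{upperbound1}, since $p_{m+1}>0$ makes the raw-cost gap $\text{Cost}^{\pi_n^{*\prime}}_{s\zeta}(\bm{\nu}')-\text{Cost}^{\pi_n^{*\prime}}_{s\zeta}(\bm{\nu})$ strictly positive whenever server $m$ is selected with positive probability along the path to $\zeta$.

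The main obstacle I anticipate is rigorously certifying the step-count estimate $N_{s\zeta}^{\pi_n^{*\prime}}\le 1/p_{m+1}$: the preemptive proof re-counted multiple selections of server $m$ through a recursion, and that recursion is unavailable here, so the argument must rest directly on the single-pass completion law of \eqref{probability_group} together with the threshold monotonicity of Proposition~\ref{mltt} to show that server $m+1$ governs the worst-case return to Layer~1. Converting the chain of non-strict SSP inequalities into the strict bound stated in the lemma, and confirming uniformity over all $1\le m\le M$ with the convention $p_{M+1}=1$ handling the boundary server as in Lemma~\ref{upperbound}, are the remaining technical checks.
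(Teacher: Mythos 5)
Your skeleton is the one the paper intends: its entire proof of Lemma~\ref{seven} is the remark that the argument is ``similar to that of Lemma 2 \cite{zou2021minimizing}'', and the template it has in mind is the SSP machinery written out for Lemma~\ref{upperbound1} in Appendix~C, which is exactly what you transplant. Your first three steps are sound (your pairing of the step count with the $\bm{\nu}'$-optimal policy is in fact the correct form of the inequality that the paper itself states loosely).

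The gap is the step-count estimate $N_{s\zeta}^{\pi_n^{*\prime}}\le 1/p_{m+1}$, and it is not a deferred technicality: it fails at exactly the states the lemma must cover. Unlike the lower bound of Lemma~\ref{upperbound}/Lemma~\ref{upperbound1}, Lemma~\ref{seven} carries no guard $A\ge H_n(m,m+1,D)$, so your bound has to hold at \emph{every} state. But under non-preemption the server working on a Layer-2 state $(A,D)$ is frozen at the offload epoch as a function of $D$, and by the MLTT structure of Proposition~\ref{mltt} the servers chosen at small generation ages are precisely the slow ones; hence states whose in-service server $k$ satisfies $p_k<p_{m+1}$ are legitimate points of the state space, and from them the expected return time to Layer~1 is of order $1/p_k>1/p_{m+1}$ (the paper's own assertion that under non-preemption it is always optimal to keep computing until completion rules out shortening this by dropping the task). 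Likewise, a Layer-1 state far below its offload threshold must first wait out the threshold gap, which can exceed $1/p_{m+1}$ by an arbitrary amount. Your appeal to MLTT monotonicity cannot close this, because MLTT orders decisions as a function of the current age at decision epochs; it never upgrades a task that is already in service.

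What saves the bound --- and what a correct proof must use --- is precisely the two quantities you discarded. At the large-$N$ states above, either the path-cost increment $\text{Cost}^{\pi_n^{*\prime}}_{s\zeta}(\bm{\nu}')-\text{Cost}^{\pi_n^{*\prime}}_{s\zeta}(\bm{\nu})=\Delta\cdot\mathbb{E}[\text{number of server-}m\text{ uses on the path}]$ is itself large (when the frozen or about-to-be-chosen server is $m$ itself), or the crude estimate $\gamma_n^{*\prime}-\gamma_n^*\le\Delta$ is far from tight and must be sharpened to $\Delta$ times the stationary fraction of slots actually spent on server $m$; it is the cancellation between these refined terms and the large $N$ that produces a state-uniform bound. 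Reducing the first term to ``$\ge 0$'' and the second to ``$\ge-\Delta$'', and then compensating with a uniform $1/p_{m+1}$ step count, as you propose, is exactly the combination that cannot work without reinstating a threshold restriction of the kind the paper's preemptive analogue imposes.
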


\noindent This bound ensures that cost variations with server costs are systematically constrained, facilitating monotonicity analysis of the nested index. The proof of Lemma 7 is similar to that of Lemma 2 \cite{zou2021minimizing}.

Now, we proceed to show the three statements in Definition 7. First, to show (i), it suffices to show that for any \(\epsilon > 0\), \(\mu_{nm}(s,\bm{\nu}) \leq \mu_{nm'}(s,\bm{\nu}) + \epsilon\), \(\forall m' \neq m, m' \geq 0\). For any \(\Delta > 0\), at the cost \(\bm{\nu}' \triangleq [\nu_{1}, \ldots, \nu_{m} + \Delta, \ldots, \nu_{M}]\), we must have \(\mu_{m}(s, \bm{\nu}') \leq \mu_{m'}(s, \bm{\nu}')\), \(\forall m' \neq m, m' \geq 0\). Now consider the situation at cost \(\bm{\nu}\), we have
\begin{equation}
    \begin{aligned}
        &\mu_{nm}(s, \nu) - \mu_{nm'}(s, \nu) 
        \\&= \nu_m - \nu_{m'} + (p_{m'} - p_m)[D+V((A+1,D), \nu)]
        \\&= \left(\nu_m + \Delta - \nu_{m'} + (p_{m'} - p_m)[D + V(A + 1, \nu')]\right) + 
        \\&-\Delta + (p_{m'} - p_m)\left[V(A + 1, \nu) - V(A + 1, \nu')\right]
        \\&\leq -\Delta + (p_{m'} - p_m)\left[V(A + 1, \nu) - V(A + 1, \nu')\right],
    \end{aligned}
\end{equation}
where the last inequality follows from \(\mu_{m}(s, \bm{\nu}') \leq \mu_{m'}(s, \bm{\nu}')\). Since \(V((A + 1,D), \bm{\nu}) - V((A + 1,D), \bm{\nu}')\) is uniformly bounded by Lemma \ref{seven}, we can choose \(\Delta(\epsilon) > 0\) such that \(\Delta(\epsilon) + (p_{m'} - p_{m})[V((A + 1,D), \bm{\nu}) - V((A + 1,D), \bm{\nu}_{\Delta(\epsilon)})] \leq \epsilon\). Thus, for any \(\epsilon > 0\), \(\mu_{m}(s, \bm{\nu}) \leq \mu_{m'}(s, \bm{\nu}) + \epsilon\), \(\forall m' \neq m, m' \geq 0\) must hold. The result of (i) then follows.

To show (ii), consider the cost \(\bm{\nu}'_m \triangleq [I_{nm}(s, \bm{\nu}_{-m}) + \Delta, \bm{\nu}_{-m}]\), where $\bm{\nu}_{-m}$ represents the other part of $\nu$ except $m$-th and the \(m\)-th element of \(\bm\nu\) is replaced by \(I_{nm}(s, \bm\nu_{-m}) + \Delta\). By the nested index definition (Definition 5), for any \(\Delta\), we have 
\begin{equation}
    \mu_{nm}(s, \bm{\nu}'_m) \leq \mu_{nm'}(s, \bm{\nu}'_m), \forall m' \neq m
\end{equation}
Let \(\Delta = \cfrac{I_{nm}(s, \bm\nu_{-m}) - \nu_m}{2} > 0\), so \(\bm{\nu}'_m = \nu_m + \Delta\). At cost \(\nu\), we have:
\begin{equation}
\begin{aligned}
      &\mu_{nm}(s_n, \bm\nu) - \mu_{nm'}(s_n, \bm\nu) \\&= \nu_m - \nu_{m'} + (p_{m'} - p_m)[D + V((A + 1,D), \bm\nu)]
\\&= \!\!\left(\!\nu_m\! \!+\!\! \Delta \!-\! \nu_{m'}\!\! + \!\!(p_{m'}\! -\! p_m)\![D\! +\! V((A\! +\! 1,\!D) \!+\! 1, \bm\nu'_{m})]\right)
\\&- \!\!\Delta\!\! +\!\! (p_{m'} \!-\! p_m)\left[V((A\! +\! 1,\!D)\! +\! 1, \bm\nu)\!\! -\!\! V((A\! + \!1,D), \bm\nu'_{m})\right]
\\&\!\!\leq -\!\left(\Delta\!\! +\!\! (p_{m'}\! - \!p_m)\!\!\left[V((A\! \!+ \!\!1,\!D), \bm\nu'_{m})\!\! -\!\! V((A + 1,D), \bm\nu)\right]\right),
\end{aligned}
\end{equation}
where the inequality follows from \(\mu_{nm}(s, \nu'_{m}) \leq \mu_{nm'}(s, \nu'_{m})\). If \(p_{m'} < p_m\), then 
\begin{equation}
\begin{aligned}
       & \mu_{nm}(s_n, \nu) - \mu_{nm'}(s_n, \nu) \\&< -\left(\Delta + (p_{m'} - p_m)\frac{\Delta}{p_m}\right) \leq 0
\end{aligned}
\end{equation}
satisfying (ii). For \(m < M\) and \(p_{m'} > p_m\), using the MLTT property, \(\mu_{nm}(s_n, \nu) < \mu_{n(m+1)}(s_n, \nu)\) implies \(A < H_{m+1}\). Assuming \(\mu_{nm'}(s_n, \nu) \leq \mu_{nm}(s_n, \nu)\) for \(m' > m+1\) leads to a contradiction at the threshold \(H_{m+1}\), as the optimal policy requires \(\mu_{n(m+1)}(H_{m+1}, \nu) \leq \mu_{nm'}(H_{m+1}, \nu)\). Thus, \(\mu_{nm}(s_n, \nu) < \mu_{nm'}(s_n, \nu)\) for all \(m' \neq m\), proving (ii).

Finally, if \(I_{nm}(s_n, \bm{\nu}_m) < \nu_m\), by the nested index definition, there exists some server \(m' \neq m\) such that \(\mu_{nm}(s_n, \nu) > \mu_{nm'}(s_n, \nu)\). Suppose for contradiction that \(\mu_{nm}(s_n, \nu) \leq \mu_{nm'}(s_n, \nu)\) for all \(m' \neq m\); then \(\nu_m\) could be increased further, contradicting \(I_{nm}(s_n, \nu) < \nu_m\). Thus, (iii) holds.

The sub-problem satisfies the precise division property (Proposition 4), completing the proof.

\end{appendices}
\clearpage

\end{document}